\newtheorem{thm}{Theorem}
\newtheorem{lem}{Lemma}
\newtheorem{rem}{Remark}
\newtheorem{example}{Example}
\def\p0{{\pmb 0}}
\long\def\comment#1{}
\newfont{\bbb}{msbm10 scaled 700}
\newfont{\bbc}{msbm10 scaled 1100}
\newcommand{\CC}{\mbox{\bbc C}}
\newcommand{\PP}{\mbox{\bbc P}}
\newcommand{\RR}{\mbox{\bbc R}}
\newcommand{\EE}{\mbox{\bbc E}}
\newcommand{\bvx}{{\pmb b}}
\newcommand{\dv}{{\pmb d}}
\newcommand{\ev}{{\pmb e}}
\newcommand{\fv}{{\pmb f}}
\newcommand{\hv}{{\pmb h}}
\newcommand{\kv}{{\pmb k}}
\newcommand{\nv}{{\pmb n}}
\newcommand{\qv}{{\pmb q}}
\newcommand{\uv}{{\pmb u}}
\newcommand{\vv}{{\pmb v}}
\newcommand{\wv}{{\pmb w}}
\newcommand{\xv}{{\pmb x}}
\newcommand{\yv}{{\pmb y}}
\newcommand{\zv}{{\pmb z}}
\newcommand{\zerov}{{\pmb 0}}
\newcommand{\Am}{{\pmb A}}
\newcommand{\Bm}{{\pmb B}}
\newcommand{\Fm}{{\pmb F}}
\newcommand{\Hm}{{\pmb H}}
\newcommand{\Id}{{\pmb I}}
\newcommand{\Jm}{{\pmb J}}
\newcommand{\Km}{{\pmb K}}
\newcommand{\Pm}{{\pmb P}}
\newcommand{\Qm}{{\pmb Q}}
\newcommand{\Rm}{{\pmb R}}
\newcommand{\Tm}{{\pmb T}}
\newcommand{\Um}{{\pmb U}}
\newcommand{\Vm}{{\pmb V}}
\newcommand{\Wm}{{\pmb W}}
\newcommand{\Xm}{{\pmb X}}
\newcommand{\Ym}{{\pmb Y}}
\newcommand{\Cc}{{\cal C}}
\newcommand{\Fc}{{\cal F}}
\newcommand{\Gc}{{\cal G}}
\newcommand{\Nc}{{\cal N}}
\newcommand{\Rc}{{\cal R}}
\newcommand{\Sc}{{\cal S}}
\newcommand{\gammav}{\hbox{\boldmath$\gamma$}}
\newcommand{\muv}{\hbox{\boldmath$\mu$}}
\newcommand{\Lambdam}{\hbox{\boldmath$\Lambda$}}
\newcommand{\Sigmam}{\hbox{\boldmath$\Sigma$}}
\renewcommand{\det}{{\hbox{det}}}
\newcommand{\trace}{{\hbox{tr}}}
\newcommand{\eqdef}{\stackrel{\Delta}{=}}
\newcommand{\herm}{{\sf H}}
\begin{document}

%\title{{\small Final Report of the Collaborative ETRI Project$^*$: \\ Network Optimization at Nested Time-Scales} \\[24pt]
\title{Joint Spatial Division and Multiplexing:
Opportunistic Beamforming and User Grouping}

%\author{\authorblockN{Prepared by: Ansuman Adhikary and Giuseppe Caire}
%\authorblockA{Ming-Hsieh Department of Electrical Engineering, University of Southern California, CA}
%\thanks{$^*$ This work was supported by the IT R\&D program of MKE/KEIT in Korea [Development of beyond 4G technologies
%for smart mobile services].}}
\author{\authorblockN{Ansuman Adhikary and Giuseppe Caire}
\authorblockA{Ming-Hsieh Department of Electrical Engineering, University of Southern California, CA}
\thanks{This work was supported by the IT R\&D program of MKE/KEIT in Korea [Development of beyond 4G technologies
for smart mobile services].}}

\maketitle

\begin{abstract}
Joint Spatial Division and Multiplexing (JSDM) is a recently proposed scheme to enable massive MIMO like gains and simplified system operations for Frequency Division Duplexing (FDD) systems. The key idea lies in partitioning the users into groups with approximately similar covariances, and use a two stage downlink beamforming: a pre-beamformer that depends on the channel covariances and minimizes interference across groups and a multiuser MIMO precoder for the effective channel after pre-beamforming, to counteract interference within a group.
We first focus on the regime of a fixed number of antennas and large number of users, and show that opportunistic beamforming with user selection yields significant gain, and thus, channel correlation may yield a capacity improvement over the uncorrelated ``isotropic'' channel result of \cite{sharif2005capacity}. We prove that in the presence of different correlations among groups, a block diagonalization approach for the design of pre-beamformers achieves the optimal sum-rate scaling, albeit with a constant gap from the upper bound. Next, we consider the regime of large number of antennas and users, where user selection does not provide significant gain. In the presence of a large number of antennas, the design of prebeamformers reduces to choosing the columns of a Discrete Fourier Transform matrix based on the angles of arrival and angular spreads of the user channel covariance, when the base station (BS) is equipped with a uniform linear antenna array. Motivated by this result, we propose a simplified user grouping algorithm to cluster users into groups when the number of antennas becomes very large, in a realistic setting where users are randomly distributed and have different angles of arrival and angular spreads depending on the propagation environment. Our subsequent analysis leads to a probabilistic scheduling algorithm, where users within each group are preselected at random based on probabilities derived from the large system analysis, depending on the fairness criterion. This is advantageous since only the selected users are required to feedback their channel state information (CSIT).
\end{abstract}

{\bf Keywords:} JSDM, Opportunistic Beamforming, User Grouping, Probabilistic Scheduling.

\newpage

\section{Introduction}

Multiuser MIMO is one of the core technologies that has been adopted for the next generation of wireless communication systems. A considerable amount of effort has been dedicated to the study of such systems, where a transmitter (BS) equipped with multiple antennas serves a number of single antenna user terminals (UTs). Looking at the downlink scenario, we know that from an operational perspective, the high throughput promised by multiuser MIMO depends on the availability of accurate channel state information at the transmitter (CSIT). In Frequency division duplexing (FDD) systems, where uplink and downlink transmissions occur in separate bands, CSI is made available at the transmitter through downlink training and uplink feedback. In principle, the base station (BS) sends a sequence of orthogonal pilots which enables the users to estimate their own channels. These estimated channels are then fed back to the BS in a separate band. Analysis shows that for an appropriately designed feedback scheme, the channel estimation error due to feedback is negligible compared to the channel estimation error due to downlink training. The amount of training resources scales with the number of antennas at the BS.

In contrast, for a time division duplexing (TDD) system, channel reciprocity can be used to get estimates of the downlink channels through uplink training, thereby eliminating the need for feedback. Since the training dimension is now determined by the number of user terminals, the number of antennas can be made as large as desired. This approach, dubbed ``massive MIMO'' has garnered considerable interest because of simplified system operations in terms of scheduling and signal processing while maintaining the high performance gains promised by multiuser MIMO technology. Recently, Joint Spatial Division and Multiplexing (JSDM) was proposed to enable massive MIMO like gains and simplified operations for FDD systems, which represent the majority of currently deployed cellular networks. Making use of the fact that the channel between a UT and BS is correlated, the key idea lies in partitioning users into groups with similar covariance eigenspaces, and split the downlink beamforming into two stages: a pre-beamforming matrix that depends on the channel covariances, and a MU-MIMO precoding matrix for the effective channel formed by pre-beamforming. The pre-beamforming matrix is chosen in order to minimize the interference across different groups, and the MU-MIMO precoding matrix takes care of the interference within a group. The training dimensions required are for the design of the MU-MIMO precoder, which is significantly reduced after the pre-beamforming stage.

In our previous work  \cite{adhikary2012joint}, we have shown that under some conditions of the channel covariance eigenvectors, JSDM is optimal and achieves the capacity region. In the case when the BS is equipped with a uniform linear antenna array and the number of antennas is very large, the design of the pre-beamforming reduces to choosing certain columns of the Discrete Fourier Transform matrix based on the angles of arrival and angular spread of the user channel covariance. As long as the different user groups have non-overlapping supports of their angles of arrival and angular spreads, JSDM achieves optimality. This scheme is extremely beneficial in the sense that it requires only a coarse knowledge of the angular support, instead of the whole channel covariance matrix. The work in \cite{adhikary2012joint} assumed that users in a particular group had the same channel covariance structure. Furthermore, no user selection was considered, i.e., a certain number of users in each group was selected and scheduled for transmission
at random, such that their channel vectors preserve mutual statistical independence and some known technique based on large random matrix theory
for the performance analysis of the various pre-beamforming and precoding schemes can be applied in this context and provide an easy alternative to extensive system
simulation.  In this work, we consider two different set of results and regimes of operation.

\begin{itemize}
\item
First, we focus on a non-asymptotic regime in the number of base station antennas,
while we let the number of users in each group become large. In this context, we examine the performance of the well-known {\bf opportunistic beamforming}
scheme that serves on each downlink beam the user achieving the maximum SINR on that beam. It is well-known
that {\bf opportunistic beamforming with user selection does not provide any gain in the regime of large number of antennas} \cite{caire2009selection}.
In contrast, in the regime of fixed number of antennas and large number of users per group, we show that opportunistic beamforming yields significant gain,
and in fact channel correlation may yield a capacity improvement over the classical uncorrelated ``isotropic'' channel result of Sharif and Hassibi, because
of the fact that users come in groups, and in each group we can achieve both beamforming gain and multiuser spatial multiplexing.
More specifically,  in this regime of large number of users and fixed number of antennas,  the problem of sum capacity scaling with user selection has been widely
investigated for uncorrelated channels under random beamforming \cite{sharif2005capacity} and zero forcing beamforming \cite{yoo2006optimality},
and also for correlated channels under random beamforming \cite{al2009much}.  Our work differs from these earlier works in the sense that we consider
different correlations for different groups, which is an extension of \cite{al2009much} for multiple correlated channels.
We show that following a {\bf block diagonalization approach for the design of pre-beamformers achieves the optimal sum rate scaling},
albeit with a constant gap from the upper bound.

\item Then,  we focus again on the more interesting regime of a large number of antennas and users, where user selection becomes useless.
Differently from \cite{adhikary2012joint}, we consider the more realistic setting where the users are randomly distributed in the cellular region
and, as such, have different angles of arrival and angular spreads depending on the propagation environment.
We look at the {\bf problem of clustering users into groups} based on different user grouping algorithms, and evaluate their performance.
We show through finite dimensional simulations that choosing the pre-beamforming matrices as the columns of a discrete Fourier transform matrix
gives good results and, based on this observation, we propose a simplified user grouping algorithm when the number of antennas becomes very large (massive MIMO).
Motivated by the work of \cite{huh2012network}, \cite{huh2011achieving}, we focus on the regime where the number of users is proportional to the number
of antennas, and propose a probabilistic scheduling algorithm, where users within each group are pre selected at random based on probabilities
derived from the large system analysis and only the selected users are required
to feedback their CSIT.  Notice that in comparison with the regime of random beamforming and user selection considered before, in this regime
the CSIT feedback is limited because {\bf only the pre-selected (scheduled) users need to feed back their effective channels
(after pre-beamforming)}. In contrast, in the previous regime, CSIT feedback is limited by the fact that it is very simple (only
CQI and the beam index, as in \cite{sharif2005capacity}).

\end{itemize}

This report is organized as follows.
In Section \ref{sec:JSDM}, we briefly describe the channel model and the basic principles of the JSDM scheme.
We derive the sum capacity scaling result in Section \ref{sec:scaling}, by providing an upper and lower bounds
to the sum capacity in the presence of a large number of users.
The user grouping problem is addressed in Section \ref{sec:user-grouping}, and two algorithms are presented with their
performance evaluation through simulations. In Section \ref{sec:large-system-limit}, we focus on the large system limit,
when the number of users is proportional to the number of antennas and the number of antennas grows to infinity.
We derive a simplified user grouping algorithm that only requires the knowledge of the angles of arrival of the users,
and then propose our probabilistic scheduling algorithm along with some results.

{\em Notation} : We use boldface capital letters $\Xm$ for matrices, boldface small letters for vectors $\xv$, and small letters
$x$ for scalars. $\Xm^T$ and $\Xm^\herm$ denote the transpose and the Hermitian transpose of X, $||\xv||$ denotes the vector 2-norm of $\xv$, $\trace(\Xm)$ and $\det(\Xm)$ denote the trace and the determinant of the square matrix $\Xm$. The $n \times n$ identity matrix is denoted by $\Id_n$, and $||\Xm||^2_F = \trace(\Xm^\herm \Xm)$ indicates the
squared Frobenius norm of a matrix $\Xm$. We also use ${\rm Span}(\Xm)$ to denote the linear subspace generated by columns
of $\Xm$ and ${\rm Span}^\perp(\Xm)$ for the orthogonal complement of ${\rm Span}(\Xm)$. $\xv \sim \Cc \Nc(\muv,\Sigmam)$ indicates that $\xv$ is a complex circularly-symmetric Gaussian vector with mean $\muv$ and covariance matrix $\Sigmam$.

%%%%%%%%%%%%%%%%%%%%%%%%%%%%%%%%%
\section{Review of JSDM} \label{sec:JSDM}

In this section, we briefly describe the JSDM scheme proposed in \cite{adhikary2012joint}. The scheme relies on the fact that users are partitioned into different groups such that users within a group have approximately the same channel covariance structure and the different groups are have almost orthogonal covariances. The structure of the channel covariances is then exploited to form a reduced dimensional effective channel that enables the scheme to achieve large throughput gains with reduced training and feedback.

Consider the downlink of a cellular system formed by a BS having $M$ antennas and serving $K$ single antenna user terminals. We assume that the $M \times K$ dimensional channel matrix $\Hm$ is fixed for a certain block length of $T$ channel uses, which is known as the coherence time of the channel, and changes from block to block according to a ergodic stationary spatially white joint Gaussian process. A single channel use of such a system is denoted as
\begin{equation} \label{eqn:channel-use}
\yv = \Hm^\herm \xv + \zv = \begin{pmatrix} \hv_1^\herm \\ \hv_2^\herm \\ \vdots \\ \hv_K^\herm
\end{pmatrix} \Bm \Pm \dv + \zv
\end{equation}
where $\yv$ denotes the collection of received symbols for all the $K$ users, $\hv_k$ is the $M \times 1$ dimensional channel realization between the BS and UT $k$, $\xv = \Vm \dv$ is the transmitted signal vector satisfying a power constraint $P$ such that $\EE[||\xv||^2] \leq P$, $\Vm = \Bm \Pm$ is the downlink beamforming matrix consisting of two parts: $\Bm$ is the pre beamforming matrix of dimensions $M \times b$ and $\Pm$ is the multiuser MIMO precoding matrix of dimensions $b \times S$, which is a function of the reduced dimensional effective channel $\underline{\textsf{\Hm}} = \Bm^\herm \Hm$. $\dv$ is the $S \times 1$ vector of transmitted data streams. In general, we have $S \leq \min\{b,K\}$, and this represents the number of simultaneously served users per channel use. $\zv \sim \Cc \Nc (\zerov,\Id_K)$ is the corresponding additive white Gaussian noise vector whose entries and i.i.d. with zero mean and variance 1. $\hv_k$ is a correlated random vector with mean zero and covariance $\Rm_k$.

\subsection{Channel Model}

\begin{figure}[ht]
\centerline{\includegraphics[width=8cm]{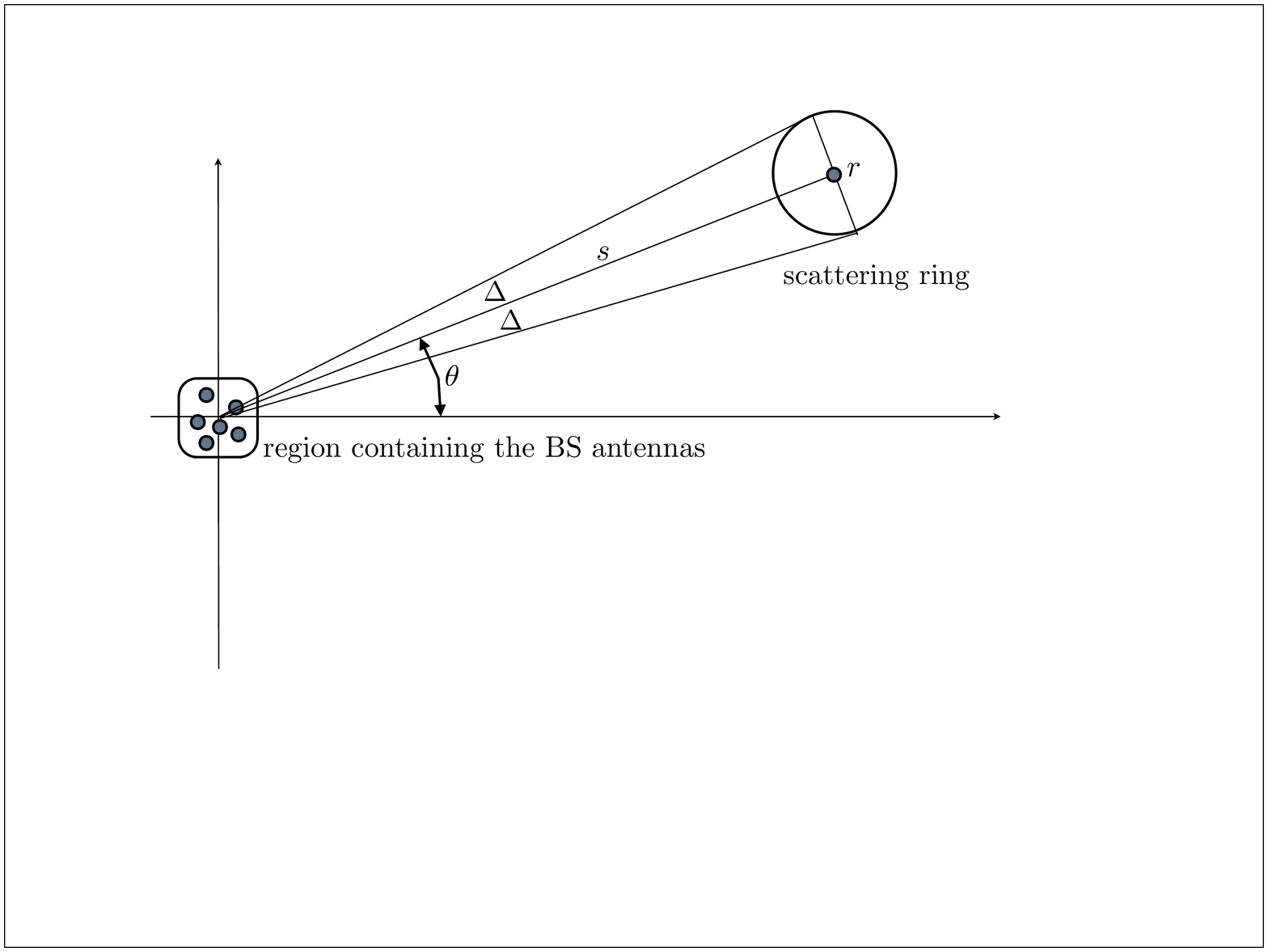}} \caption{A
UT at AoA $\theta$ with a scattering ring of radius ${\sf r}$
generating a two-sided AS $\Delta$ with respect to the BS at the
origin.} \label{fig:one-ring-model}
\end{figure}

In line with \cite{adhikary2012joint}, for analytical simplicity, we model the channel covariance $\Rm_k$ for a UT $k$ according to the one-ring model of
Figure \ref{fig:one-ring-model}, where a UT $k$ located at an azimuth angle $\theta$ and distance $\textsf{s}$ is surrounded by a ring of scatterers of radius $\textsf{r}$, giving the angular spread $\Delta = {\rm tan}^{-1} \left ( \frac{\textsf{r}}{\textsf{s}} \right )$. This model makes use of the fact that the BS antennas are located at the top of a tall building such that there is no significant scattering around the BS antennas. Assuming a uniform distribution of the received power of the planar waves impinging on the BS antennas, the entries of the channel covariance $\Rm_k$ are given by
\begin{equation} \label{eqn:channel-coeff}
[\Rm_k]_{m,p} = \frac{1}{2 \Delta} \int_{\theta - \Delta}^{\theta + \Delta} e^{j \kv^T(\alpha) (\uv_m - \uv_p)} d \alpha
\end{equation}
where $[\Rm_k]_{m,p}$ represents the channel correlation coefficient between the $m^{\rm th}$ and $p^{\rm th}$ transmit antennas of the BS, $\kv(\alpha) = -\frac{2 \pi}{\lambda} (\cos(\alpha),\sin(\alpha))^T$ is the wave vector for a planar wave with angle of arrival $\alpha$, $\lambda$ is the carrier wavelength and $\uv_m,\uv_p \in \RR^2$ are vectors indicating the position of the BS antennas in the two dimensional coordinate system.

Performing the Karhunen Loeve decomposition on $\Rm_k$, we have $\Rm_k = \Um_k \Lambdam_k \Um_k^\herm$, where $\Um_k$ is the $M \times r_k$ matrix containing the eigenvectors of the $r_k$ non-zero eigenvalues of $\Rm_k$ and $\Lambdam_k$ is the $r_k \times r_k$ matrix of non-zero eigenvalues. With this representation, we write the channel $h_k$ of user $k$ as
\begin{equation} \label{eqn:user-channel}
\hv_k = \Um_k \Lambdam_k^{\frac{1}{2}} \wv_k
\end{equation}
where the entries of $\wv_k$ are independent and identically distributed with mean zero and variance 1.

\subsection{The Basic Principle}

In JSDM, the $K$ UTs are partitioned into $G$ groups based on the similarity of their channel covariances. Denoting by $K_g$ and $S_g$ the number of UTs and the number of independent data streams in group $g$, we have $\sum_{g=1}^G K_g = K$ and $\sum_{g=1}^G S_g = S$, with the index $g_k$ used to denote the $k^{\rm th}$ user in group $g$. The channel vector of each user $g_k$ is given as $\hv_{g_k} = \Um_{g_k} \Lambdam_{g_k} \wv_{g_k}$ according to (\ref{eqn:user-channel}). Denoting by $\Hm_g = [\hv_{g_1} \ldots \hv_{K_g}]$ the concatenated channel matrix of UTs in group $g$, we have the overall $M \times K$ system channel matrix as $\Hm = [\Hm_1 \ldots \Hm_G]$. JSDM is a two stage precoding scheme, with the JSDM precoding matrix $\Vm = \Bm \Pm$ consisting of two parts: the pre-beamforming matrix $\Bm$ of dimensions $M \times b$, and the MU-MIMO precoder $\Pm$ of dimensions $b \times S$. The design of $\Bm$ is independent of the instantaneous channel realizations and is a function of the channel covariances of users in group $g$, i.e., it depends on the sets $\{\Um_{g_k},\Lambdam_{g_k}\}$. Alternately, $\Bm$ can be fixed apriori, for example, like the schemes of random beamforming \cite{sharif2005capacity}, Grassmannian beamforming \cite{love2003grassmannian}, etc. The multiuser MIMO precoding matrix $\Pm$ is dependent on the instantaneous ``effective'' channel $\underline{\textsf{\Hm}} = \Bm^\herm \Hm$. Denoting the pre-beamforming matrix of group $g$ as $\Bm_g$ of dimensions $M \times b_g$ such that $\sum_{g=1}^G b_g = b$, we have $\Bm = [\Bm_1, \ldots, \Bm_G]$. As a result, the received signal in (\ref{eqn:channel-use}) can be written in the following manner:
\begin{equation} \label{eqn:channel-use-jsdm}
\yv = \underline{\textsf{\Hm}}^\herm \Pm \dv + \zv
\end{equation}
where
\begin{equation}
\underline{\textsf{\Hm}}^\herm = \begin{pmatrix} \Hm_1^\herm \Bm_1 & \Hm_1^\herm \Bm_2 & \ldots & \Hm_1^\herm \Bm_G\\
\Hm_2^\herm \Bm_1 & \Hm_2^\herm \Bm_2 & \ldots & \Hm_2^\herm \Bm_G\\
\vdots & \vdots & \ddots & \vdots\\
\Hm_G^\herm \Bm_1 & \Hm_G^\herm \Bm_2 & \ldots & \Hm_G^\herm \Bm_G
\end{pmatrix}
\end{equation}
and $\Hm_g^\herm \Bm_{g'}$ denotes the effective channel matrix between the users of group $g$ and the pre-beamformers of group $g'$.

We focus of Per Group Processing (PGP), proposed in \cite{adhikary2012joint}, where the MU-MIMO precoding matrix $\Pm$ takes on the block diagonal form, i.e., $\Pm = {\rm diag}(\Pm_1,\ldots,\Pm_G)$ with $\Pm_g$ of dimensions $b_g \times S_g$. In other words, the MU-MIMO precoding matrix is designed independently across groups, meaning $\Pm_g$ is a function of the effective channels $\underline{\textsf{\Hm}}_g = \Hm_g^\herm \Bm_g$ only.
This approach is attractive since it requires only the knowledge of the effective channels
instead of the whole channel $\underline{\textsf{\Hm}}$. Focusing only on the received signal for users in group $g$, we have
\begin{equation}  \label{ziobanana-pgp}
\yv_g = \underline{\textsf{\Hm}}_g^\herm \Pm_g \dv_g + \left(\sum_{g'=1,g' \neq g}^G \Hm_g^\herm \Bm_{g'} \Pm_{g'} \dv_{g'} \right) + \zv_g
\end{equation}
where the bracketed term denotes the inter-group interference.

A suitable design goal for choosing $\Bm_g$ is to make the inter-group interference close to zero, meaning $\Hm_g^\herm \Bm_{g'} = \zerov \;\; \forall \; g' \neq g$. In \cite{adhikary2012joint}, assuming that users within a group have the same channel covariance and users across groups have different channel covariances, conditions for exact and approximate block diagonalization (BD) are obtained. For the purpose of approximate BD, in the event that exact BD is infeasible, the notion of approximate rank\footnote{Approximate rank is the number of dominant eigenvalues of the channel covariance.} is introduced, which is a design parameter that can be optimized. In the general system model considered here, where users in general have different channel covariances, the conditions for exact and approximate BD are given as follows:
\begin{itemize}
\item Exact BD : This is possible when ${\rm Span}(\Um_{g_1},\ldots,\Um_{g_{K_g}})$ has a non-empty intersection with ${\rm Span}^{\perp}(\Um_{g'_1},\ldots,\Um_{g'_{K_{g'}}} : g' \neq g)$. Since we are sending $S_g$ independent data streams to users in group $g$, this requires
    \begin{equation} \label{eqn:exact-BD}
    \dim\left({\rm Span}(\Um_{g_1},\ldots,\Um_{g_{K_g}}) \bigcap {\rm Span}^{\perp}(\Um_{g'_1},\ldots,\Um_{g'_{K_{g'}}} : g' \neq g) \right) \geq S_g
    \end{equation}
\item Approximate BD : Denoting by $\Um_{g_k}^*$ the set of eigenvectors corresponding to $r_k^*$ dominant eigenvalues of user $k$ in group $g$, we require ${\rm Span}(\Um^*_{g_1},\ldots,\Um^*_{g_{K_g}})$ to have non-empty intersection with ${\rm Span}^{\perp}(\Um^*_{g'_1},\ldots,\Um^*_{g'_{K_{g'}}} : g' \neq g)$. In order to be able to send $S_g$ independent data streams to users in group $g$, we need
    \begin{equation} \label{eqn:approx-BD}
    \dim\left({\rm Span}(\Um^*_{g_1},\ldots,\Um^*_{g_{K_g}}) \bigcap {\rm Span}^{\perp}(\Um^*_{g'_1},\ldots,\Um^*_{g'_{K_{g'}}} : g' \neq g) \right) \geq S_g
    \end{equation}
\end{itemize}
\begin{rem}
For finite $M$ and $K$, design methodologies for the pre-beamforming matrices to satisfy the conditions of exact and approximate BD are given in \cite{adhikary2012joint}. Furthermore, when $M$ is large, in the special case of uniform linear arrays, the channel covariance takes on a Toeplitz form. Owing to Szego's asymptotic theory \cite{grenander1958toeplitz}, \cite{adhikary2012joint}, the eigenvectors of the channel covariances can be well approximated by the columns of a Discrete Fourier Transform (DFT) matrix. In this special case, if users in different groups have disjoint angular support, their eigenvectors would be orthogonal (due to the property of the DFT matrix) and therefore, designing the pre-beamformers to attain exact BD is much simpler.
\end{rem}

\section{Sum Capacity Scaling for finite $M$ and large $K$} \label{sec:scaling}

In this section, we focus on the regime of finite $M$ and large $K$, and obtain
an asymptotic expression for the sum capacity when all the users within a group have the {\em same} channel covariance.
The case $G = 1$ is treated in \cite{al2009much}. Here, we consider the non-trivial extension to the case $G > 1$.
Denoting the covariance matrix of users in group $g$ as
$\Rm_g$ we have, by Karhunen Loeve decomposition
$$\Rm_g = \Um_g \Lambdam_g \Um_g^\herm,$$ where $\Um_g$ is the $M \times r_g$ matrix of
eigenvectors, $r_g$ is the rank of $\Rm_g$ and $\Lambdam_g$ is the diagonal matrix containing the eigenvalues of $\Rm_g$.
The channel of a user $k$ in group $g$ now takes the form $\hv_{g_k} = \Um_g \Lambdam_g^{1/2}
\wv_{g_k}$, $\wv_{g_k} \sim \Cc\Nc(\zerov,\Id_{r_g})$.
%
%, and the
%received signal at a user $k$ in group $g$ is given as
%\begin{equation}
%y_{g_k} = \hv_{g_k}^\herm \Bm_g \Qm_g \dv_g + \sum_{g' \neq g}
%\hv_{g_k}^\herm \Bm_{g'} \Qm_{g'} \dv_{g'} + z_k,
%\end{equation}

%Since the rank of covariance of each group is $r_g$, it is not
%possible to serve more than $r_g$ users in group $g$. When the number of groups is large, serving
%$r_g$ users per group may not be feasible if $\sum_g r_g > M$.
%Hence, we decide to serve a maximum $r_g^*$ users (thus, $b_g = r_g^*$) per group $g$,
%where $r_g^*$ is the \emph{approximate rank}. Selection of $r_g^*$
%is done in a manner such that the covariance matrix $\Rm_g$ has
%$r_g^*$ dominant eigenvalues, and $\sum_g r_g^* \leq M$.\footnote{Note that when $\sum_{g=1}^G r_g < M$, $r_g^* = r_g$.}
%When the number of groups is large, the ranks of the covariances may not satisfy $\sum_g r_g \leq M$. In this case, exact BD is not possible and approximate BD is required. %Therefore, we introduce the notion of {\em approximate rank} $r_g^* < r_g$, a design parameter denoting the number of dominant eigenvalues of the covariance $\Rm_g$. The %dimension $b_g$ of the pre-beamforming matrix is dependent on $r_g^*$ in the following manner
%\begin{equation}
%b_g = \min\{r_g, M - \sum_{g' \neq g} r_{g'}^*\}
%\end{equation}

For the sake of mathematical simplicity, we assume that all groups contain the same number of users $K_g = K' = K/G$, for all $g$. We have:

\begin{thm}  \label{theorem1}
The sum capacity of a MU-MIMO downlink system with $M$ antennas,
total transmit power constraint of $P$, and $K$ users divided into $G$ groups of equal size $K' = K/G$,
where users  have mutually statistically independent channel vectors with common covariance matrix $\Rm_g$ to all
 users of each group $g$, behaves, for $K' \rightarrow \infty$, as
\begin{equation}
R_{\rm sum} \; = \; \beta
\log \log(K') + \beta  \log \left ( \frac{P}{\beta} \right)  + O(1)
\end{equation}
where $\beta = \min\{ M, \sum_{g=1}^G r_g \}$ and where $O(1)$ denotes a constant, independent of $K'$.
\end{thm}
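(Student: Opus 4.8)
The plan is to sandwich $R_{\rm sum}$ between matching upper and lower bounds, each of the form $\beta\log\log(K')+\beta\log(P/\beta)+O(1)$, using the single-group ($G=1$) result of \cite{al2009much} as the per-group building block. For the converse I would first pass to the dual MIMO multiple-access channel by uplink--downlink duality, so that the sum capacity is upper bounded by $\EE\bigl[\max_{\{P_k\ge 0:\sum_k P_k\le P\}}\log\det\bigl(\Id_M+\sum_k P_k\hv_k\hv_k^\herm\bigr)\bigr]$. The key structural observation is that every channel vector $\hv_{g_k}=\Um_g\Lambdam_g^{1/2}\wv_{g_k}$ lies in ${\rm Span}(\Um_g)$, so the entire signal ensemble lives in a subspace of dimension $\dim\bigl(\sum_g{\rm Span}(\Um_g)\bigr)\le\min\{M,\sum_g r_g\}=\beta$. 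Restricting the log-determinant to this $\beta$-dimensional subspace and applying concavity of $\log\det$ (AM--GM on the eigenvalues) gives $\log\det(\Id_\beta+\Am)\le\beta\log(1+\trace(\Am)/\beta)$, with $\trace(\Am)\le P\max_k\|\hv_k\|^2$.

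It then remains to show $\EE\bigl[\log\bigl(1+(P/\beta)\max_k\|\hv_k\|^2\bigr)\bigr]=\log\log(K')+\log(P/\beta)+O(1)$. Here $\|\hv_{g_k}\|^2=\wv_{g_k}^\herm\Lambdam_g\wv_{g_k}$ is a weighted sum of $r_g$ independent exponentials, whose complementary CDF decays like $e^{-x/\lambda_{\max}}$; standard extreme-value estimates yield $\max_k\|\hv_k\|^2=\lambda_{\max}\log(K)+O(\log\log K)$, with concentration strong enough that the expectation of the logarithm inherits the same leading behaviour. Since $\log\log K=\log\log(GK')=\log\log K'+o(1)$, the group count $G$ and the eigenvalue constants $\lambda_{\max,g}$ are absorbed into the $O(1)$ term, establishing the upper bound.

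\textbf{Achievability (lower bound).} For the lower bound I would exhibit an explicit JSDM scheme, distinguishing the two regimes encoded in $\beta$. When $\sum_g r_g\le M$, the eigenspaces are generically in general position, so block diagonalization is feasible with $b_g=r_g$ dimensions per group (the condition (\ref{eqn:exact-BD})); inter-group interference is nulled and group $g$ is left with an $r_g$-dimensional effective channel on which I run opportunistic beamforming with user selection over its $K'$ users. When $\sum_g r_g>M$, exact nulling is infeasible, so I would instead apply ordinary random beamforming with $M$ beams over the pooled set of $K=GK'$ users, whose channels now fill the whole $M$-dimensional space. In either case the per-beam analysis reduces to the single-group problem of \cite{al2009much}: among the competing users the selected $\SINR$ on each of the $\beta$ beams concentrates at order $\log K'$, so under the equal split $P/\beta$ each beam contributes $\log\log(K')+\log(P/\beta)+O(1)$, and summing over the $\beta$ beams meets the converse.

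\textbf{Main obstacle.} I expect the difficulty to lie in the achievability rather than the converse. One must verify that $\beta$ beams can be activated simultaneously while each still sees a maximum $\SINR$ growing like $\log K'$, which requires controlling both the within-group inter-beam interference and, in the regime $\sum_g r_g>M$, the residual inter-group interference that exact block diagonalization can no longer remove. Passing rigorously from $\max_k\SINR$ to $\EE[\log(1+\max_k\SINR)]$ with only an $O(1)$ error, for the weighted-exponential law induced by $\Lambdam_g$ rather than the isotropic law of \cite{sharif2005capacity}, is the delicate quantitative step; however, this is exactly the estimate already established for a single correlated group in \cite{al2009much}, which I would invoke group by group.
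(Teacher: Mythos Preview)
Your converse is sound and is essentially the paper's argument with the two cases $\sum_g r_g \lessgtr M$ collapsed into one; the paper treats them separately, using a block Hadamard inequality plus per-group AM--GM when $\sum_g r_g < M$, and a direct trace bound on $\Id_M$ when $\sum_g r_g > M$. Your achievability for $\sum_g r_g \le M$ also agrees with the paper: exact BD decouples the groups and the single-group tail estimate of \cite{al2009much} applies verbatim in each.

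The genuine gap is your achievability when $\sum_g r_g > M$. Ordinary random beamforming with $M$ beams over the pooled users does \emph{not} yield $M\log\log K'$. Fix a group $g$ with $r_g < M$ and write the projected beams $\av_n = \Lambdam_g^{1/2}\Um_g^\herm\bvx_n \in \CC^{r_g}$. Since $M-1 \ge r_g$, the interfering vectors $\{\av_n : n\neq m\}$ generically span $\CC^{r_g}$, so $\av_m = \sum_{n\neq m} c_n \av_n$ for some coefficients $c_n$, and Cauchy--Schwarz gives $|\wv^\herm \av_m|^2 \le \bigl(\sum_{n\neq m}|c_n|^2\bigr)\sum_{n\neq m}|\wv^\herm\av_n|^2$, hence $\SINR_{g_k,m} \le \sum_{n\neq m}|c_n|^2$ \emph{uniformly in $k$}. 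Every group with $r_g<M$ therefore has bounded SINR on every beam, the multiuser diversity is lost, and you cannot simply invoke \cite{al2009much} group by group. (The paper's own Remark in Section~\ref{subsec:achievability} flags the same failure for the naive choice $\Bm_g = \Um_g$.) The paper's fix is \emph{approximate} BD: choose effective ranks $r_g^*$ with $\sum_g r_g^* = M$ and design each $\Bm_{g'}$ orthogonal to the $r_g^*$ dominant eigenvectors of every other group $g$. Then the inter-group interference seen by group $g$ has rank at most $r_g - r_g^*$, and together with the $r_g^*-1$ intra-group interfering beams the total interference matrix $\Am_{m_2}$ has rank at most $r_g-1$, leaving a one-dimensional null space along which the signal survives. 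This is precisely the content of the paper's Lemmas~1 and~2: the top eigenvalue $\mu_{m,1}(x)$ of $\Am_{m_1}-x\Am_{m_2}$ stays strictly positive for all $x\ge 0$, producing the exponential SINR tail and the $\log K'$ growth. You correctly flagged this regime as the main obstacle, but the resolution requires a structured pre-beamformer, not random beams.
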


Theorem \ref{theorem1} is proved by developing an upper and a lower bound. The upper bound analyzes directly the sum capacity of the underlying
vector broadcast channel, exploiting the sum capacity expression provided by the dual uplink channel \cite{vishwanath2003duality} (see Section \ref{converse-sec}).
Interestingly, in order to prove the lower bound we consider an explicit achievability strategy based on simple beamforming and user selection
in each group. This strategy generalizes the scheme of \cite{sharif2005capacity} (random beamforming) to the case where
the user are clustered in groups, each of which has a very strong directional component. As we shall see in Section \ref{subsec:achievability},
the achievability strategy consists of allocating  the user achieving the highest SINR on each beam of the pre-beamforming matrix, for each group.
Since the pre-beamforming matrices depend only on the channel second-order statistics, the feedback required from each user is just
the SINR achieved on each pre-beamfomring beam (or, equivalently, the max SINR and the index of the beam achieving this max SINR).
Hence, the achievability scheme has some practical interest since it is similar to the present ``opportunistic beamforming'' schemes
with Channel Quality Indicator (CQI) (see for example \cite{li2010mimo,ravindran2008multi}).

\subsection{Converse}  \label{converse-sec}

\setcounter{paragraph}{0}

\paragraph{Case $M > \sum_{g=1}^G r_g$}
Denoting the power allocated to a user $k$ in group $g$ as $P_{g_k}$,
letting $\Qm_g = {\rm diag}(P_{g_1},\ldots,P_{g_{(K')}})$ with trace $P_g = \sum_{k=1}^{K'} P_{g_k}$,
$\Hm_g = [\hv_{g_1} \ldots \hv_{g_{(K')}}]$ and owing to the uplink-downlink duality \cite{vishwanath2003duality},
we can write the sum capacity as
{\footnotesize{
\begin{eqnarray}
R_{\rm sum} &=&  \EE \left[ \max_{\sum_{g=1}^G \sum_{k=1}^{K'} P_{g_k} \leq P} \log \det \left( \Id_M +  \sum_{g=1}^G
\sum_{k=1}^{K'} \hv_{g_k} \hv_{g_k}^\herm P_{g_k} \right) \right]
\nonumber\\
&=& \EE \left[ \max_{\sum_{g=1}^G \sum_{k=1}^{K'} P_{g_k} \leq P} \log \det \left( \Id_M + \begin{pmatrix} \Hm_1 &
\ldots &
\Hm_G \end{pmatrix} \begin{pmatrix} \Qm_1 & \ldots & \zerov\\
\vdots & \vdots & \vdots\\
\zerov & \ldots & \Qm_G
\end{pmatrix} \begin{pmatrix} \Hm_1^\herm \\ \vdots \\
\Hm^\herm_G \end{pmatrix} \right) \right] \nonumber\\
&=& \EE \left[ \max_{\sum_{g=1}^G \sum_{k=1}^{K'} P_{g_k} \leq P} \log \det \left( \Id_{K} + \begin{pmatrix} \Hm_1^\herm \\ \vdots \\
\Hm^\herm_G \end{pmatrix} \begin{pmatrix} \Hm_1 &
\ldots & \Hm_G \end{pmatrix} \begin{pmatrix} \Qm_1 & \ldots & \zerov\\
\vdots & \vdots & \vdots\\
\zerov & \ldots & \Qm_G \end{pmatrix} \right) \right] \nonumber\\
&\substack{(a) \\ \leq}& \EE \left[\max_{\sum_{g=1}^G \sum_{k=1}^{K'} P_{g_k} \leq P} \log \det \left( \Id_{K} + \begin{pmatrix} \Hm_1^\herm \Hm_1 \Qm_1 & \ldots & \zerov\\
\vdots & \vdots & \vdots\\
\zerov & \ldots & \Hm_G^\herm \Hm_G \Qm_G \end{pmatrix} \right)
\right]\nonumber\\
&=& \EE \left[ \max_{\sum_{g=1}^G P_{g} \leq P} \left( \sum_{g=1}^G \log \det \left( \Id_{K'} +
\Hm_g^\herm \Hm_g \Qm_g \right) \right) \right]\nonumber\\
&=& \EE \left[ \max_{\sum_{g=1}^G P_{g} \leq P} \left( \sum_{g=1}^G \log \det \left( \Id_{M} + \Hm_g \Qm_g
\Hm_g^\herm \right) \right) \right] \nonumber\\
&=& \EE \left[ \max_{\sum_{g=1}^G P_{g} \leq P} \left( \sum_{g=1}^G \log \det \left( \Id_{M} + \Um_g
\Lambdam_g^{1/2} \Wm_g \Qm_g \Wm_g^\herm \Lambdam_g^{1/2}
\Um_g^\herm \right) \right) \right]
\nonumber\\
&=& \EE \left[ \max_{\sum_{g=1}^G P_{g} \leq P} \left( \sum_{g=1}^G \log \det \left( \Id_{r_g} +
\Lambdam_g^{1/2} \Wm_g \Qm_g \Wm_g^\herm \Lambdam_g^{1/2}
\Um_g^\herm \Um_g \right) \right) \right]
\nonumber\\
&=& \EE \left[ \max_{\sum_{g=1}^G P_{g} \leq P} \left( \sum_{g=1}^G \left[ \log \det (\Lambdam_g) \det \left(
\Lambdam_{g}^{-1} + \Wm_g \Qm_g \Wm_g^\herm \right) \right] \right) \right]
\nonumber\\
&\leq& \sum_{g=1}^G \log \det (\Lambdam_g) + \EE \left[ \max_{\sum_{g=1}^G P_{g} \leq P} \left( \sum_{g=1}^G r_g
\left[ \log \frac{{\rm tr}(\Lambdam_{g}^{-1} + \Wm_g \Qm_g
\Wm_g^\herm)}{r_g} \right] \right) \right] \nonumber\\
&\stackrel{(b)}{\leq}& \sum_{g=1}^G \log \det (\Lambdam_g) + \left[ \max_{\sum_{g=1}^G P_{g} \leq P} \left( \sum_{g=1}^G r_g
\log \left[ \frac{{\rm tr}(\Lambdam_{g}^{-1})}{r_g} + \EE \left [ \max_k \|\wv_{g_k}\|^2 \right ] \frac{P_g}{r_g} \right] \right) \right] \nonumber\\
&\stackrel{(c)}{=}& \sum_{g=1}^G \log \det (\Lambdam_g) + \max_{\sum_{g=1}^G P_{g} \leq P} \left[ \sum_{g=1}^G r_g \log \left[ \frac{{\rm tr}(\Lambdam_{g}^{-1})}{r_g} + \log(K') \frac{P_g}{r_g} + O(\log \log K') \right] \right]\nonumber\\
&=& \sum_{g=1}^G \log \det (\Lambdam_g) + \max_{\sum_{g=1}^G P_{g} \leq P} \left[ \sum_{g=1}^G r_g \log \left[ \log(K') \frac{P_g}{r_g} \left[\frac{{\rm tr}(\Lambdam_{g}^{-1})}{P_g \log K'} +  + O\left(\frac{\log \log K'}{\log K'}\right) \right] \right] \right]\nonumber\\
&=& \sum_{g=1}^G \log \det (\Lambdam_g) + \max_{\sum_{g=1}^G P_{g} \leq P} \left[ \sum_{g=1}^G r_g \log \left[ \log(K') \frac{P_g}{r_g} \right] + o(1) \right]\nonumber\\
\end{eqnarray}}}
where  $(a)$ is due to the Hadamard inequality for block matrices,
where $(b)$ follows from Jensen's inequality and $(c)$ follows from the fact that, for large $K'$,
\[ \EE[ \max_k \| \wv_{g_k} \|^2 ] = \log K'  + O(\log\log K') \]
(see Appendix \ref{subsec:extreme-val-theory}).  When $K \rightarrow \infty$, the upper bound can be further simplified as
\begin{equation}
R_{\rm sum} \leq \sum_{g=1}^G \log \det (\Lambdam_g) + \left (\sum_{g=1}^G r_g \right )
\log \log(K') + \max_{\sum_{g=1}^G P_{g} \leq P} \left[ \sum_{g=1}^G r_g \log \frac{P_g}{r_g} \right] + o(1)
\end{equation}
Optimizing the power allocation over groups, we obtain $P_g = \frac{r_g}{\sum_{g=1}^G r_g}P$, which gives
\begin{equation} \label{eqn:case1}
R_{\rm sum} \leq \sum_{g=1}^G \log \det (\Lambdam_g) + \left (\sum_{g=1}^G r_g \right )   \left [ \log \log(K') +
\log \frac{P}{\sum_{g=1}^G r_g} \right] + o(1)
\end{equation}

\paragraph{Case $M < \sum_{g=1}^G r_g$}
In this case, we write the sum capacity as
\begin{eqnarray}
R_{sum} &=&  \EE \left[ \max_{\sum_{g=1}^G \sum_{k=1}^{K'} P_{g_k} \leq P} \log \det \left( \Id_M +  \sum_{g=1}^G
\sum_{k=1}^{K'} \hv_{g_k} \hv_{g_k}^\herm P_{g_k} \right) \right]
\nonumber\\
&\leq& \EE \left[ \max_{\sum_{g=1}^G \sum_{k=1}^{K'} P_{g_k} \leq P} M \log \frac{\trace \left( \Id_M + \sum_{g=1}^G
\sum_{k=1}^{K'} \hv_{g_k} \hv_{g_k}^\herm P_{g_k} \right)}{M} \right] \nonumber\\
&=& \EE \left[ \max_{\sum_{g=1}^G \sum_{k=1}^{K'} P_{g_k} \leq P} M \log \left(1 + \sum_{g=1}^G
\sum_{k=1}^{K'} \frac{\trace \left( \hv_{g_k} \hv_{g_k}^\herm P_{g_k} \right)}{M} \right) \right] \nonumber\\
&=& \EE \left[ \max_{\sum_{g=1}^G \sum_{k=1}^{K'} P_{g_k} \leq P} M \log \left(1 + \sum_{g=1}^G
\sum_{k=1}^{K'} \frac{|| \hv_{g_k} ||^2 P_{g_k}}{M} \right) \right] \nonumber\\
&=& \EE \left[ \max_{\sum_{g=1}^G \sum_{k=1}^{K'} P_{g_k} \leq P} M \log \left(1 + \sum_{g=1}^G
\sum_{k=1}^{K'} \frac{\wv_{g_k}^\herm \Rm_g \wv_{g_k} P_{g_k}}{M} \right) \right] \nonumber\\
&\stackrel{(a)}{\leq}& \EE \left[ \max_{\sum_{g=1}^G \sum_{k=1}^{K'} P_{g_k} \leq P} M \log \left(1 + \sum_{g=1}^G
\sum_{k=1}^{K'} \frac{|| \wv_{g_k} ||^2 \lambda_{\max} P_{g_k}}{M} \right) \right] \nonumber\\
&\stackrel{(b)}{\leq}& \max_{\sum_{g=1}^G \sum_{k=1}^{K'} P_{g_k} \leq P} M \log \left(1 + \sum_{g=1}^G
\frac{\EE \left[ \max_k || \wv_{g_k} ||^2 \right] \lambda_{\max} \sum_{k=1}^{K'} P_{g_k}}{M} \right) \nonumber\\
&=& \max_{\sum_{g=1}^G \sum_{k=1}^{K'} P_{g_k} \leq P} M \log \left(1 + \sum_{g=1}^G
\frac{\lambda_{\max} P_g \log K' }{M} \right) + o(1) \nonumber\\
&=& M \log \left(1 + \frac{\lambda_{\max} P \log K' }{M} \right) + o(1) \nonumber\\
&=& M \log \lambda_{\max} + M \log \frac{P}{M} + M \log \log K' + o(1)
\end{eqnarray}
where (a) follows from the Rayleigh Ritz Theorem, for which $\wv_{g_k}^\herm \Rm_g \wv_{g_k} \leq \lambda_{\max,g} ||\wv_{g_k}||^2$,
where $\lambda_{\max,g}$ is the maximum eigenvalue of $\Rm_g$ and we let $\lambda_{\max} = \max_g \lambda_{\max,g}$. $(b)$ is due to Jensen's inequality.
Thus, we have established that
\begin{equation} \label{eqn:case2}
R_{sum} \leq M \log \lambda_{\max} + M \log \frac{P}{M} + M \log \log K'
\end{equation}

Combining (\ref{eqn:case1}) and (\ref{eqn:case2}), we can see that
\begin{equation}
R_{sum} \leq \beta \log \frac{P}{\beta} + \beta \log \log K' + O(1)
\end{equation}
where $\beta = \min \{M, \sum_{g=1}^G r_g\}$

\subsection{Achievability} \label{subsec:achievability}

We consider a specific JSDM strategy with PGP (see (\ref{ziobanana-pgp})) by letting the number of downlink data streams per group be given by
$b_g = S_g = r_g^*$ and the MU-MIMO precoding matrix in each group $g$ be the identity, i.e., $\Pm_g = \Id_{r_g^*} \;\;\; \forall \; g$.
In order to allocate the downlink data streams to the users, the scheme  selects $r_g^*$ out of $K'$ users in each group $g$ according to a max SINR
criterion to be specified later. Notice that since the achieved SINR for each user and pre-beamforming beam is a function of the
channel matrix realization,  this scheme serves $r^*_g$ out of $K'$ users  ``opportunistically'', depending on the channel matrix realization.
The pre-beamforming matrices $\Bm_g$ are designed according to the \emph{approximate Block Diagonalization} scheme,
where $r_g^*$ denotes the effective rank, as said before.
For any pair of groups $g, g'$, since the $m$-th column of $\Bm_{g'}$, denoted by $\bvx_{g'_m}$, is in the null space of the first $r_g^*$ eigenvectors
of $\Rm_g$ (dominant eigenvectors), we have that
\begin{equation}
\label{eqn:abd}
\Um_{g}^\herm \bvx_{g'_m} = \begin{pmatrix} \zerov_{r_g^* \times 1} \\
\xv_{g,g',m} \end{pmatrix}
\end{equation}
where $\xv_{g,g',m}$ is some not necessarily zero vector of dimension $r_g - r^*_g$.
%This essentially means that the pre-beamforming matrix for all groups $g' \neq g$ is orthogonal to the $r_{g}^*$ dominant eigenvectors\footnote{the eigenvectors corresponding t
%to the $r_{g}^*$ dominant eigenvalues of the covariance matrix $R_{g}$} of group $g$.
Notice that when exact block diagonalization is possible and we choose $r_g^* = r_g$,
then $\Um_{g}^\herm \bvx_{g'_m} = \zerov_{r_g \times 1}$.

For the sake of convenience, let us focus on users in group $g$. We
assume that all the users $g_k$ have perfect knowledge of their SINR, with respect to beamforming vectors  $\bvx_{g_m}$ for $m = 1, \ldots, r_g^*$, given by
\begin{equation} \label{eqn:SINR-beam}
{\rm SINR}_{g_k,m} = \frac{|\hv_{g_k}^\herm \bvx_{g_m}|^2}{\frac{1}{\rho}
+ \sum_{n \neq m} |\hv_{g_k}^\herm \bvx_{g_n}|^2 + \sum_{g' \neq g}
||\hv_{g_k}^\herm \Bm_{g'} ||^2},
\end{equation}
where we let $\rho = \frac{P}{\sum_{g=1}^G r_g^*}$, assuming that the total transmit power is distributed evenly over all downlink beams.
Notice that such SINR is easily and accurately measured by including downlink pilot symbols
in the downlink streams passing through the pre-beamforming matrix, as currently done in opportunistic beamforming schemes \cite{viswanath2002opportunistic}, \cite{holma2007hsdpa}. Each user feeds back the
SINRs on all beams, i.e., for all $m  = 1,\ldots,r_g^*$, and
the BS decides to serve the user with the maximum SINR on a beam
$m$.\footnote{Following \cite{sharif2005capacity}, it is well-known that if each user feeds back just
its maximum SINR and the index of the beam achieving such maximum the achievable sum throughput
in the limit of large $K$ remains the same. Hence, instead of $r_g^*$ real numbers, the CQI feedback can be reduced to one real number and
an integer beam index. We omit this case since it follows trivially from previous work and does not change the final result.}
With this type of user selection, the achievable sum rate of group $g$ is given by
\begin{equation} \label{eqn:achieve-upper}
R_g = \sum_{m = 1}^{r_g^*} \EE \left[ \log \left( 1 + \max_{1 \leq k
\leq K'} {\rm SINR}_{g_k,m}\right) \right].
\end{equation}
Notice that with our assumptions it is possible that some user achieves the maximum on more than one beam, in which case the BS selects
to send multiple streams to that user.
\begin{rem}
It is proven in \cite{adhikary2012joint} that JSDM with PGP is optimal when the eigenvectors of the different groups satisfy the tall unitary condition.
When this is true, i.e., choosing $\Bm_g = \Um_g$ makes the inter-group interference term equal to zero ($\sum_{g' \neq g}
||\hv_{g_k}^\herm \Bm_{g'} ||^2 = 0$) and the numerator and denominator of the SINR term independent.
The analysis thus reduces to the approach of \cite{sharif2005capacity}, which gives the sum capacity scaling of $\log \log K'$ when $K' \rightarrow \infty$. When the tall unitary condition is not satisfied, choosing $\Bm_g = \Um_g$ gives a residual inter-group interference term and the numerator and denominator of the SINR term are no longer independent. It can be shown that in this case this simple user selection strategy does not achieve the $\log\log K'$ scaling.
When exact BD is possible, the inter group interference is zero and the problem decouples into an independent opportunistic beamforming scheme for each group, which
can be analyzed by direct application of the analysis technique of \cite{al2009much} in each group.
However, if exact BD  is not possible (e.g., when $\sum_g r_g > M$), we show in the following that by using approximate BD the scaling
$\log \log K'$ can also be achieved. In order to prove this result we combine the technique of \cite{al2009much} with some properties of
approximate BD.
\end{rem}

%A lower bound to $R_g$ can be found by noting that if $\max_{1 \leq
%k \leq K'} {\rm SINR}_{k,m} \geq 1$ for all $m$, user $k$ cannot be
%the maximum of more than one beam (see proof in  \cite{sharif2005capacity}). Thus
%\begin{equation} \label{eqn:achieve-lower}
%R_g \geq \sum_{m = 1}^{r_g^*} \EE \left [ \left . \log \left( 1 + \max_{1 \leq k \leq
%K'} {\rm SINR}_{k,m}  \right )\; \right | \; \max_{1 \leq k \leq K'} {\rm SINR}_{k,m}
%\geq 1  \right ] \PP \left ( \max_{1 \leq k \leq K'} {\rm SINR}_{k,m}
%\geq 1 \right )
%\end{equation}

In order to find the scaling of the sum rate expressions (\ref{eqn:achieve-upper}) for large $K'$,
we consider the extremal statistics of ${\rm SINR}_{g_k,m}$, i.e., we study the distribution of the
random variable $\max_{1 \leq k \leq K'} {\rm SINR}_{g_k,m}$.
For this purpose, we find the distribution of a single term ${\rm SINR}_{g_k,m}$, whose CDF is given by
$F(x) = 1 - \PP\left ( {\rm SINR}_{g_k,m} > x \right )$. Define the quantity
\begin{eqnarray}
Z &=& |\wv_{g_k}^\herm \Lambdam_g^{1/2} \Um_g^\herm \bvx_{g_m}|^2 -
x \left[ \frac{1}{\rho} + \sum_{n \neq m} |\wv_{g_k}^\herm
\Lambdam_g^{1/2} \Um_g^\herm \bvx_{g_n}|^2 + \sum_{g' \neq g}
||\wv_{g_k}^\herm \Lambdam_g^{1/2} \Um_g^\herm \Bm_{g'} ||^2 \right]
\nonumber\\
&=& \wv_{g_k}^\herm \Am_{m_1} \wv_{g_k} - x \wv_{g_k}^\herm
\Am_{m_2} \wv_{g_k}^\herm - \frac{x}{\rho}
\end{eqnarray}
Following the analysis of \cite{sharif2005capacity} (see Appendix \ref{subsec:cdf}), we get
\begin{equation}
\label{eqn:pdf-sinr}
\PP \left ( {\rm SINR}_{k,m} > x \right ) =
\frac{1}{2\pi j} \int_{-\infty}^{\infty} \frac{e^{-(j\omega +
c)x/\rho}}{j\omega + c} \frac{1}{\prod_{i=1}^{r_g} (1 -
(j\omega + c)\mu_{m,i}(x))} d\omega
\nonumber\\
\end{equation}
where $\{\mu_{m,i}(x) : i  = 1,\ldots,r_g\}$ are the eigenvalues of $\Am_m(x) = \Am_{m_1} - x\Am_{m_2}$.

In order to derive the CDF of the SINR, we need to make some remarks
on the eigenvalues of $\Am_m(x)$. Ordering the eigenvalues of $\Am_m(x)$ as
\[ \mu_{m,1}(x) \geq \mu_{m,2}(x) \geq \ldots \mu_{m,r_g}(x),\]
we have the following lemmas
\begin{lem}
The maximum eigenvalue of $\Am_m(x)$, i.e., $\mu_{m,1}(x)$ is strictly
positive $\forall \ \ x \geq 0$.
\end{lem}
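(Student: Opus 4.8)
The plan is to prove the claim directly from the Rayleigh--quotient (variational) characterization of the top eigenvalue, by exhibiting a single test direction on which the quadratic form stays positive uniformly in $x$. Since $\Am_{m_1}=\sv\sv^\herm$ with $\sv=\Lambdam_g^{1/2}\Um_g^\herm\bvx_{g_m}$ is rank-one positive semidefinite and $\Am_{m_2}$ is positive semidefinite, $\Am_m(x)=\Am_{m_1}-x\Am_{m_2}$ is Hermitian, so $\mu_{m,1}(x)=\max_{\|\vv\|=1}\vv^\herm\Am_m(x)\vv$. Hence for any unit vector $\vv$,
\[
\mu_{m,1}(x)\;\ge\;\vv^\herm\Am_m(x)\vv\;=\;|\sv^\herm\vv|^2-x\,\vv^\herm\Am_{m_2}\vv .
\]
If $\vv$ can be chosen in the null space of $\Am_{m_2}$ with $\sv^\herm\vv\neq0$, the $x$-dependent term disappears and this reads $\mu_{m,1}(x)\ge|\sv^\herm\vv|^2>0$ for \emph{every} $x\ge0$, which is exactly the claim. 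So the entire proof reduces to producing one such test vector.

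Next I would convert the existence of this vector into a linear-independence statement. Because $\Lambdam_g^{1/2}$ is invertible, writing $\vv=\Lambdam_g^{-1/2}\tilde\vv$ turns the requirements ``$\Am_{m_2}\vv=\zerov$ and $\sv^\herm\vv\neq0$'' into ``$\tilde\vv$ is orthogonal to every projected interference direction while $(\Um_g^\herm\bvx_{g_m})^\herm\tilde\vv\neq0$''. The interference directions are the projected same-group beams $\Um_g^\herm\bvx_{g_n}$ ($n\neq m$) together with the columns of the projected cross-group beamformers $\Um_g^\herm\Bm_{g'}$ ($g'\neq g$). Taking $\tilde\vv$ to be the component of $\Um_g^\herm\bvx_{g_m}$ orthogonal to the span of all these directions works precisely when $\Um_g^\herm\bvx_{g_m}$ does \emph{not} lie in that span; so the remaining task is this non-containment.

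To establish non-containment I would invoke the approximate-BD geometry. By (\ref{eqn:abd}), every column of $\Um_g^\herm\Bm_{g'}$ with $g'\neq g$ has its first $r_g^*$ entries zero, so all cross-group directions live in the ``tail'' coordinate subspace $T={\rm Span}\{\ev_{r_g^*+1},\ldots,\ev_{r_g}\}$. On the other hand $\Bm_g$ is chosen inside ${\rm Span}(\Um_g^*)$, the dominant eigenspace, so each $\Um_g^\herm\bvx_{g_n}$ equals $(\Id_{r_g^*},\zerov)^\trasp$ times its coordinate vector and thus lands in the complementary ``head'' subspace $H={\rm Span}\{\ev_1,\ldots,\ev_{r_g^*}\}$; moreover the $r_g^*$ columns of $\Bm_g$ are linearly independent and $\Um_g^\herm$ is injective on ${\rm Span}(\Um_g^*)$, so $\{\Um_g^\herm\bvx_{g_n}\}_{n=1}^{r_g^*}$ is a linearly independent family inside $H$. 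Since $H\cap T=\{\zerov\}$, any relation $\Um_g^\herm\bvx_{g_m}=\sum_{n\neq m}c_n\,\Um_g^\herm\bvx_{g_n}+(\text{tail term})$ forces the tail term to vanish and contradicts this independence. Hence $\Um_g^\herm\bvx_{g_m}$ is outside the span of the other directions, the required $\tilde\vv$ (and thus $\vv$) exists, and the lemma follows.

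I expect the crux---and the only non-routine step---to be this last non-containment argument: positivity for all $x$ hinges entirely on the desired beam retaining a ``private'' direction invisible to every interfering beam, and making that precise is exactly where the head/tail orthogonality supplied by approximate BD is essential. (The exact-BD special case $r_g^*=r_g$ is even easier, since then the cross-group directions vanish after projection and only the within-group linear independence is needed.)
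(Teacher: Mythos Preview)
Your proof is essentially identical to the paper's: both use the Rayleigh quotient to reduce the claim to exhibiting a vector in $\ker\Am_{m_2}$ with nonzero overlap on the signal direction, convert this to the non-containment statement $\Um_g^\herm\bvx_{g_m}\notin{\rm Span}\{\Um_g^\herm\bvx_{g_n}:n\neq m,\ \Um_g^\herm\bvx_{g'_n}:g'\neq g\}$, and prove the latter from the approximate-BD structure (\ref{eqn:abd}); your explicit change of variables $\vv=\Lambdam_g^{-1/2}\tilde\vv$ to strip the $\Lambdam_g^{1/2}$ factors is in fact cleaner than the paper, which glosses over this point.

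One small inaccuracy worth flagging: you assert that ``$\Bm_g$ is chosen inside ${\rm Span}(\Um_g^*)$'', so that the same-group projections $\Um_g^\herm\bvx_{g_n}$ land purely in the head subspace $H$. Approximate BD does not guarantee this; $\Bm_g$ is only constrained to lie in the null space of the \emph{other} groups' dominant eigenspaces, so $\Um_g^\herm\bvx_{g_n}$ may have a nonzero tail component. The paper instead uses only the weaker fact that $\Um_g^{*\herm}\Bm_g$ has rank $r_g^*$: if the non-containment failed, projecting the resulting linear relation onto the first $r_g^*$ coordinates kills the cross-group terms (by (\ref{eqn:abd})) and yields $\Um_g^{*\herm}\bvx_{g_m}=\sum_{n\neq m}\alpha_{g_n}\Um_g^{*\herm}\bvx_{g_n}$, contradicting that rank. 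With this adjustment your head/tail argument goes through verbatim.
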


\begin{proof}
Rewriting $\Am_m(x)$ as
\begin{eqnarray}
\Am_m(x) &=& \Am_{m_1} - x\Am_{m_2} \nonumber\\
&=& \Lambdam_g^{1/2} \Um_g^\herm \bvx_{g_m} \bvx_{g_m}^\herm
\Um_g \Lambdam_g^{1/2}   - x \left[ \sum_{n \neq m} \Lambdam_g^{1/2}
\Um_g^\herm \bvx_{g_n} \bvx_{g_n}^\herm \Um_g \Lambdam_g^{1/2}
 + \sum_{g' \neq g} \Lambdam_g^{1/2} \Um_g^\herm \Bm_{g'}
\Bm_{g'}^\herm \Um_g \Lambdam_g^{1/2}
\right]
\end{eqnarray}
we see that $\Am_{m_1}$ is a rank-1 matrix and $\Am_{m_2}$ has rank at most $r_g - 1$. This is because $\Am_{m_2}$ is the sum of $r_g^* - 1$ rank-1 matrices and the matrix $\sum_{g' \neq g} \Lambdam_g^{1/2} \Um_g^\herm \Bm_{g'}
\Bm_{g'}^\herm \Um_g \Lambdam_g^{1/2}$ has at most rank $r_g - r_g^*$ because of (\ref{eqn:abd}). Since $\Am_{m_2}$ is of dimension $r_g \times r_g$ and has rank at most $r_g - 1$, there is a non-trivial nullspace of dimension 1, meaning we can find a vector $\qv$ such that $\Am_{m_2} \qv = \zerov$. In order to prove the lemma,
we first prove that
\begin{equation}\label{eqn:span}
\Um_g^\herm \bvx_{g_m} \notin {\rm Span} \left \{ \Um_g^\herm \bvx_{g_n} : n \neq m, \Um_g^\herm \bvx_{g'_n} : g' \neq g, n = 1,\ldots,r_{g'}^{*} \right \}.
\end{equation}
In order to see this, we write $\Um_g = [\Um_g^*, \Um'_g]$, where $\Um_g^*$ is of rank $r_g^*$ and $\Um'_g$ is of rank $r_g - r_g^*$. Let $\Bm_g = [\bvx_{g_1} \bvx_{g_2} \ldots \bvx_{g_{r_g^*}}]$, of rank $r_g^*$ by construction. Then, we have
\[ \Um_g^\herm \Bm_g = \begin{pmatrix} \Um_g^{* \herm} \Bm_g \\ \Um^{' \herm}_g \Bm_g \end{pmatrix}, \]
where the upper part $\Um_g^{* \herm} \Bm_g$ has rank $r_g^*$. Reasoning by contradiction, let's assume that (\ref{eqn:span})  is false.
Then, there exist coefficients $\{\alpha_{g_n}: n \neq m\}$ and $\{\beta_{g'_n} : g' \neq g, n=1,\ldots, r^*_{g'}\}$ such that
\[ \Um_g^\herm \bvx_{g_m} = \sum_{n\neq m} \alpha_{g_n} \Um_g^\herm \bvx_{g_n} + \sum_{g' \neq g} \sum_{n=1}^{r^*_{g'}} \beta_{g'_n} \Um_g^\herm \bvx_{g'_n}.  \]
Recalling  (\ref{eqn:abd}), we have that the second term in the right-hand side of the above equality takes on the form
\[   \sum_{g' \neq g} \sum_{n=1}^{r^*_{g'}} \beta_{g'_n} \Um_g^\herm \bvx_{g'_n} =  \begin{pmatrix} \zerov_{r_g^* \times 1} \\
\zv \end{pmatrix}, \]
where $\zv$ is some non-zero vector of dimension $r_g - r^*_g$.
Since the upper part of $\Um_g^\herm \bvx_{g_m}$, formed by the first $r^*_g$ components  $\Um_g^{* \herm} \bvx_{g_m}$, is non-zero. It must be
\[ \Um_g^{* \herm} \bvx_{g_m} = \sum_{n\neq m} \alpha_{g_n} \Um_g^{* \herm} \bvx_{g_n}.  \]
However, this cannot be, since it contradicts the fact that $\Um_g^{* \herm} \Bm_g$ has rank $r_g^*$.
Therefore, we conclude that (\ref{eqn:span}) holds.

Now, choosing $\qv$ to be a unit vector in the orthogonal complement of ${\rm Span}\{ \Um_g^\herm \bvx_{g_n} : n \neq m, \Um_g^\herm \bvx_{g'_n} : g' \neq g,
n =1,\ldots,r_{g'}^{*}\}$ and such that $\qv^\herm \Um_g^\herm \bvx_{g_m} \neq \zerov$, we have that
\[ 0 < \qv^\herm \Um_g^\herm \bvx_{g_m} \bvx_{g_m}^\herm \Um_g \qv \leq \max_{\qv} \qv^\herm \Um_g^\herm \bvx_{g_m} \bvx_{g_m}^\herm \Um_g \qv \eqdef \mu_{m,1}(x), \]
implying $\mu_{m,1}(x) > 0$ for all $x \geq 0$.
\end{proof}

\begin{lem}
The eigenvalues
$\mu_{m,2}(x),\ldots,\mu_{m,r_g}(x)$ are non-positive $\forall \ \ x \geq 0$.
\end{lem}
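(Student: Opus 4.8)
The plan is to exploit the fact that $\Am_m(x)$ is the sum of a rank-one positive semidefinite matrix and a negative semidefinite matrix, and then invoke eigenvalue interlacing. First I would record the structure already displayed in the proof of the previous lemma: writing $\vv = \Lambdam_g^{1/2}\Um_g^\herm \bvx_{g_m}$, we have $\Am_{m_1} = \vv\vv^\herm$, which is rank-one and positive semidefinite, while
\[
\Am_{m_2} = \sum_{n \neq m} \Lambdam_g^{1/2}\Um_g^\herm \bvx_{g_n}\bvx_{g_n}^\herm \Um_g \Lambdam_g^{1/2} + \sum_{g' \neq g} \Lambdam_g^{1/2}\Um_g^\herm \Bm_{g'}\Bm_{g'}^\herm \Um_g \Lambdam_g^{1/2}
\]
is a sum of matrices of the form $\Cm\Cm^\herm$ and hence positive semidefinite. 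Consequently, for every $x \geq 0$ the matrix $-x\Am_{m_2}$ is negative semidefinite, so all of its eigenvalues are nonpositive.

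Next I would apply the interlacing theorem for a rank-one positive semidefinite perturbation to the identity $\Am_m(x) = (-x\Am_{m_2}) + \vv\vv^\herm$. Denoting by $\nu_1 \geq \nu_2 \geq \cdots \geq \nu_{r_g}$ the (nonpositive) ordered eigenvalues of $-x\Am_{m_2}$, Weyl's inequalities for a rank-one nonnegative update give
\[
\mu_{m,1}(x) \geq \nu_1 \geq \mu_{m,2}(x) \geq \nu_2 \geq \cdots \geq \mu_{m,r_g}(x) \geq \nu_{r_g}.
\]
In particular, for each $i = 2,\ldots,r_g$ we obtain $\mu_{m,i}(x) \leq \nu_{i-1} \leq \nu_1 \leq 0$, which is exactly the assertion that $\mu_{m,2}(x),\ldots,\mu_{m,r_g}(x)$ are nonpositive for all $x \geq 0$.

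This completes the argument, and it is the natural companion to Lemma~1: combined, the two lemmas show that $\Am_m(x)$ has exactly one strictly positive eigenvalue (namely $\mu_{m,1}(x)$) and $r_g - 1$ nonpositive ones for every $x \geq 0$, a sign/inertia pattern that is precisely what is needed to evaluate the contour integral in (\ref{eqn:pdf-sinr}). I do not anticipate a genuine obstacle here; the only point requiring a little care is writing the interlacing chain in the correct orientation so that the \emph{second} eigenvalue of $\Am_m(x)$ is bounded above by the \emph{largest} eigenvalue of the negative semidefinite part, rather than confusing the roles of the perturbed and unperturbed spectra.
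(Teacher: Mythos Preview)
Your proposal is correct and takes essentially the same approach as the paper: both arguments rest on the facts that $\Am_{m_1}$ is rank-one positive semidefinite and $-x\Am_{m_2}$ is negative semidefinite, and then appeal to a Weyl-type eigenvalue inequality. The paper applies Weyl's inequality directly in the form $\mu_{m,i}(x) \leq \lambda_i(\Am_{m_1}) + \lambda_1(-x\Am_{m_2}) = 0 - x\lambda_{r_g}(\Am_{m_2}) \leq 0$ for $i>1$, whereas you invoke the rank-one interlacing chain; these are two phrasings of the same underlying estimate.
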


\begin{proof}
Denoting by $\lambda_{i}(\Am_{m_1})$ and $\lambda_{i}(\Am_{m_2})$ the $i^{\rm th}$ largest eigenvalues of $\Am_{m_1}$ and $\Am_{m_2}$, we have for $i > 1$, using Weyl's inequality \cite{franklin2012matrix}, we have
\begin{eqnarray}
\mu_{m,i}(x) &\leq& \lambda_{i}(\Am_{m_1}) - x \lambda_{r_g}(\Am_{m_2})\nonumber\\
&\leq& 0 - x \lambda_{r_g}(\Am_{m_2}) \nonumber\\
&\leq& 0
\end{eqnarray}
implying $\mu_{m,i}(x) \leq 0$, $\forall \; i > 1$
\end{proof}

Since the eigenvalues $\mu_{m,2},\ldots,\mu_{m,r_g}$ are negative and do
not contribute to the integral (\ref{eqn:pdf-sinr}), we can use Cauchy's integral theorem and the fact that there is a single pole
in the right half-plane of the complex plane in order to obtain
\begin{equation}
\PP \left ( {\rm SINR}_{g_k,m} > x \right ) = \frac{e^{\frac{-x}{\rho
\mu_{m,1}(x)}}}{\prod_{i=2}^{r_g} \left( 1 -
\frac{\mu_{m,i}(x)}{\mu_{m,1}(x)}\right)},
\end{equation}
such that the SINR CDF $F(x)$ is given by
\begin{equation} \label{eqn-cdf-sinr}
F(x) = 1 - \frac{e^{\frac{-x}{\rho \mu_{m,1}(x)}}}{\prod_{i=2}^{r_g} \left( 1 -
\frac{\mu_{m,i}(x)}{\mu_{m,r_g}(x)}\right)}.
\end{equation}

From the well known results on extreme value theory (see Appendix \ref{subsec:extreme-val-theory},\cite{sharif2005capacity},\cite{al2009much}), we have that $\max_{1 \leq k \leq K'} {\rm SINR}_{g_k,m}$ for a
group $g$ behaves as $\rho \mu_{m,1}^* \log K' + O(\log \log K')$ as $K' \rightarrow \infty$, where
\begin{equation} \label{eqn:limit-growth}
\rho \mu_{m,1}^* = \lim_{x \rightarrow \infty} g(x),
\end{equation}
and $g(x)$ denotes the growth function of the CDF $F(x)$ (see Appendix \ref{subsec:compute-limit-growth}).

As a result, the sum rate for a group $g$ behaves as
\begin{eqnarray*}
R_g &=& \sum_{m=1}^{r_g^*} \log \left( \rho \mu_{m,1}^* \log(K') \right) + o(1)\\
&=& r_g^* \log \rho + r_g^* \log \log K' + \sum_{m=1}^{r_g^*} \log \left( \mu_{m,1}^*
\right) + o(1)
\end{eqnarray*}
as  $K' \rightarrow \infty$, where $\rho = \frac{P}{\sum_{g=1}^G r_g^*}$,
assuming that each downlink stream is allocated equal power.
Following a similar approach for all the groups, we arrive at the sum rate achievable asymptotic formula
\begin{equation}
R_{\rm sum} = \left(\sum_{g=1}^{G} r_g^*\right) \log \rho + \left( \sum_{g=1}^G r_g^* \right) \log \log K' + O(1).
\end{equation}
When $\sum_{g=1}^G r_g < M$, it is possible to choose $r_g^* = r_g$ such that the above achievable sum rate matches
(in the leading terms) the upper bound (\ref{eqn:case1}).   If $\sum_{g=1}^G r_g > M$, we can choose
$r_g^*$ such that $\sum_{g=1}^G r_g^* = M$, such that again the achievable rate matches,
in the leading terms, the upper bound (\ref{eqn:case2}). Hence, in all cases, we have
$\sum_{g=1}^G r_g^* = \min \{M,\sum_{g=1}^G r_g\} = \beta$, such that
Theorem \ref{theorem1} is proved.

%%%%%%%%%%%%%%%%%%%%%%%%%%%%%%%%%%%%%%%%%%%%%%%%%%%%%%%%%%%%
\section{User Grouping} \label{sec:user-grouping}

As a matter of fact, in reality users do not come naturally partitioned in groups with the same covariance matrix.
In order to exploit effectively the JSDM approach, the system must partition the users' population into groups according to the following qualitative principles:
1) users in the same group have channel covariance eigenspace spanning (approximately) a given common subspace, which characterizes the group;
2) the subspaces of groups served on the same time-frequency slot (transmission resource) by JSDM must be (approximately) mutually orthogonal, or at least have
empty intersection.  In this section, we focus on this user grouping problem when the BS is equipped with a uniform linear array.
According to the single scattering ring model (see \cite{adhikary2012joint} and references therein),
the channel covariance matrix for a user located at angle of arrival (AoA) $\theta$ with an angular spread (AS) $\Delta$ has $(m,p)$-th elements
\begin{equation}  \label{correlation-matrix-ULA}
[\Rm]_{m,p} = \frac{1}{2\Delta} \int_{-\Delta+\theta}^{\Delta+\theta}   e^{-j2\pi D (m-p) \sin(\alpha)} d\alpha,
\end{equation}
where $\lambda D$ denotes the minimum distance between the BS antenna elements.
Here we assume that users are characterized by the pair $(\theta, \Delta)$ of their AoA and AS,
depending on their location relative to the BS antenna array, and on this local scattering environment.
We consider  two user grouping algorithms and demonstrate their performance by simulation. In Section \ref{sec:large-system-limit},
we focus on the most promising scheme and consider the system performance analysis in the large system regime \cite{adhikary2012joint}, i.e., when both $K$ and $M$ are large.
We assume that the BS has perfect knowledge of the user channel covariance, which can be accurately learned and tracked since it is constant
in time.\footnote{As a matter of fact, the channel covariance is slowly varying in time, depending on the user mobility. However, especially for nomadic users,
the scattering environment characteristics and the AoA evolve in time much more slowly than the actual channel fading process, and can be considered
``locally constant''. Algorithms for covariance estimation and signal subspace tracking are well known and widely investigated, and
are out of the scope of this work.}

\subsection{Algorithm 1: K-means Clustering}

$K$-means Clustering is a standard iterative algorithm which aims at partitioning $K$ observations into $G$ clusters such that each
observation belongs to the cluster with the nearest mean \cite{young1974classification,lloyd1982least}.
This results in a partition of the observation space into Voronoi cells. In our problem, the $K$ user covariance {\em dominant}
eigenspaces, i.e., $\{\Um_k^* : k = 1,\ldots,K \}$ form the observation space.
Hence, in order to apply the K-means principle, we consider the {\em chordal distance} between the covariance enigenspaces.\footnote{Note that K-means is
usually formulated in terms of the Euclidean distance since  the observation space is typically a subset of $\CC^n$.}
Given two matrices $\Xm \in \CC^{M \times p}$ and $\Ym \in \CC^{M \times q}$, the chordal distance denoted by
$d_C(\Xm,\Ym)$ is defined by
\begin{equation} \label{eqn:chordal-dist}
d_C(\Xm,\Ym) = \left \| \Xm \Xm^\herm - \Ym \Ym^\herm \right \|_F^2.
\end{equation}
In a similar fashion, we need to define a notion of the {\em mean} of (tall) unitary matrices.  Given $N$ unitary matrices $\{\Um_1^*,\Um_2^*,\ldots,\Um_N^*\}$,
the mean $\bar{\Um}^* \in \CC^{M \times p}$ is given as \cite{barg2002bounds}
\begin{equation} \label{eqn:matrix-mean}
\bar{\Um}^* = {\rm eig} \left[ \frac{1}{N} \sum_{n=1}^N \Um_n^* \Um_n^{* \herm} \right],
\end{equation}
where ${\rm eig} (\Xm)$ denotes the unitary matrix formed by the $p$ dominant eigenvectors  of $\Xm$.

At this point, we can formulate the K-means algorithm for the user channel eigenspaces.
Given $K$ user covariance eigenspaces $\{\Um_k^* \in \RR^{M \times \bar{r}_k^*} : k = 1,\ldots,K \}$,
we need to cluster them into $G$ groups, where each group $g$ is characterized by its subspace (tall unitary matrix) $\Vm_g  \in \RR^{M \times r_g^*}$, such that
$\sum_{g=1}^G r_g^* \leq M$. We denote by $\Vm_g^{* (n)}$ the group $g$ ``mean'' obtained by the algorithm at iteration $n$,
and by $\Sc_g^{(n)}$ the set of users belonging to group $g$ at iteration $n$. We have:
\begin{itemize}
\item \textbf{Step 1}: Set $n = 0$ and $\Sc_g^{(0)} = \emptyset$ for $g = 1,\ldots,G$.
Randomly choose $G$ different indices from the set $\{1,\ldots,K\}$ and do the following assignment
\begin{equation}
\Vm_g^{* (n)} = \Um_{\pi(g)}^*, \;\;\; \mbox{for} \; g = 1, 2, \ldots, G,
\end{equation}
where $\pi(g)$ returns a random number from the set $\{1,2,\ldots,K\} \setminus \{\pi(1),\ldots,\pi(g-1)\}$
\item \textbf{Step 2}: For $k = 1,\ldots,K$, compute
\begin{equation}
d_C(\Um_k^*,\Vm_g^{* (n)}) = ||\Um_k^* \Um_k^{* \herm} - \Vm_g^{* (n)} \Vm_g^{* (n) \herm} ||_F^2
\end{equation}
\item \textbf{Step 3}: Assign user $k$ to group $g$ such that
\begin{eqnarray}
g &=& {\rm arg} \min_{g'} d_C(\Um_k^*,\Vm_{g'}^{* (n)}) \nonumber\\
\Sc_g^{(n+1)} &=& \Sc_g^{(n)} \cup \{k\}
\end{eqnarray}
\item \textbf{Step 4}: For $g = 1, \ldots, G$ and $\forall\ k \in \Sc_g$ compute
\begin{equation}
\Vm_g^{* (n+1)} = {\rm eig} \left[ \frac{1}{|\Sc_g^{(n+1)}|} \sum_{k \in \Sc_g^{(n+1)}}^N \Um_k^* \Um_k^{* \herm} \right]
\end{equation}
\item \textbf{Step 5}: Compute the total distance at the $n^{\rm th}$ and $(n+1)^{\rm th}$ iteration
\begin{equation}
d_{{\rm tot},C}^{(n)} = \sum_{g=1}^G \sum_{k \in \Sc_g^{(n)}} d_C(\Um_k^*,\Vm_g^{* (n)})
\end{equation}
\item \textbf{Step 6}: If $|d_{{\rm tot},C}^{(n)} - d_{{\rm tot},C}^{(n+1)}| > \epsilon d_{{\rm tot},C}^{(n)} $, go to Step 7. Else, increment $n$ by 1
and go to Step 2.\footnote{$\epsilon$ is a threshold for stopping the algorithm when the relative difference between the total distances at the previous and current iterations
is sufficiently small.}
\item \textbf{Step 7}: For $g =1,\ldots,G$, assign
$$\Vm_g^{*} = \Vm_g^{* (n)}, \ \ \Sc_g = \Sc_g^{(n)}.$$
\end{itemize}

\subsection{Algorithm 2: fixed quantization} \label{subsec:fixed-quantz}

In this case, the group subspaces $\{\Vm_g^* \in \RR^{M \times r_g^*} : g  = 1,\ldots,G\}$ are fixed and given {\em a priori}, based on
geometric considerations. They act as the representative points of a minimum distance quantizer, where
distance in this case is the {\em chordal distance} defined in (\ref{eqn:chordal-dist}). Explicitly, we have:

\begin{itemize}
\item \textbf{Step 1}: For $g =1,\ldots,G$ set $\Sc_g = \emptyset$.
\item \textbf{Step 2}: For $k = 1,\ldots,K$, compute the distances
\begin{equation}
d_C(\Um_k^*,\Vm_g) = ||\Um_k^* \Um_k^{* \herm} - \Vm_g^* \Vm_g^{* \herm} ||_F^2,
\end{equation}
find in the minimum distance group index
\[ g = {\rm arg} \min_{g'} d_C(\Um_k^*,\Vm_{g'}^*), \]
and add user $k$ to group $g$, i.e.,  let $\Sc_g := \Sc_g \cup \{k\}$.
\end{itemize}

It is clear that the performance of the JSDM scheme resulting from fixed quantization depends critically on how we choose the group subspaces.
We have considered two methods to choose $\{\Vm^*_g\}$.
The first method relies on the fact that, for large $M$, the channel eigespaces are nearly mutually orthogonal
when the channel AoA supports are disjoint \cite{adhikary2012joint}.  Hence, we choose
$G$ AoAs $\theta_g$ and fixed AS $\Delta$ such that the resulting $G$ intervals  $[\theta_g - \Delta, \theta_g + \Delta]$ are disjoint,
and compute the eigenspace corresponding to these artificially constructed covariance matrices using the one-ring scattering model
(\ref{correlation-matrix-ULA}).  This method consists essentially to form pre-defined ``narrow sectors'' and associate
users to sectors according to minimum chordal distance quantization.

\begin{example} \label{ex:method-1}
Suppose $G = 3$. Choosing $\theta_1 = -45^o, \theta_2 = 0^o, \theta_3 = 45^o$ and $\Delta = 15^o$, we note that the angular supports are disjoint.
Letting $\Rm_1(\theta_1,\Delta), \Rm_2(\theta_2,\Delta)$ and $\Rm_3(\theta_3,\Delta)$ denote the covariance matrices
obtained by (\ref{correlation-matrix-ULA}) for given AoA and AS,  we let $\Vm_g^* = \Um_g^*$ for $g  = 1,2,3$, where $\Um_g^*$
is the $M \times r_g^*$ tall unitary matrix of the $r_g^*$ dominant eigenvalues of $\Rm_g(\theta_g,\Delta)$ and
the effective ranks $r_g^*$ are chosen such that $r_1^* + r_2^* + r_3^* = M$.
\end{example}

A different way to choose $\Vm_g^*$ consists of maximizing the minimum distance between the group subspaces.
Defining
\begin{equation}
d_{\Vm_g^*: g \in \{1,2,\ldots,G\}} =  \min_{g,g'} d_C(\Vm_g^*,\Vm_{g'}^*)
\end{equation}
as the minimum chordal distance of the set of group subspaces $\{\Vm_1^*,\Vm_2^*,\ldots,\Vm_G^*\}$, we wish to find such set
such that $d_{\Vm_g^*: g \in \{1,2,\ldots,G\}}$ is maximized.  It is easy to see that, if  $\sum_{g=1}^G r_g^* = M$, the we can choose $\{\Vm_g\}$ as disjoint
subsets of the columns of a unitary matrix of dimensions $M \times M$ such that all group subspaces are mutually orthogonal and
$d_{\Vm_g^*: g \in \{1,2,\ldots,G\}}$ is maximized.
Using the fact that, for large $M$,  the eigenvectors of covariance matrices of the type (\ref{correlation-matrix-ULA})
are well approximated by the columns of a DFT matrix (see \cite{adhikary2012joint} for details), here we propose to use disjoint blocks of
adjacent columns of the $M\times M$ unitary DFT matrix as group subspaces.

\begin{example} \label{ex:method-2}
Suppose again $G = 3$. Assign $r_g^* = \lfloor \frac{M}{3} \rfloor = r$ and let $\Fm$ denote the unitary $M \times M$ DFT matrix. Then, we have
$$\Vm_g = \Fm(:,(g-1)r + (1:r)),$$
i.e., $\Vm_g$ is formed by taking the $(g-1)r + 1$ to $(g-1)r + r$ columns of $\Fm$.
\end{example}

\subsection{Simulations} \label{sec:sim-finite}

We present some simulation results to show the performance of the different user grouping algorithms proposed in Section \ref{sec:user-grouping}, under
the pre-beamforming and user selection JSDM scheme used in the achievability of Theorem \ref{theorem1} (see Section \ref{subsec:achievability}).
Having chosen the group eigenspaces $\{\Vm_g^*\}$ and the set of users $\Sc_g$ for each group $g$, we obtain the pre-beamforming matrices
$\{\Bm_g\}$ by block diagonalization (see \cite{adhikary2012joint} for details). In particular, we have
$(\Vm^*_{g})^\herm \Bm_g = \zerov$ for all $g \neq g'$.  Notice that this does not mean that the system has no inter-group interference, since the
actual user channels eigenspaces do not coincide {\em exactly} with the group subspaces.
Within a group, we adopt user selection to choose a subset $r_g^*$ of users among $K_g = |\Sc_g|$ to be served during each particular time-frequency slot.
The user selection algorithm  used in the achievability of Theorem \ref{theorem1} is denoted here as {\em JSDM-GBF-ALL} (group beamforming with all SINRs CQI feedback).
For the sake of comparison we consider other two selection schemes. The first, denoted as {\em JSDM-GBF-MAX}, consists of feeding back
just the index of the beam with max SINR. Such scheme was considered and analyzed in \cite{sharif2005capacity} for i.i.d. channel vectors.
The second, denoted as {\em JSDM-ZFBF-SUS}, consists of performing ZFBF precoding for each group,
where users in each group $g$ are chosen by semi-orthogonal user selection (SUS) proposed in \cite{yoo2006optimality},
on the basis of the effective channel vectors including pre-beamforming, $\{\Bm_g^\herm \hv_k : k \in \Sc_g\}$.
The selected users are served by ZFBF MU-MIMO precoding, where the precoding matrix
$\Pm_g$ is the column-normalized Moore-Penrose pseudo-inverse of effective channel matrix formed by  selected users.
Notice that  {\em JSDM-ZFBF-SUS} requires feeding back the effective channel vectors and therefore incurs in a much larger feedback overhead.

We consider two specific cases: $M = 8$ and $M = 16$.
We fix the total transmit power $P =\ 10 dB$ assuming equal power per stream and normalize the noise variance to 1,
and denote ${\rm SNR} = P$ in the plots.
We set $G = 8$. The angles of arrival for the users are generated randomly between $-60^o$ to $60^o$ and the angular spreads are generated
randomly between $5^o$ to $15^o$.
For the $K-$means clustering algorithm, the entire set of user covariances is clustered into $G = 8$ groups. For the fixed quantization algorithm,
we choose $\theta \in \{ -57.5^o, -41.5^o, -23^o, -7.5^o, 7.5^o, 23. 5^o, 41.5, 57.5^o\}$ and $\Delta = 12^o$ for choosing our group subspaces, as shown in Example \ref{ex:method-1}.  For the DFT based fixed quantization scheme as in Example \ref{ex:method-2},
we choose $\Vm_g^* = \Fm(:,{\rm Mod}_M[(g-1)r + (1:2r)])$, where the ${\rm Mod}_M$ operation ensures that the column indices are in the set $\{1,2,\ldots,M\}$, and we use
$r = 1$ for $M = 8$ and $r = 2$ for $M = 16$.
Once the clustering is done, we further separate the groups into two disjoint subsets, referred to as ``patterns'',
each containing $G/2 = 4$ groups. Groups in the same pattern are served on the same time-frequency slot,
while users in different pattern are served in different slots. This partitioning into patterns is needed in order to keep the inter-group interference under control.
For the $K-$means clustering algorithm, the partitioning of the group eigenspaces into two disjoint patterns is done such that the sum of the minimum distances
of the two patterns is maximized. For the fixed quantization algorithm, the patterns are obtained by considering the geometry of angular separation. In particular,
we have the two patterns: $\{\Vm_1^*,\Vm_3^*,\Vm_5^*,\Vm_7^*\}$ and $\{\Vm_2^*,\Vm_4^*,\Vm_6^*,\Vm_8^*\}$.
Within each group, a specific user selection algorithm ({\em JSDM-ZFBF-SUS}, {\em JSDM-GBF-MAX} or {\em JSDM-GBF-ALL}) is applied for
selecting the users which are served at any given time-frequency slot.

\begin{figure}
\centering \subfigure[$M = 8$]{
  \includegraphics[width=8cm]{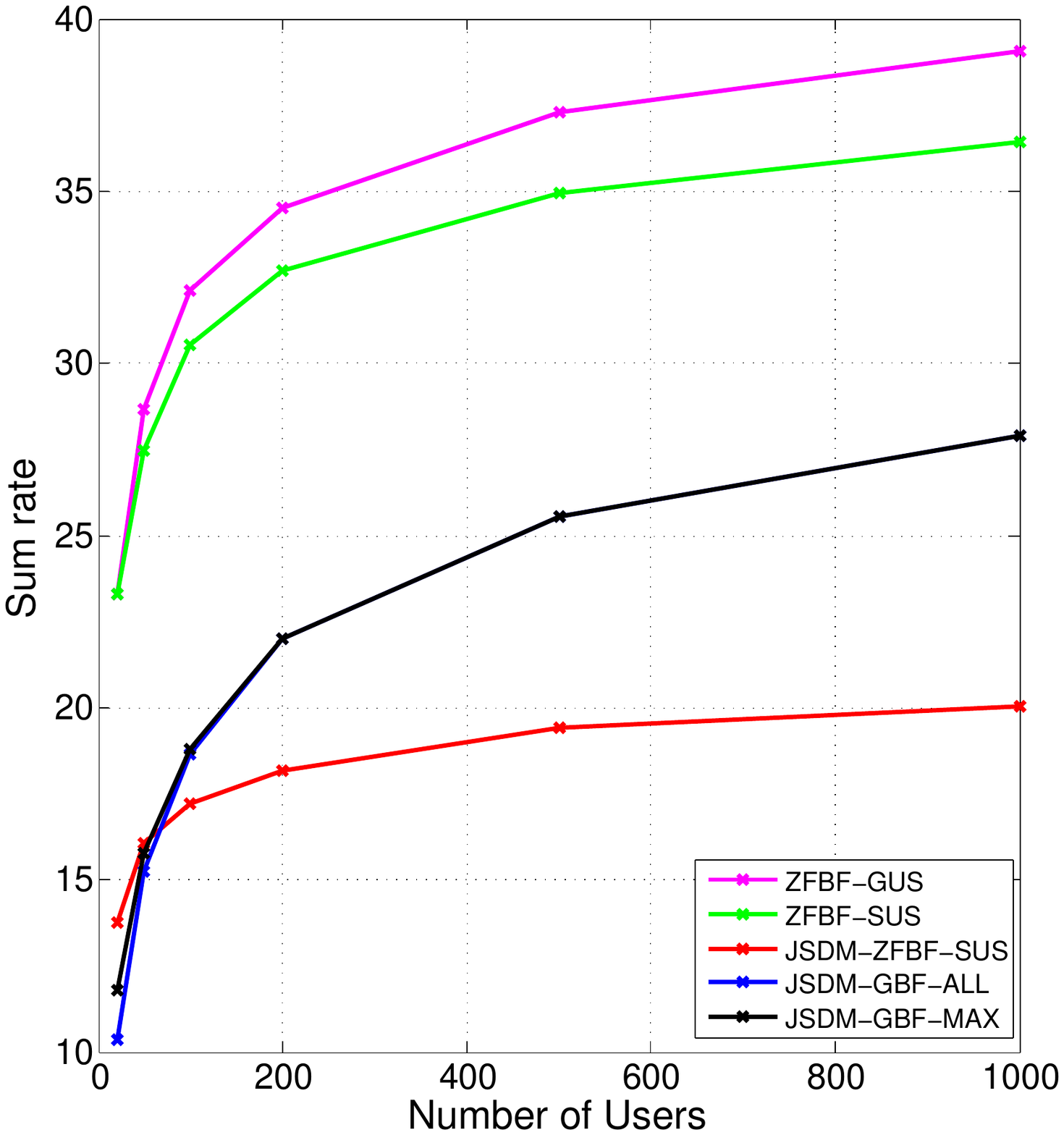}
  \label{fig:M-8}
  }
  \subfigure[$M = 16$]{
  \includegraphics[width=8cm]{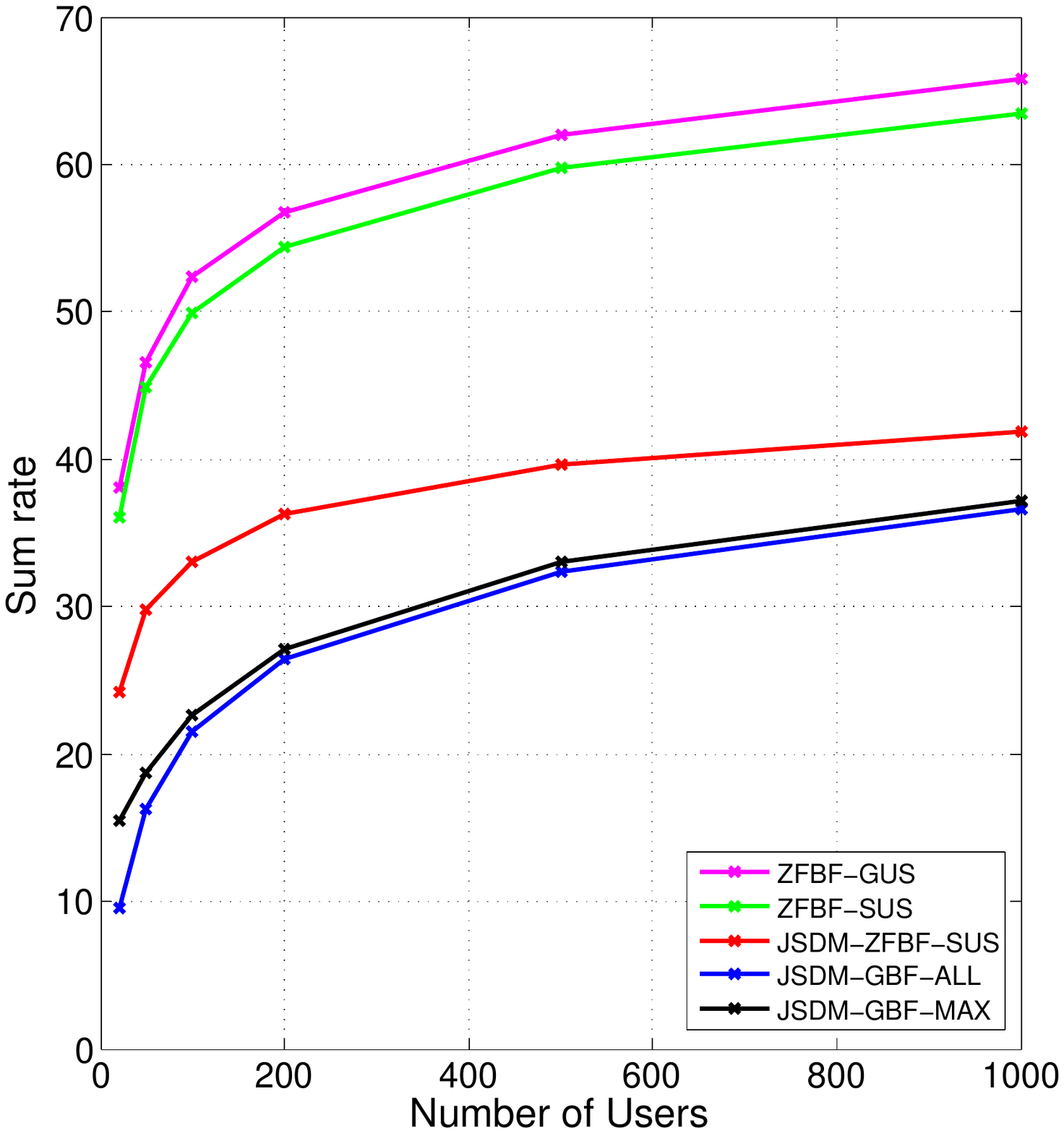}
  \label{fig:M-16}
  }
  \caption{Comparison of sum spectral efficiency (bit/s/Hz) vs. number of users for JSDM with DFT-based fixed quantization
  user grouping and different user selection algorithms.}
  \label{fig:fft-comp}
\end{figure}

Figures \ref{fig:M-8} and \ref{fig:M-16} shows the sum spectral efficiency (in bits/sec/Hz) versus the number of users in the system, averaged over the two patterns with different user selection algorithms for $M = 8$ and $M = 16$ respectively, when DFT-based user grouping is applied.
For the sake of comparison, we show also the performance of
ZFBF with greedy user selection \cite{dimic2005downlink}
(denoted by {\em ZFBF-GUS}) and ZFBF with semi-orthogonal user selection \cite{yoo2006optimality}
(denoted by {\em ZFBF-SUS}), where instead of restricting to JSDM with per-group processing, the
selection is performed across all users without grouping, on the basis of the  full channel state information (i.e., without multiplication by the pre-beamforming matrices).
These performances are shown here to compare how JSDM performs with respect to classical linear beamforming schemes without the structure constraint of
fixed pre-beamforming. Notice that these schemes  require full channel state feedback from all users, and therefore are typically too costly in terms of feedback
in order to be practical.

\begin{figure}
\centering \subfigure[$ZFBF-SUS$]{
  \includegraphics[width=8cm]{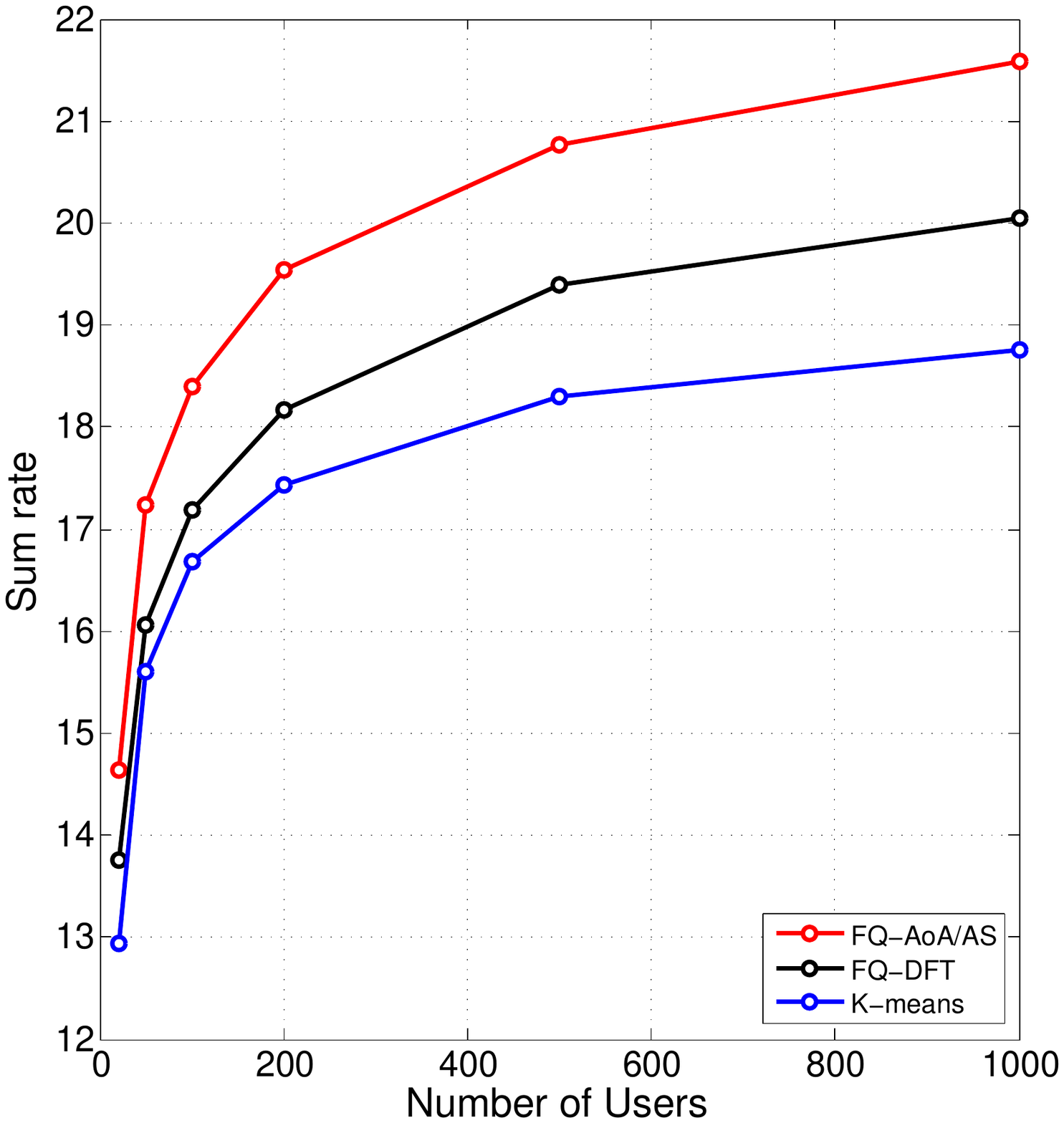}
  \label{fig:M-8-sus}
  }
  \subfigure[$GBF-MAX$]{
  \includegraphics[width=8cm]{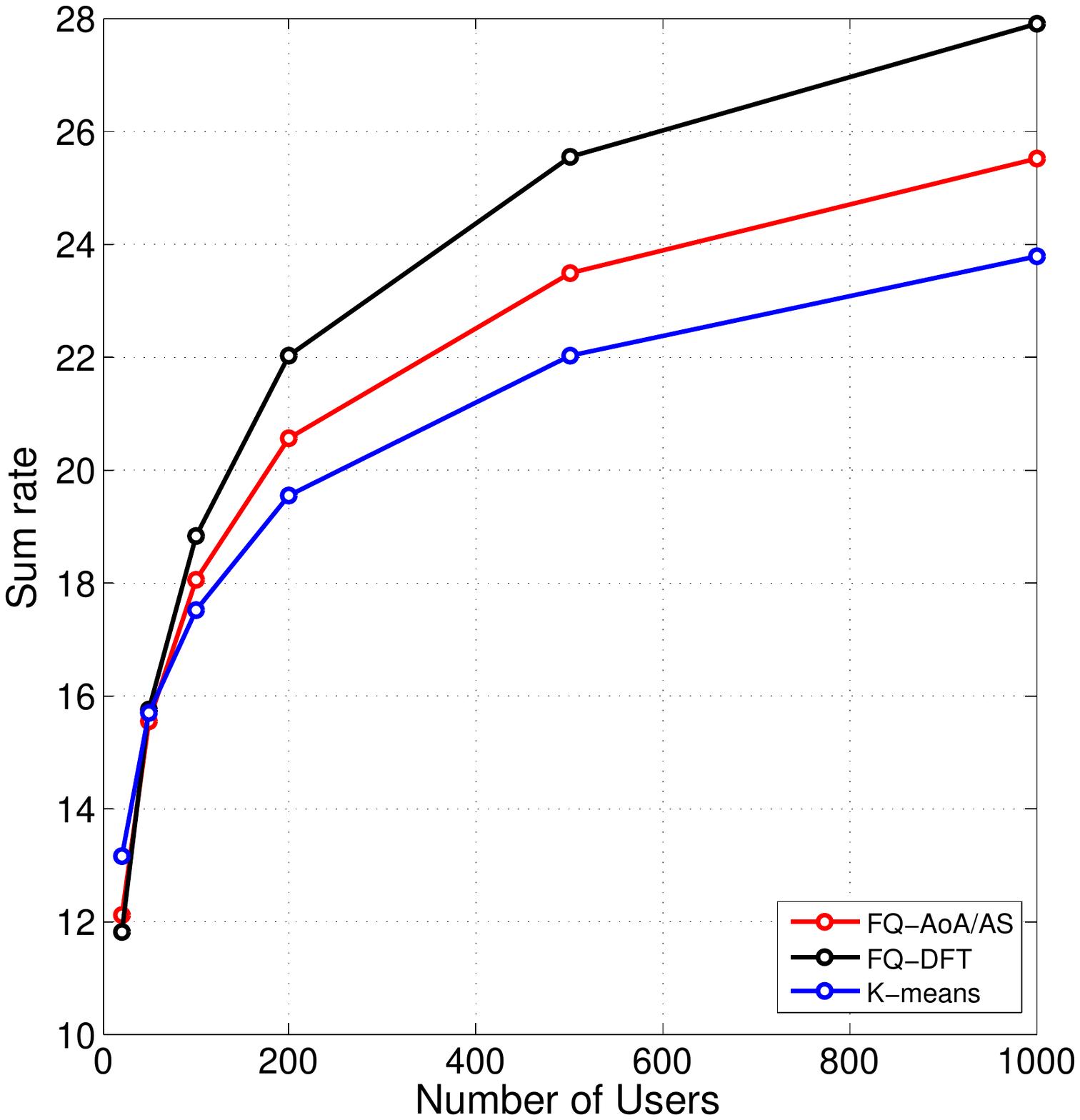}
  \label{fig:M-8-gbf}
  }
  \caption{Comparison of sum spectral efficiency (bit/s/Hz) vs. number of users for JSDM-ZFBF-SUS and JSDM-GBF-MAX with different user grouping  algorithms for $M = 8$.}
  \label{fig:usr-grp-M-8}
\end{figure}

\begin{figure}
\centering \subfigure[$ZFBF-SUS$]{
  \includegraphics[width=8cm]{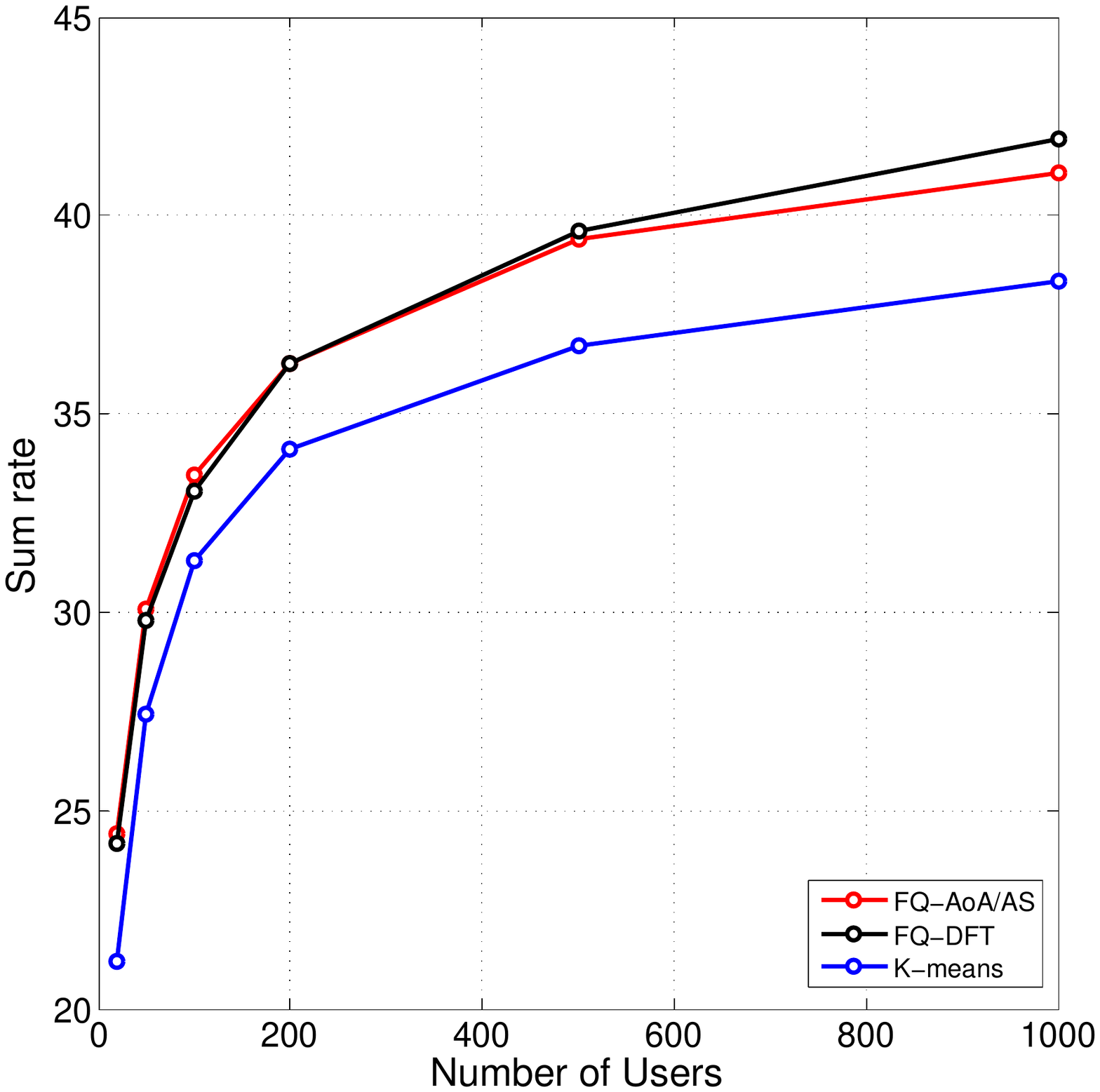}
  \label{fig:M-16-sus}
  }
  \subfigure[$GBF-MAX$]{
  \includegraphics[width=8cm]{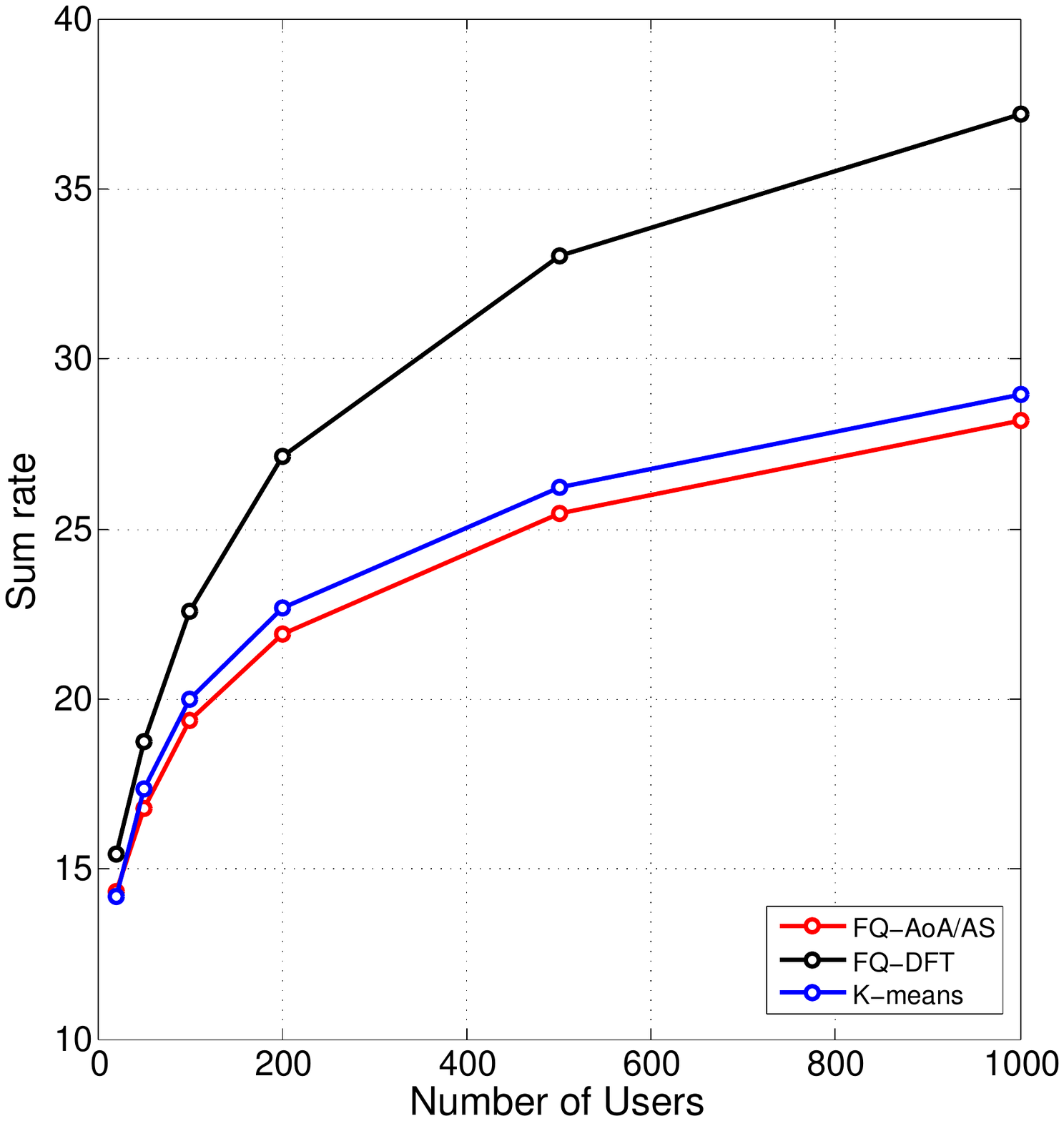}
  \label{fig:M-16-gbf}
  }
  \caption{Comparison of sum spectral efficiency (bit/s/Hz) vs. number of users for JSDM-ZFBF-SUS and JSDM-GBF-MAX with different user grouping  algorithms for $M = 16$.}
  \label{fig:usr-grp-M-16}
\end{figure}

Figure \ref{fig:M-8-sus} shows the sum spectral efficiency versus the number of users for different user grouping algorithms with
$M = 8$ and {\em JSDM-ZFBF-SUS}.
Figure \ref{fig:M-8-gbf} shows analogous results for {\em JSDM-GBF-MAX}.
Figures \ref{fig:M-16-sus} and \ref{fig:M-16-gbf} show the corresponding results for $M = 16$.
These results indicate that user grouping by  DFT-based fixed quantization performs generally better than the other user
grouping algorithm  considered in this work. Hence, because of its simplicity, this appears to be the preferred method for practical
user grouping.

%%%%%%%%%%%%%%%%%%%%%%%%%%%%%%%%%%%%%%%%%%%%%%%
\section{Large System Limit} \label{sec:large-system-limit}

In this section, we focus on the large system limit, i.e., when the number of antennas and the number of users go to infinity with a fixed ratio.
Specifically,  we modify the system model of Section \ref{sec:JSDM} and introduce a parameter $N$,
such that the BS has $MN$ antennas and serves $KN$ single antenna users.
Then, we consider the system performance for $N \rightarrow \infty$.
In this limit, we shall see in Section \ref{sec:user-grouping-simplified} that the user grouping scheme with DEF-based fixed quantization takes on a very simple form,
corresponding to a quantization of the AoA/AS plane. This requires only the knowledge of the AoAs and ASs of the users,
instead of the whole covariance matrix.  As far as user selection is concerned,
we notice from \cite{caire2009selection} that in the limit of $N \rightarrow \infty$ user selection schemes max SINR or SUS) become less and less effective because of
``channel hardening'', and require CQI feedback or effective channel state feedback from a large number of users, such that the benefit of
``opportunistic'' user selection is quickly offset by the extra cost of the feedback.
Hence, following the idea of \cite{huh2012network},  we consider a probabilistic user selection scheme where users are selected with
a certain probability distribution, which is optimized in order to maximize the system ergodic sum rate. in this way, only the users actually scheduled for transmission
will have to feed back their channel state information, in line with the observations made in \cite{ravindran2008multi}.

%%%%%%%%%%%%%%%%%%%%%%%%%%%%%%%%%%%%%%%%%%%%%%%%
\subsection{DFT-based user grouping in the large system limit} \label{sec:user-grouping-simplified}

When the number of antennas $MN$ is large, and the BS is equipped with a uniform linear array, the eigenvectors of the channel covariance can be approximated by a subset of the columns of the DFT matrix \cite{adhikary2012joint}.
Thus, $\Um_k$, i.e., the matrix of eigenvectors of the channel covariance of a user $k$ with angle of arrival $\theta_k$ and angular spread $\Delta_k$,
is well approximated  by a matrix $\Fm_k^{\rm u}$, formed by a subset of the columns of the DFT  $\Fm$,
where the subscript denotes the user index $k$ and the superscript is to indicate  that this matrix is specific for user $k$.
In a similar fashion, $\Fm_g^{\rm grp}$ denotes the DFT-based group eigenspace $\Vm_g^*$ for a particular group $g$.

Letting $\Fm = [\fv_{-\frac{MN}{2} + 1}\ \fv_{-\frac{MN}{2} + 2}\ \ldots \fv_{-1}\ \fv_0\ \fv_1\ \ldots \fv_{\frac{MN}{2}}]$ denote the $MN \times MN$ DFT matrix,
the vector $\fv_i$ denotes the Fourier vector corresponding to frequency $\frac{i}{MN}$.
The matrices $\Fm_k^{\rm u}$ and $\Fm_g^{\rm grp}$) are formed by blocks DFT columns corresponding to adjacent frequencies, i.e., they take on the form
\begin{equation}
[\fv_{l} \fv_{l + 1} \ldots \fv_{u}],
\end{equation}
from some interval of DFT frequencies corresponding to the integer indices in $[l, u]$.
From the analysis in \cite{adhikary2012joint}, we know that $\Fm_k^{\rm u}$
for a certain $(\theta_k,\Delta_k)$  contains the DFT frequencies with indices in $\Fc^u_k = [l_k, u_k]$, with
\begin{equation} \label{lu-limits}
\left \{ \begin{array}{l}
l_k = \lfloor -M N D \sin(\theta_k + \Delta_k) \rfloor \\
u_k = \lceil -M N D \sin(\theta_k - \Delta_k) \rceil \end{array}  \right .
\end{equation}
For the group eigenspace, we denote the frequency index interval forming $\Fm_g^{\rm grp}$
by  $\Fc_g^{\rm grp} = \{L_g,L_g+1,\ldots,U_g\}$, for some $L_g, U_g$ suitably defined (see later).

With the notations in place, the fixed quantization grouping scheme assigns user $k$ to group $g$ if
\begin{eqnarray} \label{eqn:simplified-temp}
g &=& {\rm arg} \min_{g'} ||\Um_k^* \Um_k^{* \herm} - \Vm_{g'}^* \Vm_{g'}^{* \herm} ||_F^2 \nonumber\\
&=& {\rm arg} \min_{g'} ||\Fm_{k}^{\rm u} \Fm_{k}^{{\rm u} \herm} - \Fm_{g'}^{\rm grp} \Fm_{g'}^{{\rm grp} \herm}||_F^2 \nonumber\\
&=& {\rm arg} \min_{g'} \left(\bar{r}_k^* + r_{g'}^* - 2 ||\Fm_{k}^{{\rm u} \herm} \Fm_{g'}^{\rm grp} ||_F^2 \right)
\end{eqnarray}
where $\bar{r}_k^*$ and $r_g^*$ denote the number of columns (rank) in $\Fm_{k}^{\rm u}$ and $\Fm_{g'}^{\rm grp}$ respectively.
For the sake of analysis, we assume that $r_{g}^* = r \ \forall\ g$. This reduces the decision rule in (\ref{eqn:simplified-temp}) to
\begin{equation}
g = {\rm arg} \max_{g'} ||\Fm_{k}^{{\rm u} \herm} \Fm_{g'}^{\rm grp} ||_F^2.
\end{equation}
Letting $N \rightarrow \infty$ and normalizing by $1/N$, we obtain
\begin{eqnarray}
\lim_{N \rightarrow \infty} \frac{1}{N} ||\Fm_{k}^{{\rm u} \herm} \Fm_{g'}^{\rm grp} ||_F^2 &=& \lim_{N \rightarrow \infty} \frac{1}{N}  \sum_{m \in \Fc_k^{\rm u}} \sum_{n \in \Fc_g^{\rm grp}} \left| \frac{1}{MN} \sum_{k=-\frac{MN}{2} + 1}^{\frac{MN}{2}} e^{j \frac{2 \pi}{MN} (n - m) k}\right|^2 \nonumber\\
&=& \lim_{N \rightarrow \infty} \frac{1}{N}  \sum_{m \in \Fc_k^{\rm u}} \sum_{n \in \Fc_g^{\rm grp}} \left| \frac{1}{MN} \frac{1 - e^{j \frac{2 \pi}{MN}(n - m) MN}}{1 - e^{j \frac{2 \pi}{MN} (n - m)}} \right|^2 \nonumber\\
&=& \lim_{N \rightarrow \infty} \frac{1}{N}  \sum_{m \in \Fc_k^{\rm u}} \sum_{n \in \Fc_g^{\rm grp}} \delta_{m ,n}
\end{eqnarray}
where $\delta_{m,n}$ is the Kronecker delta function.\footnote{$\delta_{m,n} = \left\{ \begin{array}{cc} 1 & m = n\\ 0 & m\neq n \end{array}\right.$}

For finite $N$, $\sum_{m \in \Fc_k^{\rm u}} \sum_{n \in \Fc_g^{\rm grp}} \delta_{m,n}$ gives the number of identical columns in
$\Fm_{g}^{\rm grp}$ and $\Fm_{k}^{\rm u}$. In the limit $N \rightarrow \infty$, the term $\frac{1}{N} \sum_{m \in \Fc_k^{\rm u}} \sum_{n \in \Fc_g^{\rm grp}} \delta_{m,n}$ reduces to $M \Phi_g^k$, where $\Phi_g^k$ denotes the overlap between the intervals $(\frac{l_k}{MN},\frac{u_k}{MN})$ and $(\frac{L_g}{MN},\frac{U_g}{MN})$.
Hence, (\ref{eqn:simplified-temp}) reduces to
\begin{equation} \label{eqn:overlap-condn}
g = {\rm arg} \max_{g'} \Phi_{g'}^k
\end{equation}
We can further simplify (\ref{eqn:overlap-condn}), leading to a much simpler expression for the user grouping algorithm.
For notational convenience, we make the following change of notations
$a_k = \frac{l_k + u_k}{2MN}, b_k = \frac{u_k - l_k}{2MN}, A_g = \frac{L_g + U_g}{2MN}, B_g = \frac{U_g - L_g}{2MN}$.
The overlap $\Phi_g^k$ between the intervals $(a_k - b_k,a_k + b_k)$ and $(A_g - B_g,A_g + B_g)$ is given by
\begin{equation}
\Phi_g^k = \left\{ \begin{array}{cc} \min(a_k + b_k,A_g + B_g) - \max(a_k - b_k, A_g - B_g) & \min(a_k + b_k,A_g + B_g) - \max(a_k - b_k, A_g - B_g) > 0\\
0 & \min(a_k + b_k,A_g + B_g) - \max(a_k - b_k, A_g - B_g) \leq 0 \end{array}\right.
\end{equation}
Denote $\Delta_{\rm int} = |b_k - B_g|$. We consider two cases:
\begin{itemize}
\item \textbf{Case 1}: $b_k - B_g = \Delta_{\rm int} > 0$.
Considering only the case of $\min(a_k + b_k,A_g + B_g) - \max(a_k - b_k, A_g - B_g) > 0$,  we have
\begin{eqnarray} \label{eqn:case-1}
\Phi_g^k &=& \min(a_k + b_k,A_g + B_g) - \max(a_k - b_k, A_g - B_g) \nonumber\\
&=& \min(a_k + B_g + \Delta_{\rm int},A_g + B_g) - \max(a_k - B_g - \Delta_{\rm int}, A_g - B_g) \nonumber\\
&=& 2 B_g + \min(a_k + \Delta_{\rm int},A_g) - \max(a_k - \Delta_{\rm int}, A_g) \nonumber\\
&=& \left\{ \begin{array}{cc}
2 B_g + A_g - a_k + \Delta_{\rm int} & A_g < a_k - \Delta_{\rm int}\\
2 B_g & a_k - \Delta_{\rm int} \leq A_g \leq a_k + \Delta_{\rm int}\\
2 B_g + a_k - A_g + \Delta_{\rm int} & A_g > a_k + \Delta_{\rm int}
\end{array}\right.
\end{eqnarray}
\item \textbf{Case 2}: $B_g - b_k = \Delta_{\rm int} > 0$
Proceeding similarly as Case 1, we get
\begin{eqnarray} \label{eqn:case-2}
\Phi_g^k &=& 2 b_x + \min(a_k,A_g + \Delta_{\rm int}) - \max(a_k, A_g - \Delta_{\rm int}) \nonumber\\
&=& \left\{ \begin{array}{cc}
2 b_k + A_g - a_k + \Delta_{\rm int} & a_k > A_g + \Delta_{\rm int}\\
2 b_k & A_g - \Delta_{\rm int} \leq a_k \leq A_g + \Delta_{\rm int}\\
2 b_k + a_k - A_g + \Delta_{\rm int} & a_k < A_g - \Delta_{\rm int}
\end{array}\right.
\end{eqnarray}
\end{itemize}

From (\ref{eqn:case-1}) and (\ref{eqn:case-2}), it is easy to see that, for a given $k$,
\begin{equation} \label{eqn:overlap-2}
\max_g \Phi_g^k = \min_g |A_g - a_k|
\end{equation}
Using expressions (\ref{lu-limits}) for $l_k$ and $u_k$ as $N \rightarrow \infty$ we have:
\begin{eqnarray}
a_k &=& \lim_{N \rightarrow \infty} = \frac{l_k + u_k}{MN} \nonumber\\
&=& \frac{-D \sin(\theta_k + \Delta_k) + (-D \sin(\theta_k - \Delta_k))}{2} \nonumber\\
&=& -D \sin(\theta_k) \cos(\Delta_k),
\end{eqnarray}
reducing (\ref{eqn:overlap-2}) to
\begin{equation} \label{eqn:overlap-final}
\max_g \Phi_g^k = \min_g |A_g - (-D \sin(\theta_k) \cos(\Delta_k))|.
\end{equation}

\begin{example} \label{example:user-grouping}
Let us assume $\theta_k \in (-60^o,60^o)$, $\Delta_k \in (5^o,15^o)$ and $G = 4$. $L_g = \lfloor M N \frac{g-1}{G} - \frac{MN}{2}\rfloor$ and $U_g = \lceil M N \frac{g}{G} - \frac{MN}{2} \rceil$. Thus, we have $A_g = \frac{g - \frac{1}{2}}{G} - \frac{1}{2}$. Figure \ref{fig:cluster-ex-limit} shows the partition of the $(\theta_k,\Delta_k)$ space into $G = 4$ groups using (\ref{eqn:overlap-final}). The different colors indicate the different groups.

\begin{figure}[ht]
\centerline{\includegraphics[width=8cm]{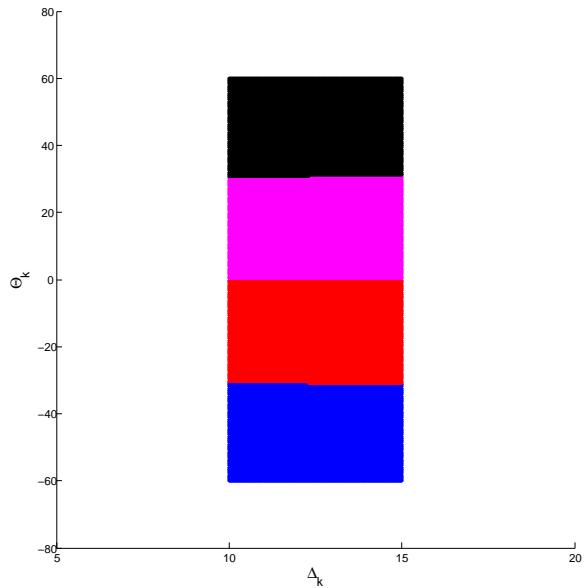}} \caption{Illustration of user grouping in the large system limit. 'black' denotes $g = 1$, 'magenta' denotes $g = 2$ and 'red' and 'blue' denote $g = 3$ and $4$.} \label{fig:cluster-ex-limit}
\end{figure}
\end{example}

%%%%%%%%%%%%%%%%%%%%%%%%%%%%%%%%%%%%%%%%%%
\subsection{Probabilistic user selection}  \label{sec:prob-user-selection}

We assume that users are arranged into  $K$ co-located ``user subgroups'' with $N$ users in each location.
Notice that this assumption is made for analytical convenience, and corresponds to the quantization of the user spatial distribution into
a number of discrete points in the coverage area. In this case, if the user locations correspond, on average, to an area $A = \frac{\mbox{total coverage area}}{K} ($m$^2$),
then $N/A$ is the user average density (users/m$^2$). For a fixed coverage area, both the number of BS antennas and the user density
grows to infinity, such that the total number of users per BS antenna is fixed and equal to $K/M$.  Users in each subgroup $k$
are statistically equivalent,  with common covariance matrix $\Rm_k(\theta_k,\Delta_k)$ that depends only on the location (AoA) and local scattering (AS).
We define a group as a collection of subgroups, obtained by the application of the user grouping algorithm.
The number of subgroups forming a group $g$ is indicated by $K_g$, such that $K = \sum_g K_g$.
Defining the $M N \times N$ channel matrix of a user subgroup $g_k$, i.e., the $k$-th subgroup of the $g$-th group,
as $\bar{\Hm}_{g_k}$, we have $\Hm_g$, the effective channel matrix of group $g$ as $\Hm_g = [\bar{\Hm}_{g_1} \ldots \bar{\Hm}_{g_{K_g}}]$.
As a result, the received signal vector for users in subgroup $g$ served by JSDM with PGP can be written as
\begin{equation}
\yv_g = \begin{pmatrix} \bar{\Hm}_{g_1}^\herm \Bm_g\\
 \vdots\\
\bar{\Hm}_{g_{K_g}}^\herm \Bm_g \end{pmatrix} \Pm_g \dv_g + \left(\sum_{g'=1,g' \neq g}^G \begin{pmatrix} \bar{\Hm}_{g_1}^\herm \Bm_{g'}\\
 \vdots\\
\bar{\Hm}_{g_{K_g}}^\herm \Bm_{g'} \end{pmatrix} \Pm_{g'} \dv_{g'} \right) + \zv_g
\end{equation}
Note that $\Bm_g$ is $MN \times b_gN$, $\Hm_g$ is $MN \times K_g N$ and $\Pm_g$ is $b_gN \times S_gN$, i.e., the dimensions of all matrices
scale linearly with $N$.

The BS now needs to perform downlink scheduling, i.e., selecting a subset of $S_g N \leq \min(b_g N, K_g N)$  out of $K_g N$ users in order to serve them in a particular time-frequency slot. The scheduling problem is formulated as the maximization of a strictly increasing and concave
{\em Network Utility Function} $\Gc(\cdot)$, representing some notion of ``fairness'' over the set of achievable ergodic rates, or ``throughputs''.
Since users within a subgroup are statistically equivalent, they all have the same throughput.
Denoting the normalized~\footnote{Since the number of downlink streams scales linearly with $N$, the sum rate per group also
increases linearly in $N$. Then, in order to obtain meaningful limits and a meaningful
network utility maximization problem, we divide the sum rate per group by $N$.}
throughput of a subgroup $k$ as $\bar{R}_k$, i.e., the sum rate of the users in subgroup $k$ divided by $N$,
the scheduling problem reduces to solving the following optimization:
\begin{eqnarray} \label{eqn:optz}
{\rm maximize} && \Gc(\bar{R}_1,\ldots,\bar{R}_K)\nonumber\\
{\rm subject\ to} && \bar{R}_k \in \bar{\Rc}\  \forall\ k \in \{1,2,\ldots,K\}
\end{eqnarray}
where $\bar{\Rc}$ is the system throughput region.
For the sake of simplicity, we focus on ZFBF multiuser precoding  with equal power allocated to each downlink data  stream, such that
$\bar{\Rc}$ denotes the throughput region obtained  using such scheme.

With zero forcing precoding, the BS can serve a maximum of $\min\{K_g N, b_g N\}$ users in a given time frequency slot in group $g$.
In realistic scenarios of interest, $K_g > b_g$ for all groups $g$.
This condition can be always verified assuming that the system has a sufficiently large number of users and
by discretizing sufficiently finely the user spatial distribution such that the number of co-located subgroups in each group is large enough.\footnote{If
the number of users is not large enough,  the system achievable sum rate is limited by the number of users,  i.e., the system is not fully loaded.
Instead, here we are interested in a fully loaded system where the biting constraint is given by the number of possible
downlink streams, not by the number of users. Hence, we assume to have a large enough number of users in order to exploit all the available JSDM
multiplexing gain.}
Therefore, the BS should select a subset of users, not larger than $\sum_{g=1}^G b_g N$ to be served
at each time slot.
In order to find the optimal subset, an exhaustive search over all possible subsets of size less than or equal to $\sum_{g=1}^G b_g N$ over $KN$ users
is required. This search is combinatorial and it would be infeasible for a large number of users.
Furthermore, it requires all users to feed back their effective channel vectors, which may be very costly when
the number of users is much larger than the number of possible downlink streams, i.e., the number of {\em active} users on each time-frequency slot.
Hence, following the idea proposed in \cite{huh2012network}, we make the following simplifying assumption in order to make the problem
in (\ref{eqn:optz}) analytically tractable and, at the same time, obtain a practical probabilistic user selection scheme
that achieves good results at moderate complexity and requires only the active users (users effectively scheduled) to feed back their channel state information.
In this scheme,  the BS selects $\gamma_kN$ users in each subgroup $k$ by random selection, independent of the channel matrix realization, where
$\gamma_k \in [0,1]$ is referred to as the ``fraction'' of active users.
Notice that while $\gamma_k$ is a constant that depends only on the system statistics,
the set of active users changes randomly at each slot and is uniformly distributed over all
possible ${ N \choose {\gamma_kN}}$ active user subsets. Then, only these selected users feed back their effective channel and
the BS serves them using JSDM-ZFBF (i.e., ZFBF in each group, irrespectively of the inter-group interference).

Denoting by $P^u$ the per-stream downlink power (same for all streams)
and noticing that the number of downlink streams per group is $S_g N = \sum_{k =1}^{K_g} \gamma_k N$, we have the constraints
\begin{eqnarray}
S_g = \sum_{k =1}^{K_g} \gamma_{g_k} & \leq & b_g \;\; \forall \; g = 1,\ldots, G   \label{bg-constraint} \\
N P^u  \sum_{g=1}^G S_g & \leq & P,    \label{total-power-contstraint}
\end{eqnarray}
where $P$ is the total power of the BS.
The channel matrices $\bar{\Hm}_{g_k}$ are now functions of $\gamma_{g_k}$, i.e., $\bar{\Hm}_{g_k} = \bar{\Hm}_{g_k}(\gamma_{g_k})$
and have dimension $M N \times \gamma_{g_k} N$.
In order to find an expression for the instantaneous rate of a generic $n$-th active user in
subgroup $k$ of group $g$, we need to to find a convenient asymptotic expression for its SINR, given by
\begin{equation} \label{eqn:SINR}
{\rm SINR}_{g_k}^n = \frac{P^u | \bar{\hv}_{g_k,n}^\herm \Bm_g \Km_g \Bm_g^\herm \bar{\hv}_{g_k,n} |^2 }
{1 + \sum_{g' =1,g' \neq g}^G \sum_{l \in \Sc_{g'}}\sum_{m = 1}^N P^u | \bar{\hv}_{g_k,n}^\herm \Bm_{g'} \Km_{g'} \Bm_{g'}^\herm \bar{\hv}_{g'_l,m} |^2}
\end{equation}
where $\bar{\hv}_{g_k,m}$ denotes the $m$-th column of $\bar{\Hm}_{g_k}$, the ZFBF precoding matrix for group $g$ is given by
\[ \Pm_g = \zeta_g \Bm_g^\herm \Hm_g \left ( \Hm_g^\herm \Bm_g \Bm_g^\herm \Hm_g \right )^{-1} \]
where $\zeta_g$ is a power normalization factor to be defined later, and where we let
 \begin{equation}
 \Km_g = \zeta^2_g \Bm_g^\herm \Hm_g \left(\Hm_g^\herm \Bm_g \Bm_g^\herm \Hm_g \right)^{-2} \Hm_g^\herm \Bm_g,
\end{equation}
Notice that in the SINR denominator we have only the contribution of the inter-group interference, since the intra-group interference is completely eliminated by
ZFBF precoding.
In order to determine $\zeta_g$, notice that the total transmit power for group $g$ is given by $P^u NS_g$, with
$S_g = \sum_{k=1}^{K_g} \gamma_{g_k}$ as defined before,  and where $NS_g$ is the number of downlink data streams served to group $g$ using spatial multiplexing.
Then, we have
\begin{eqnarray}
\trace{\left(\Bm_g \Pm_g \EE[ \dv_g \dv_g^\herm] \Pm_g^\herm \Bm_g^\herm \right)} &=& P^u N S_g \nonumber\\
\Longrightarrow \zeta^2 \trace{\left(\Bm_g \Bm_g^\herm \Hm_g \left(\Hm_g^\herm \Bm_g \Bm_g^\herm \Hm_g \right)^{-2} \Hm_g^\herm \Bm_g \Bm_g^\herm \right)}
&=& N S_g \nonumber\\
\Longrightarrow \zeta_g^2 &=& \frac{N S_g}{\trace{\left(\Bm_g \Bm_g^\herm \Hm_g \left(\Hm_g^\herm \Bm_g \Bm_g^\herm \Hm_g \right)^{-2} \Hm_g^\herm \Bm_g \Bm_g^\herm \right)}}. \label{zetaminchia}
\end{eqnarray}
When $\Bm_g$ is tall unitary (e.g., with the choice $\Bm_g = \Fm^{\rm grp}_g$ discussed before, (\ref{zetaminchia}) simplifies to
\[ \zeta_g^2 = \frac{N S_g}{\trace{\left(\Hm_g^\herm \Bm_g \Bm_g^\herm \Hm_g \right)^{-1} }}. \]
In the limit of $N \rightarrow \infty$, the terms ${\rm SINR}_{g_k}^n$ for all users in the same subgroup converge to the same deterministic quantity,
that depends only on the subgroup index $g_k$ \cite{wagner2012large}, \cite{huh2012network}
\begin{equation}
{\rm SINR}_{g_k}^n \stackrel{N \rightarrow \infty}{\longrightarrow} {{\rm SINR}_{g_k}^o}
\end{equation}
As a result, the achievable normalized throughput  fora  subgroup $k$ of group $g$  is given by $\bar{R}_{g_k} = \gamma_{g_k} \log (1 + {{\rm SINR}_{g_k}^o})$,
reducing the optimization problem (\ref{eqn:optz}) to
\begin{eqnarray} \label{eqn:optz-2}
{\rm maximize} && \Gc\left ( \left \{\gamma_{g_k} \log (1 + {{\rm SINR}_{g_k}^o}) \; : \; g  = 1,\ldots,G, \; k = 1,\ldots, K_g \right \} \right )\nonumber \\
{\rm subject\  to} && S_g = \sum_{k =1}^{K_g} \gamma_{g_k} \leq b_g \nonumber\\
&& 0 \leq \gamma_{g_k} \leq 1 \ \forall \ g = 1,\ldots,G, \ k  = 1, \ldots, K_g.
\end{eqnarray}
with respect to the optimization variables $\{\gamma_{g_k}\}$, which define how many downlink streams should be dedicated to
each user subgroup by the JSDM scheme.

In the next section, we provide methods to calculate ${{\rm SINR}_{g_k}^o}$ using the technique
of \cite{wagner2012large} and propose a greedy user selection algorithm to calculate the fractions $\gamma_{g_k}$
in order to provide a good heuristic solution for the non-convex problem (\ref{eqn:optz-2}), which mimics the well-known
greedy user selection for the combinatorial optimization of the finite-dimensional sum rate of the ZFBF multiuser precoding \cite{dimic2005downlink}.

%%%%%%%%%%%%%%%%%%%%%%%%%%%%%%%%%%%%%%%%%%%%%%%
\subsection{Analysis}  \label{asym-analysis}

Following the analysis technique developed in \cite{wagner2012large}, and widely used in our previous work \cite{adhikary2012joint},
the sought SINR limit ${{\rm SINR}_{g_k}^o}$ is given by
\begin{equation}  \label{limit-SINR}
{{\rm SINR}_{g_k}^o} = \frac{{\zeta_g^o}^2 P/S}{1 + \sum_{g' = 1,g' \neq g}^G {\zeta_{g'}^o}^2 \Upsilon_{g',g_k}^o P/S},
\end{equation}
where $P$ indicates the total transmit power and we define $S = \sum_{g=1}^G S_g$ to be the total number of downlink data streams across all groups,
normalized by $N$.  The expressions of $\Upsilon_{g',g_k}^o$ and ${\zeta_g^o}^2$ are obtained by a sequence of converging approximations for increasing (finite) $N$,
given as follows:
\begin{eqnarray}
%\Upsilon_{g,g_k} &=& \frac{1}{N b_g} \left[ \sum_{l \in \Sc_g} \frac{N \gamma_{g_l} n_{g_k,g_l}^o P_{g_l}}{(m_{g_l}^o)^2} - \frac{n_{g_k,g_k}^o P_{g_k}}{(m_{g_k}^o)^2} \right] \nonumber\\
\Upsilon_{g',g_k} &=&  \frac{1}{b_g} \sum_{l = 1}^{K_{g'}} \frac{\gamma_{g'_l} n_{g_k,g'_l}^o}{(m_{g'_l}^o)^2} \nonumber\\
%\nv_{g_k} &=& \left( \Id_{K_g} - \Jm_g \right)^{-1} \vv_{g_k} \nonumber\\
\nv'_{g_k} &=& \left( \Id_{K_{g'}} - \Jm_{g'} \right)^{-1} \vv'_{g_k} \nonumber\\
\{\Jm_g\}_{i,j} &=& \frac{\frac{\gamma_{g_i}}{b_g} \trace(\bar{\Rm}_{g_i} \Tm_g \bar{\Rm}_{g_j} \Tm_g)}{N b_g (m_{g_k}^o)^2} \nonumber\\
%\{\vv_{g_k}\}_i &=& \frac{1}{N b_g} \trace(\bar{\Rm}_{g_i} \Tm_g \bar{\Rm}_{g_k} \Tm_g) \nonumber\\
\{\vv'_{g_k}\}_i &=& \frac{1}{N b_g} \trace(\bar{\Rm}_{g'_i} \Tm_{g'} \Bm_{g'}^\herm \Rm_{g_k} \Bm_{g'} \Tm_{g'}) \nonumber\\
{\zeta_g^o}^2 &=& \frac{S_g}{\Gamma_g^o} \nonumber\\
\Gamma_g^o &=&  \frac{1}{b_g} \sum_{k  =1}^{K_g} \frac{\gamma_{g_k} q_{g,k}}{(m_{g_k})^2}  \nonumber\\
\qv_g &=& \left( \Id_{K_g} - \Jm_g \right)^{-1} \vv_{g} \nonumber\\
\{\vv_{g}\}_i &=& \frac{1}{N b_g} \trace(\bar{\Rm}_{g_i} \Tm_g \Bm_g^\herm \Bm_g \Tm_g)
\end{eqnarray}
where $m_{g_k}^o$ is given by the solution of a set of fixed point equations
\begin{eqnarray}
m_{g_k}^o &=& \frac{1}{N b_g} \trace(\bar{\Rm}_{g_k} \Tm_g)\nonumber\\
\Tm_g &=& \left( \Id_{b_g N} + \frac{1}{b_g} \sum_{l = 1}^{K_g} \frac{\gamma_{g_l}}{m_{g_l}^o} \bar{\Rm}_{g_l} \right)^{-1},
\end{eqnarray}
and $\bar{{\Rm}}_{g_k} = \Bm_g^\herm {\Rm}_{g_k} \Bm_g \ \forall \ g = 1,\ldots,G, \; k = 1,\ldots, K_g$,
$\nv'_{g_k} = [n_{g_k,g'_1}^o,\ldots,n_{g_k,g'_{K_{g'}}}^o]^T$, $\qv_g = [q_{g,1},\ldots,q_{g,K_g}]^T$.

From \cite{adhikary2012joint}, we have that as $N \rightarrow \infty$, the covariance matrix $\Rm_{g_k}$,
for the users in the $k$-th subgroup of the $g$-th group can be approximated as
\begin{equation}
\Rm_{g_k} = \Fm_{g_k} \bar{\Lambdam}_{g_k} \Fm_{g_k}^\herm
\end{equation}
where $\Fm_{g_k}$ is composed of a subset of columns of the DFT matrix $\Fm$, and $\bar{\Lambdam}_{g_k}$ is a diagonal matrix containing the eigenvalues obtained by using the Toeplitz-Circulant approximation (see   \cite{adhikary2012joint}  for details).
Denoting the AoA and AS for this specific subgroup as $\theta_{g_k}$ and  $\Delta_{g_k}$, respectively, we have
\begin{equation} \label{eqn:eigvec-lim}
\Fm_{g_k} = \left[ \fv_{l_{g_k}} \fv_{l_{g_k} + 1} \ldots \fv_{u_{g_k}} \right]
\end{equation}
with
\[ \left \{ \begin{array}{l}
l_{g_k} = \lfloor -M N D \sin(\theta_{g_k} + \Delta_{g_k}) \rfloor \\
u_{g_k} = \lceil -M N D \sin(\theta_{g_k} - \Delta_{g_k}) \rceil . \end{array} \right . \]
Also, $\bar{\Lambdam}_{g_k}$ is given as \cite{adhikary2012joint}
\begin{equation} \label{eqn:eigval-lim}
\bar{\Lambdam}_{g_k} = \frac{1}{2 \Delta_{g_k}} {\rm diag} \left( \frac{1}{\sqrt{D^2 - (l_{g_k}/MN)^2}}, \frac{1}{\sqrt{D^2 - ((l_{g_k} + 1)/MN)^2}}, \ldots, \frac{1}{\sqrt{D^2 - (u_{g_k}/MN)^2}} \right)
\end{equation}
Since we choose the group subspaces to be the span of blocks of mutually orthogonal columns of the DFT matrix $\Fm$,
there is no need for BD and we just  consider DFT pre-beamforming $\Bm_{g'} = \Fm_{g'}^{\rm grp}$.
Hence, for large $N$ we have
\begin{eqnarray}
\Bm_{g'}^\herm \Rm_{g_k} \Bm_{g'} &=& \Fm_{g'}^{\rm grp \herm} \Fm_{g_k} \bar{\Lambdam}_{g_k} \Fm_{g_k}^\herm \Fm_{g'}^{\rm grp} \nonumber\\
&=& \begin{split} \frac{1}{2 \Delta_{g_k}} {\rm diag} \left( \underbrace{0,\ldots,0}_{L_{g'} - \max(L_{g'},l_{g_k}) - 1}, \right. &\left. \frac{1}{\sqrt{D^2 - (\max(L_{g'},l_{g_k})/MN)^2}}, \frac{1}{\sqrt{D^2 - ((\max(L_{g'},l_{g_k}) + 1)/MN)^2}}, \ldots, \right. \\ &\left. \frac{1}{\sqrt{D^2 - (\min(U_{g'},u_{g_k})/MN)^2}}, \underbrace{0,\ldots,0}_{U_{g'} - \min(U_{g'},u_{g_k}) - 1}\right) \end{split}
\end{eqnarray}
In the limit of large $N$,  the set $\{\frac{\max(L_{g'},l_{g_k})}{MN}, \frac{\max(L_{g'},l_{g_k})+1}{MN}, \ldots, \frac{\min(U_{g'},u_{g_k})}{MN} \}$
corresponds to an interval $(a_{g_k}^{g'},b_{g_k}^{g'}) \subset (0,1)$ on the DFT ``frequency'' axis, which becomes the continuous interval $(-\frac{1}{2},\frac{1}{2})$ for $N \rightarrow \infty$.
Define a function $f(g_k,g',x)$ for $x \in (-\frac{1}{2},\frac{1}{2})$ corresponding to the terms of the form $\Bm_{g'}^\herm \Rm_{g_k} \Bm_{g'}$. We have
\begin{equation} \label{fff}
f(g_k,g',x) = \left\{ \begin{array}{cc}
\frac{1}{2 \Delta_{g_k}} \frac{1}{\sqrt{D^2 - x^2}} & x \in (a_{g_k}^{g'},b_{g_k}^{g'}) \\
0 & {\rm elsewhere}
\end{array}\right.
\end{equation}
Based on (\ref{fff}), expressions involving the trace of $\Bm_{g'}^\herm \Rm_{g_k} \Bm_{g'}$ or functions of $\Bm_{g'}^\herm \Rm_{g_k} \Bm_{g'}$
reduce to one-dimensional integrals over the interval $(-\frac{1}{2},\frac{1}{2})$.  For the sake of clarity, consider the following example:
\begin{equation}
\lim_{N \rightarrow \infty} \frac{1}{N b_g} \trace(\bar{\Rm}_{g_k}) = \int_{-\frac{1}{2}}^{\frac{1}{2}} f(g_k,g,x) dx.
\end{equation}
Following this observation, we arrive at a much simplified set of equations to calculate ${{\rm SINR}_{g_k}^o}$ in (\ref{limit-SINR}),
given directly in terms of the limit for $N \rightarrow \infty$, and not just as a sequence of convergent approximations for increasing $N$ as before.
In this case,  $\Upsilon_{g',g_k}^o$ and ${\zeta_g^o}^2$ in (\ref{limit-SINR}) are given by
\begin{eqnarray}
%\Upsilon_{g,g_k} &=& P^u \frac{1}{b_g} \sum_{l \in \Sc_g} \frac{\gamma_{g_l} n_{g_k,g_l}^o }{(m_{g_l}^o)^2} \nonumber\\
\Upsilon_{g',g_k} &=& \frac{1}{b_g}  \sum_{l  =1}^{K_{g'}} \frac{\gamma_{g'_l} n_{g_k,g'_l}^o }{(m_{g'_l}^o)^2} \nonumber\\
%\nv_{g_k} &=& \left( \Id_{K_g} - \Jm_g \right)^{-1} \vv_{g_k} \nonumber\\
\nv'_{g_k} &=& \left( \Id_{K_{g'}} - \Jm_{g'} \right)^{-1} \vv'_{g_k} \nonumber\\
\{\Jm_g\}_{i,j} &=& \frac{\frac{\gamma_{g_i}}{b_g} \int_{-\frac{1}{2}}^{\frac{1}{2}} \frac{f(g_i,g,x) f(g_j,g,x)}{h(g,x)^2} dx}{(m_{g_k}^o)^2} \nonumber\\
%\{\vv_{g_k}\}_i &=& \int_0^1 \frac{f(g_i,g,x) f(g_k,g,x)}{h(g,x)^2} dx \nonumber\\
\{\vv'_{g_k}\}_i &=& \int_{-\frac{1}{2}}^{\frac{1}{2}} \frac{f(g'_i,g',x) f(g_k,g',x)}{h(g',x)^2} dx \nonumber\\
{\zeta_g^o}^2 &=& \frac{S_g}{\Gamma_g^o} \nonumber\\
\Gamma_g^o &=& \frac{1}{b_g} \sum_{k =1}^{K_g} \frac{\gamma_{g_k} }{(m_{g_k})^2} q_{g,k} \nonumber\\
\qv_g &=& \left( \Id_{K_g} - \Jm_g \right)^{-1} \vv_{g} \nonumber\\
\{\vv_{g}\}_i &=& \int_{-\frac{1}{2}}^{\frac{1}{2}} \frac{f(g_i,g,x)}{h(g,x)^2} dx
\end{eqnarray}
and $m_{g_k}^o$ given by the solution of a set of fixed point integral equations
\begin{eqnarray}
m_{g_k}^o &=& \int_{-\frac{1}{2}}^{\frac{1}{2}} \frac{f(g_k,g,x)}{h(g,x)} dx \nonumber\\
h(g,x) &=& 1 + \frac{1}{b_g} \sum_{l =1}^{K_g} \frac{\gamma_{g_l}}{m_{g_l}^o} f(g_l,g,x),
\end{eqnarray}
Having an efficient method for calculating the users SINR for given total power $P$ and user group ``fractions'' $\{\gamma_{g_k}\}$,
we propose the following greedy approach to find a good heuristic solution to the non-convex
optimization problem (\ref{eqn:optz-2}), where optimization is with respect to the variables $\{\gamma_{g_k}\}$.

\paragraph{Greedy Algorithm for optimizing the user fractions $\gamma_{g_k}$} \label{algo:greedy-usr-sel}
The greedy algorithm considers incrementing the user fractions in small steps $\delta \gamma$, until the objective function cannot be increased further. We start with $\gamma_{g_k} = 0$ for all subgroups within every group and find $g_k$ such that incrementing the user fraction $\gamma_{g_k}$ by $\delta \gamma$ yields the largest possible increase in the objective function. This procedure is repeated until the objective function cannot be increased further. We denote an iteration by $i$ and, with some abuse of notation,  the objective function as $\Gc(\gammav)$, where $\gammav$ is the vector of all optimization variables $\{\gamma_{g_k}\}$.
\begin{itemize}
\item \textbf{Step 1}: Initialize $i = 0$, $\gamma_{g_k}^{(i)} = 0 \ \ \forall \; g =1,\ldots,G, \; k  = 1, \ldots, K_g$ and $\Gc(\gammav^{(i)}) = \Gc(\zerov)$.
\item \textbf{Step 2}: For $\delta \gamma \ll 1$, set $\gammav_{g_k} = \gammav^{(i)} + \delta \gamma \ev_{g_k}$, where $\ev$ is a vector containing all zeros but a 1 corresponding to the $k$-th subgroup of group $g$. Obtain the corresponding value of the objective function $\Gc(\gammav_{g_k})$  for all pairs
$(g, k) : g  = 1,\ldots,G, \; k = 1,\ldots, K_g$ such that $\gamma_{g_k} \leq 1$ and $\sum_{k =1}^{K_g} \gamma_{g_k} \leq b_g\ \ \forall g$.
For the pairs $(g,k)$ for which the conditions are not satisfied, set $\Gc(\gammav_{g_k}) = \Gc(\gammav^{(i)})$.
If no such pair can be found, then set $\gammav = \gammav^{(i)}$ and exit the algorithm.
\item \textbf{Step 3}: Compute $(\hat{g},\hat{k}) = {\rm arg} \max_{g  = 1,\ldots,G, \; k = 1, \ldots, K_g} \Gc(\gammav_{g_k})$ and set $\Gc(\gammav^{(i+1)})
= \Gc(\gammav_{\hat{g}_{\hat{k}}})$ and $\gammav^{(i+1)} = \gammav_{\hat{g}_{\hat{k}}}$.
\item \textbf{Step 4}: If $\Gc(\gammav^{(i+1)}) > \Gc(\gammav^{(i)})$, increment $i$ by 1 and go to Step 2,
else set $\gammav = \gammav^{(i)}$ and exit the algorithm.
\end{itemize}

%%%%%%%%%%%%%%%%%%%%%%%%%%%%%%%%%%%%%%%%%%%%%%%%%
\subsection{Results}

\begin{figure}
\centering \subfigure[Pattern 1]{
  \includegraphics[width=8cm]{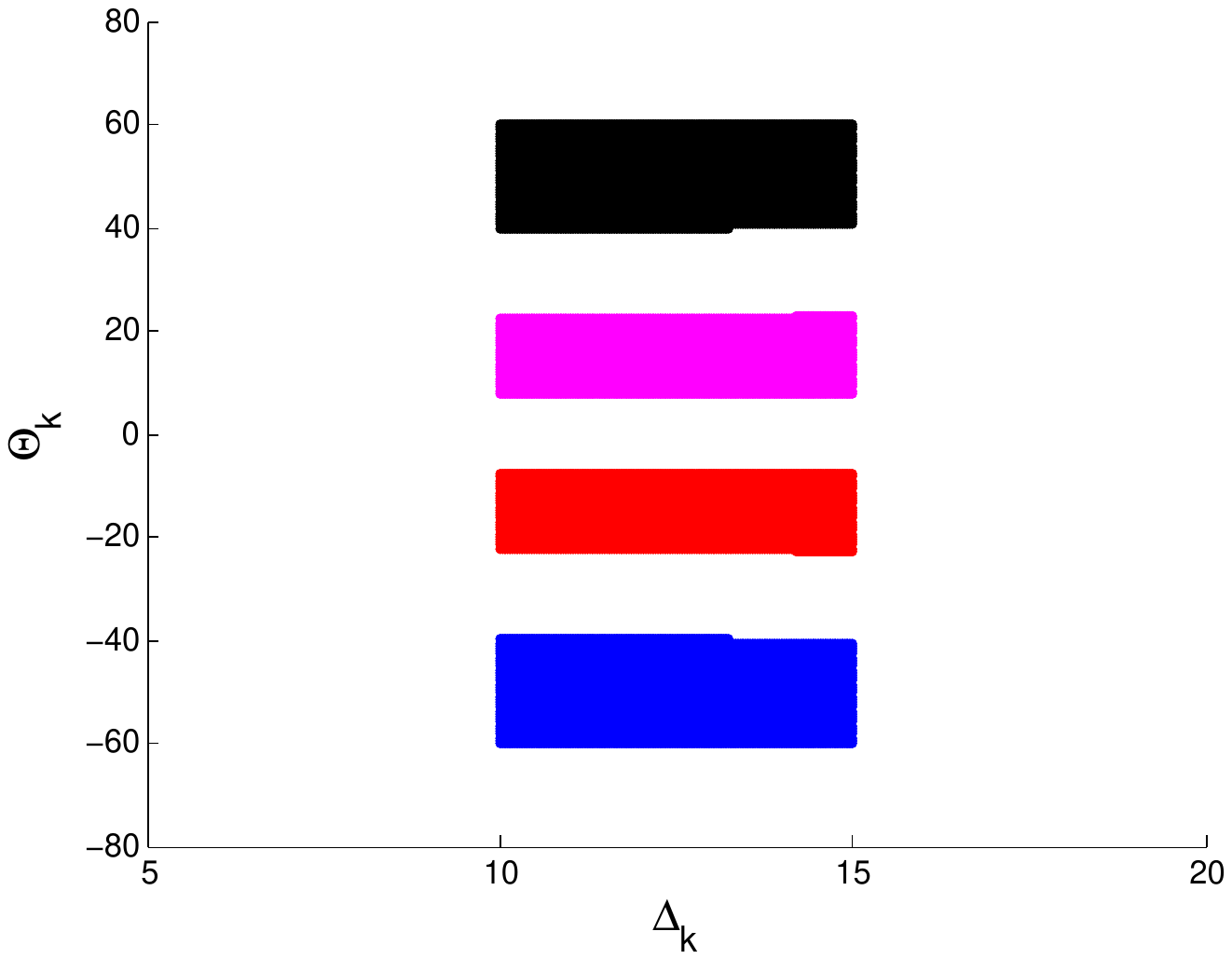}
  \label{fig:pattern-1}
  }
  \subfigure[Pattern 2]{
  \includegraphics[width=8cm]{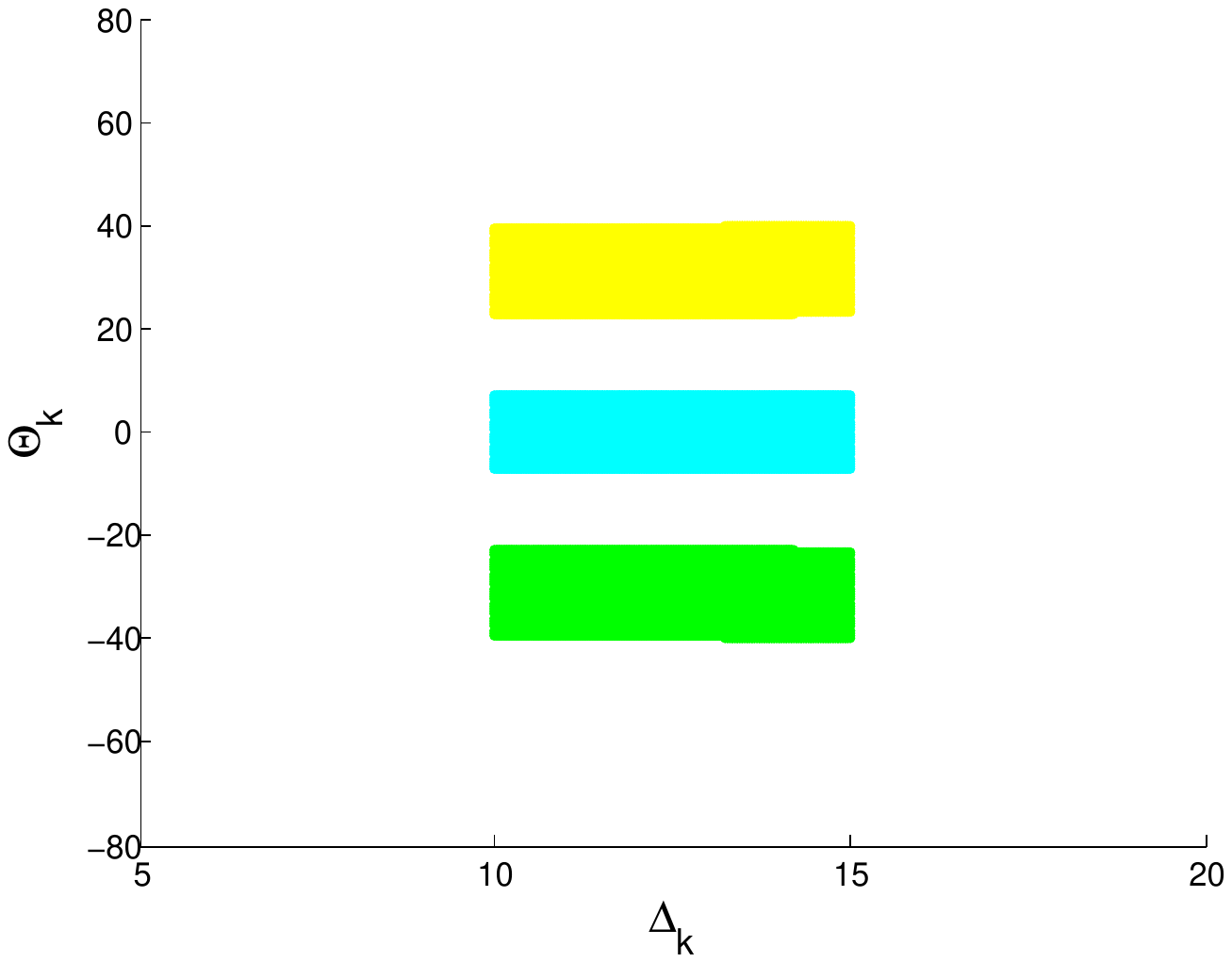}
  \label{fig:pattern-2}
  }
  \caption{Partition of the $\theta-\Delta$ plane into different patterns.
  Within each pattern, there are different groups.}
  \label{fig:patterns}
\end{figure}

We present some numerical results demonstrating the performance of the simplified user grouping algorithm based on quantization of the AoA-AS plane
in conjunction with the proposed probabilistic user selection, for user fractions obtained by greedy optimization as seen before, for different network utility functions $\Gc(\cdot)$. Specifically, we focus on two cases: 1) Proportional fairness scheduling (PFS), corresponding to the choice $\Gc(\bar{R}_1,\ldots,\bar{R}_K) = \sum_k \log \bar{R}_k$; 2)
Sum rate maximization, corresponding to the choice $\Gc(\bar{R}_1,\ldots,\bar{R}_K) = \sum_k \bar{R}_k$.
We assume a uniform distribution for the users' angle of arrival $\theta_k \in (-60^o,60^o)$, and angular spread $\Delta_k \in (5^o,15^o)$, set the number of groups equal to 8 and divide these user groups into two overlapping patterns containing $G = 4$ groups each. Pattern 1 contains the groups 1,3,5 and 7, and pattern 2 contains the groups 2,4,6 and 8. Similar to Example \ref{example:user-grouping}, for pattern 1, we have $A_g = \frac{g-\frac{1}{2}}{G} - \frac{1}{2}$ where $g \in \{1,2,3,4\}$. For pattern 2, we have $A_g = \frac{g}{G} - \frac{1}{2}$ and $g \in \{1,2,3,4\}$.
We partition the user population based on their angles of arrival and angular spreads using the simplified user grouping algorithm described
in Section \ref{sec:user-grouping-simplified}. Figures \ref{fig:pattern-1} and \ref{fig:pattern-2} show the quantization regions in the AoA-AS plane.
After solving (off-line) for the optimal user fractions, we apply the probabilistic user selection scheme of
Section \ref{sec:prob-user-selection} in order to schedule the users within each pattern.
The two patterns are served in orthogonal time-frequency slots, with equal sharing of the transmission resource.

\begin{figure}
\centering \subfigure[Objective vs. $S$]{
  \includegraphics[width=8cm]{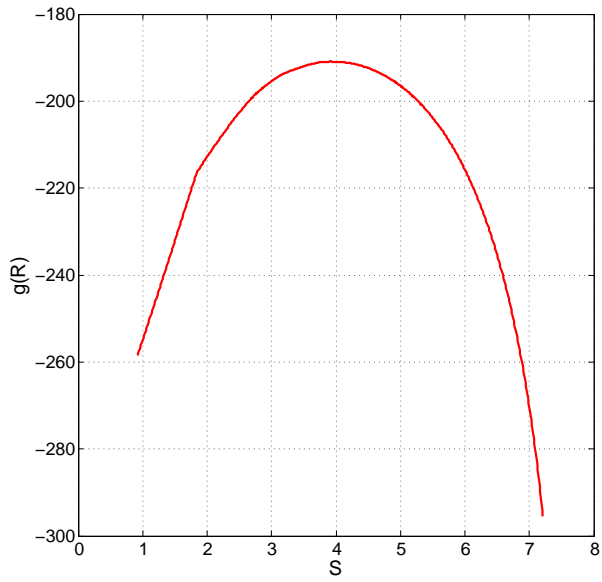}
  \label{fig:gamma-vs-obj-pfs}
  }
  \subfigure[Rate distribution]{
  \includegraphics[width=8cm]{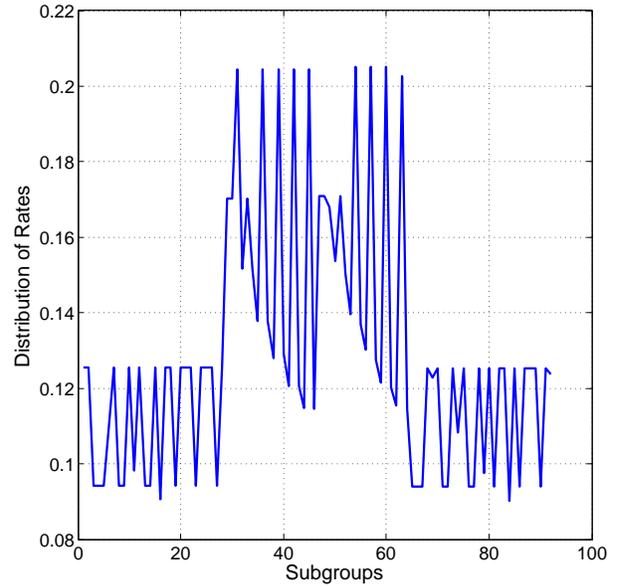}
  \label{fig:rate-dist-pfs}
  }
  \caption{Optimization of user subgroups fractions for proportional fairness scheduling
  in the large system limit, for Pattern 1. $G = 4, b_1 = b_2 = b_3 = b_4 = 2$, $\delta \gamma = 0.01$ and $P = 10$ dB.}
  \label{fig:pfs-example}
\end{figure}

\begin{figure}
\centering \subfigure[Objective vs. $S$]{
  \includegraphics[width=8cm]{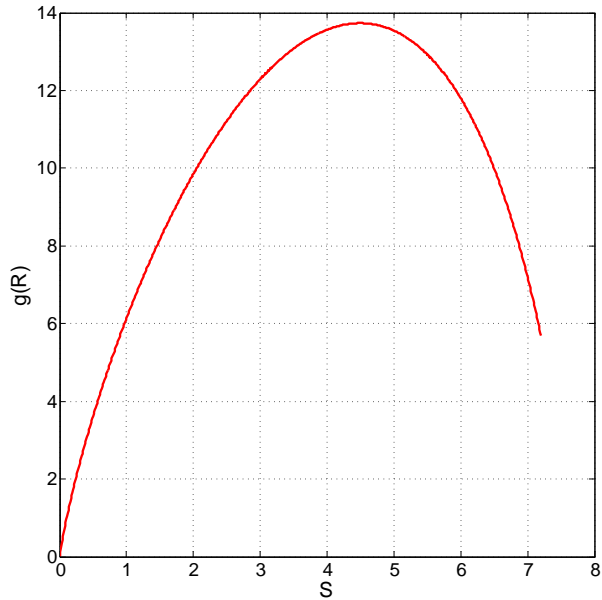}
  \label{fig:gamma-vs-obj-sr}
  }
  \subfigure[Rate distribution]{
  \includegraphics[width=8cm]{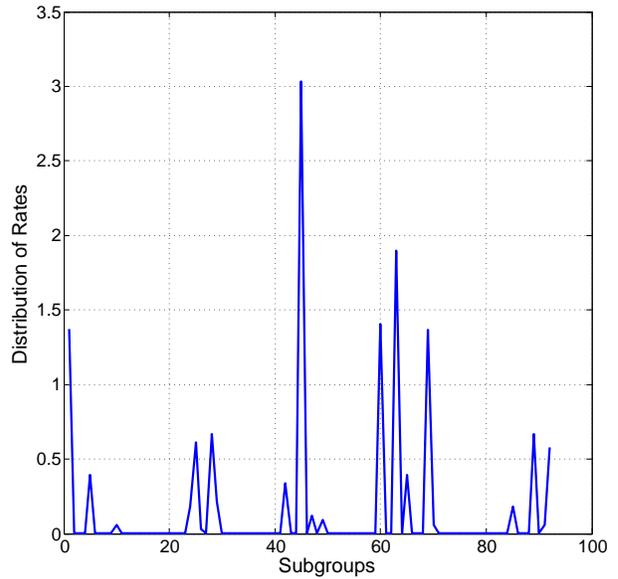}
  \label{fig:rate-dist-sr}
  }
  \caption{Optimization of user subgroups fractions for sum rate maximization in the large system limit, for Pattern 1.
  $G = 4,b_1 = b_2 = b_3 = b_4 = 2$, $\delta \gamma = 0.01$ and $P = 10$ dB.}
  \label{fig:sr-example}
\end{figure}

Figures \ref{fig:gamma-vs-obj-pfs} and \ref{fig:gamma-vs-obj-sr} shows the network utility objective function versus
$S = \sum_{g = 1}^G \sum_{k=1}^{K_g} \gamma_{g_k}$ for proportional fairness  and sum rate maximization for Pattern 1, respectively.
In this example we have $M = 8, G = 4,b_1 = b_2 = b_3 = b_4 = 2$, $\delta \gamma = 0.01$ and $P = 10$ dB.
The optimization is performed by applying the greedy heuristic algorithm while omitting Step 4, in order to find the value of the objective function for increasing $S$ even beyond its maximum, for the sake of illustration. In this case, we terminate the algorithm when no pair $(g,k)$ can be found such that $\gamma_{g_k} \leq 1$
and $\sum_{k  = 1}^{K_g} \gamma_{g_k} \leq b_g\ \ \forall g$.
Figures \ref{fig:rate-dist-pfs} and \ref{fig:rate-dist-sr} show the distribution of the rates in different subgroups under the two considered
network utility functions. In these figures, we plot the normalized rates corresponding to a subgroup versus the subgroup index for pattern 1. We notice that the user rate distribution is fair under PFS whereas for sum rate maximization
only a few subgroups have positive rates, leaving many other users completely starving.

In a practical finite-dimensional system, for given user fractions $\{\gamma_{g_k}\}$,  the users to be scheduled are selected randomly
in the following manner: the BS can transmit a maximum of $b_g N$ independent data streams in each group $g$.
At each slot, within each group $g$, the BS generates $b_g N$ i.i.d. random variables $X_1,\ldots,X_{b_g N}$ taking on values from the set of
integers $\{0,1,\ldots,b_g\}$ such that $\PP (X_{m} = k) = \frac{\gamma_{g_k}}{b_g} \ \forall \ k \neq 0$ and $\PP(X_{m} = 0) = 1 - \sum_{k=1}^{K_g} \frac{\gamma_{g_k}}{b_g}$.
A user in the $k$-th  subgroup of group $g$ is then served by the $m$-th downlink  stream on the current time-frequency
slot if $X_m = k$. The next few results demonstrate the effectiveness of the simplified  user grouping, greedy heuristic for optimization of user fractions and corresponding probabilistic selection.

%\begin{figure}
%\centering \subfigure[$M = 128$]{
%  \includegraphics[width=8cm]{M_128.pdf}
%  \label{fig:M-128}
%  }
%  \subfigure[$M = 256$]{
%  \includegraphics[width=8cm]{M_256.pdf}
%  \label{fig:M-256}
%  }
%  \caption{Comparison of sum spectral efficiency (bit/s/Hz) vs. SNR for JSDM with different $M$ for simplified user grouping and probabilistic user scheduling
%  with different fairness functions.}
%  \label{fig:perf-limit}
%\end{figure}

\begin{figure}
\centering \subfigure[PFS]{
  \includegraphics[width=8cm]{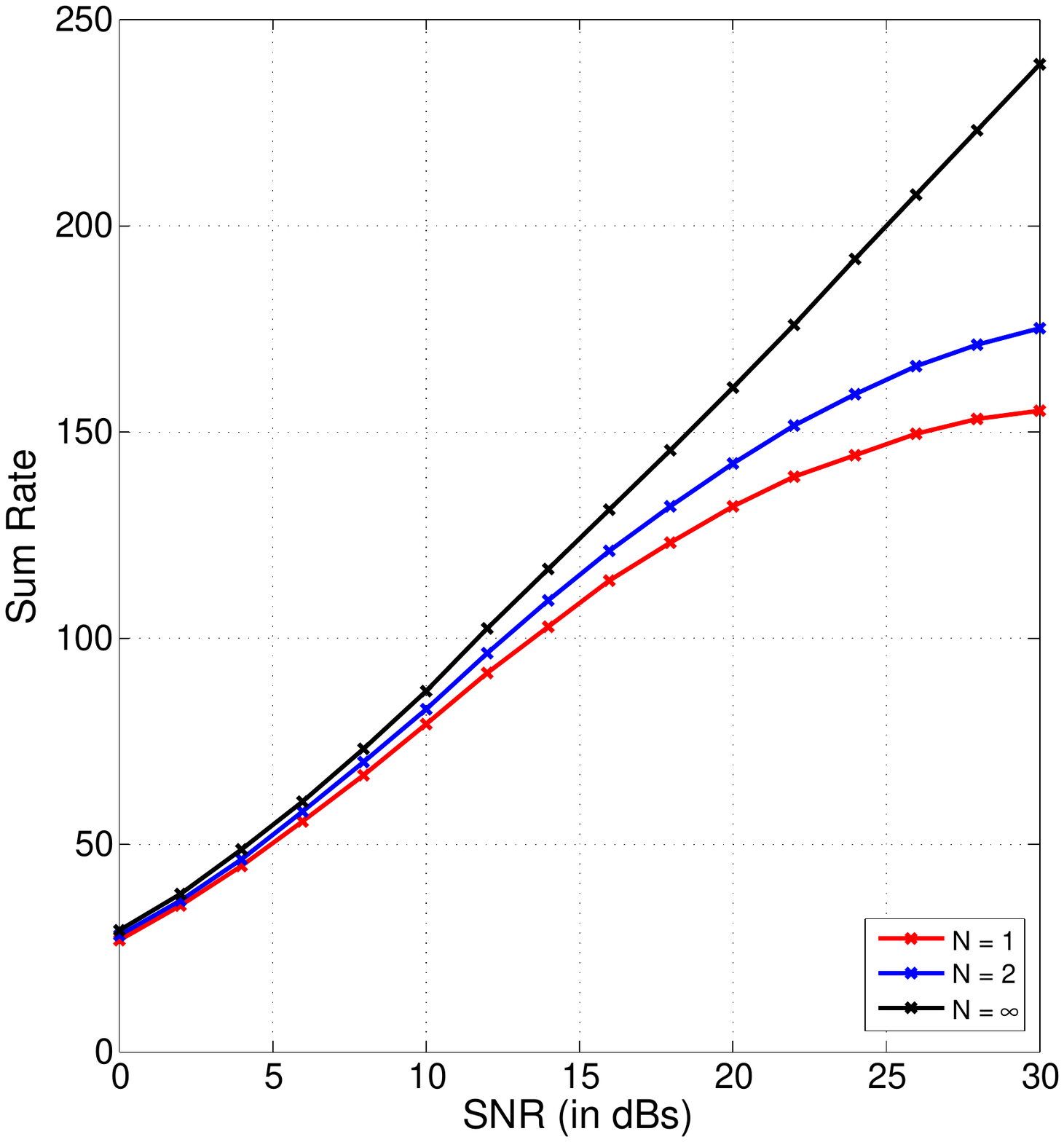}
  \label{fig:M-64-pfs}
  }
  \subfigure[Sum Rate Maximization]{
  \includegraphics[width=8cm]{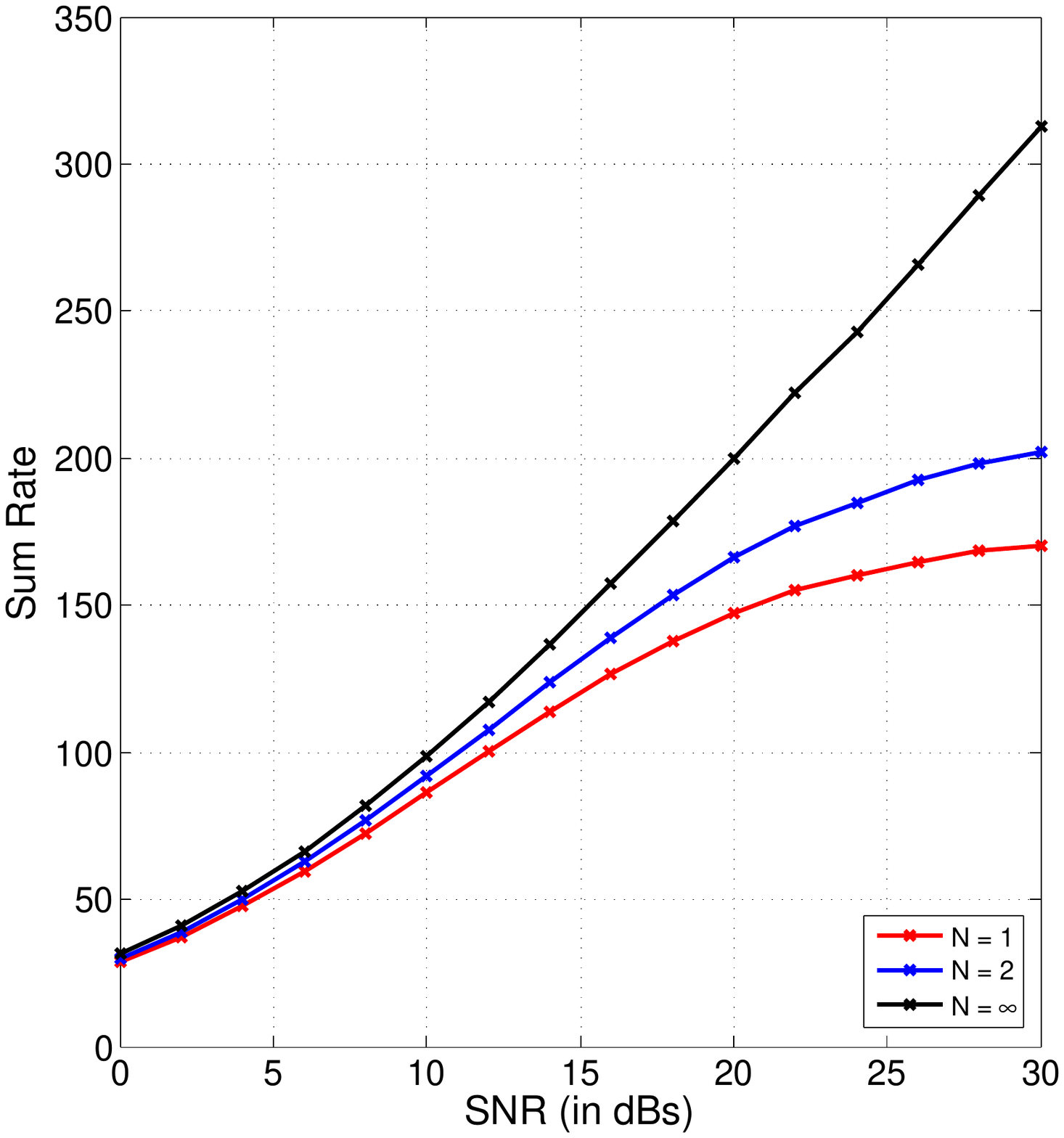}
  \label{fig:M-64-sum}
  }
  \caption{Comparison of sum spectral efficiency (bit/s/Hz) vs. SNR for JSDM with $M = 64$ and varying $N$ for simplified user grouping and probabilistic user scheduling
  with different fairness functions.}
\end{figure}

\begin{figure}
\centering \subfigure[PFS]{
  \includegraphics[width=8cm]{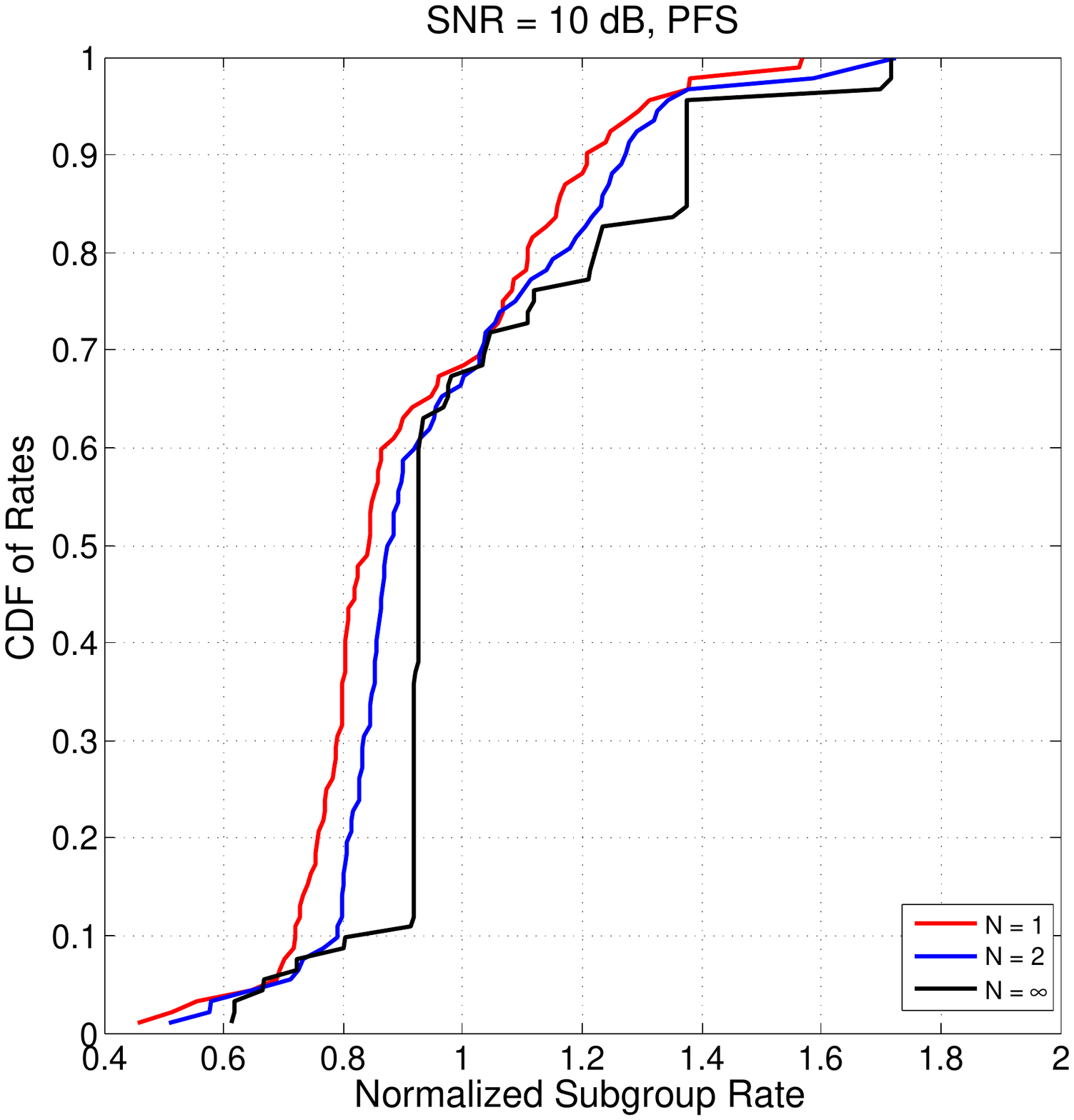}
  \label{fig:M-64-pfs-rate-dist}
  }
  \subfigure[Sum Rate Maximization]{
  \includegraphics[width=8cm]{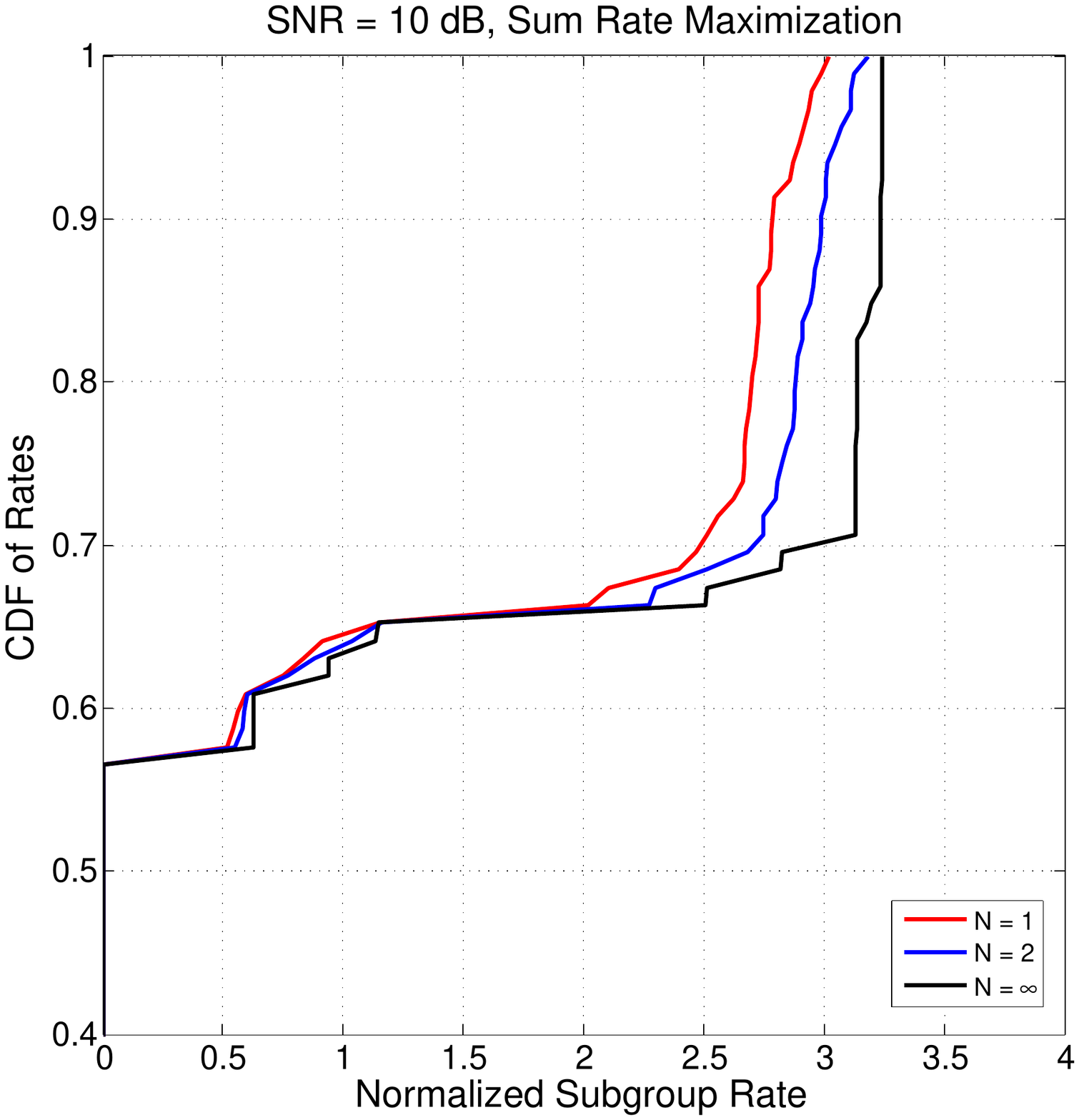}
  \label{fig:M-64-sum-rate-dist}
  }
  \caption{CDFs of the normalized subgroup rates for JSDM with $M = 64$, ${\rm SNR} = 10\ dB$ and varying $N$ for simplified user grouping and probabilistic user scheduling
  with different fairness functions.}
\end{figure}

Figures \ref{fig:M-64-pfs} and \ref{fig:M-64-sum} show the sum rate obtained for PFS and Sum rate Maximization,
when simplified user grouping algorithm is applied and the optimal user fractions are obtained using the greedy heuristic based algorithm of Section \ref{sec:prob-user-selection}. The ``sum rate'' refers to the normalized sum rate averaged over the patterns. We fix $M = 64$ and compare the finite dimensional simulations (obtained for $N = 1$ and $N = 2$ and denoted by the ``red'' and ``blue'' curves) with the large system approximations (shown by the ``black'' curve). The finite dimensional simulations differ from those obtained using the large system results because of the intergroup interference, which does not vanish for finite $N$. With increasing $N$, the finite dimensional results will ultimately coincide with the large system limit.
Figures \ref{fig:M-64-pfs-rate-dist} and \ref{fig:M-64-sum-rate-dist} show the cumulative distribution of the normalized subgroup rates for $M = 64$ for a fixed SNR = 10 dB, with varying $N$. It is apparent that as $N$ increases, the distribution of the normalized rates for the subgroups approaches to that obtained from the large system analysis.
We observe that, as expected, the group rates in the case of PFS are all positive, indicating that groups are served with some fairness.
Instead, the group rate CDF for the case of sum rate maximization shows a ``jump'' at zero, indicating the fraction of groups that are given zero rate.
In this case, the users in these groups are not served at all, and the system is unfair in favor of a higher total throughput.

Also, as already noticed before, we wish to stress the fact that the proposed probabilistic user selection scheme
involves a reduced channel state information feedback with respect to the standard greedy user selection that needs
all users to feed back their effective channels. For example, user selection based methods proposed in Section \ref{sec:sim-finite} require
feedback of the order of the total number of users in the system, whereas the proposed scheme requires feedback only from a subset
of users (the size of this subset is always less than the number of spatial dimensions available for multiplexing), that are pre-selected
based on the user fractions computed using approximations in the large system limit.

%%%%%%%%%%%%%%%%%%%%%%%%%%%%%%%%%%%%%%%%%%%%%%%%%%%%%%%%%%%%%
\section{Conclusion} \label{sec:conclusions}

JSDM is a multiuser MIMO downlink scheme that aims to serve users by clustering them into groups such that users within a group have approximately similar channel covariances, while users across groups have near orthogonal covariances.
JSDM was proposed in \cite{ciss2012} and analyzed in the large system limit in \cite{adhikary2012joint} under the assumption that the user channel covariance matrices
are grouped into sets with exact the same eigenspace. In this paper, we have significantly extended these results in two ways.
For the case of a finite number of BS antennas and large number of users, we obtained the scaling laws of the system
sum capacity and showed that the sum capacity scales as $\log \log K$, where $K$ is the number of users in the system, with a coefficient that
depends on the sum of the ranks of the user group covariance matrices. Then, we investigated the general problem of clustering the users
into groups (user grouping) when, realistically, each user has its own individual channel covariance matrix (i.e., no a priori groups with same covariance matrix are assumed).
We  proposed a simplified algorithm requiring only the knowledge of the users AoA and AS  (i.e., the angular support of the scattering from which the BS transmit power is
received at the user antenna). The proposed simplified grouping corresponds to the quantization or the AoA-AS plane and
works well when the number of BS antennas is large. Finally, we considered the performance analysis in the large system limit (both large number of users and
large number of BS antennas), obtained appealing closed-form fixed-point equations that enable to calculate the SINR for each user, and based on these expression, we
proposed a method to optimize the number of downlink streams to be served by JSDM for each (discretized) point in the AoA-AS plane.
This can be optimized depending on a desired network utility function of the user rates, which can be chosen to implement a desired notion of fairness.
Based on this optimization, we also proposed a probabilistic user selection that implicitly allocates the number of streams to
the users according to the optimal downlink stream distribution.  Finite dimensional simulations show the effectiveness of the proposed method.

\appendix \label{sec:appendix}

\subsection{Computation of $\PP \left ( {\rm SINR}_{k,m} > x \right )$}
\label{subsec:cdf}

Denoting by $u(Z)$ the unit step function of $Z$, we have
\begin{eqnarray}
\PP \left( {\rm SINR}_{k,m} > x \right) &=& \PP ( Z > 0 ) \nonumber\\
&=& \int_{-\infty}^{\infty} u(Z) \mathbf{f} (\wv_{g_k}) \dv
\wv_{g_k}
\nonumber\\
&=& \frac{1}{\pi^{r_g}} \int_{-\infty}^{\infty} u(Z)
e^{-||\wv_{g_k}||^2} \dv \wv_{g_k}
\nonumber\\
&=& \frac{1}{\pi^{r_g}} \frac{1}{2\pi j} \int_{-\infty}^{\infty}
\int_{-\infty}^{\infty} \frac{e^{(j\omega + c)Z}}{j\omega +
c} e^{-||\wv_{g_k}||^2} \dv \wv_{g_k} d \omega
\nonumber\\
&=& \frac{1}{\pi^{r_g}} \frac{1}{2\pi j} \int_{-\infty}^{\infty}
\frac{e^{-(j\omega + c)x/\rho}}{j\omega + c} d\omega
\int_{-\infty}^{\infty} e^{-\wv_{g_k}^\herm [\Id_{r_g} - (j\omega +
c)(\Am_{m_1} - x\Am_{m_2})] \wv_{g_k}} \dv \wv_{g_k}
\nonumber\\
&=& \frac{1}{2\pi j} \int_{-\infty}^{\infty} \frac{e^{-(j\omega +
c)x/\rho}}{j\omega + c} \frac{1}{\det{(\Id_{r_g} - (j\omega
+ c)(\Am_{m_1} - x\Am_{m_2}))}} d\omega
\nonumber\\
&=& \frac{1}{2\pi j} \int_{-\infty}^{\infty} \frac{e^{-(j\omega +
c)x/\rho}}{j\omega + c} \frac{1}{(\prod_{i=1}^{r_g}1 -
(j\omega + c)\mu_{i,m}(x))} d\omega
\nonumber\\
&\stackrel{(a)}{=}& \sum_{k: \mu_{k,m} > 0} \left. \frac{e^{-(j\omega +
c)x/\rho}}{\mu_{k,m}(x) (j\omega + c)} \frac{1}{(\prod_{i=1,i \neq k}^{r_g}1 -
(j\omega + c)\mu_{i,m}(x))} \right|_{j\omega + c = \frac{1}{\mu_{k,m}(x)}}
\end{eqnarray}
where $(a)$ follows from invoking Cauchy's integral theorem.

\subsection{Extreme Value Theory} \label{subsec:extreme-val-theory}

For the sake of completeness, we recall here some known results on the asymptotic behavior of the maximum of $K'$ random variables
as $K' \rightarrow \infty$ (see \cite{sharif2005capacity},\cite{david1970order} and references therein).
For an arbitrary distribution, the density of the maximum does not necessarily have a limit as $K'$ goes
to infinity. (Gnedenko, 1947) lists all possible limiting distributions for the cumulative distribution of the maximum of $K'$ i.i.d. random variables.
\begin{thm}[Gnedenko,1947]
Let $x_1,\ldots,x_{K'}$ denote a sequence of i.i.d. random variables with
$$x_{\max} = \max(x_1,\ldots,x_{K'})$$
Suppose that for some sequences, $\{a_n > 0\}$, ${b_n}$ of real constants, $a_n(x_{\max} - b_n)$ converges in distribution to a random variable with distribution function $G(x)$. Then, $G(x)$ must be one of the following three types:
\begin{enumerate}
\item $G(x) = e^{e^{-x}}$
\item $G(x) = e^{-x^{-\alpha}} u(x), \alpha > 0$
\item $G(x) = \left\{ \begin{array}{cc} e^{-(-x)^{\alpha}} & \alpha > 0, x \leq 0\\
1 & x > 0\end{array} \right.$
\end{enumerate}
where $u(x)$ denotes the unit step function.
\end{thm}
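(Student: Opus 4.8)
The plan is to reduce the problem to the characterization of the \emph{max-stable} laws and then to solve the functional equation that such laws must satisfy. The cornerstone is the \emph{convergence of types} theorem (Khinchin): if $F^{K'}(x/a_{K'} + b_{K'}) \to G(x)$ weakly with $G$ non-degenerate, and also $F^{K'}(x/\alpha_{K'} + \beta_{K'}) \to H(x)$ with $H$ non-degenerate, then necessarily $a_{K'}/\alpha_{K'} \to A > 0$ and $(b_{K'}-\beta_{K'})/\alpha_{K'} \to B$, with $H(x)=G(Ax+B)$. In words, the limit is determined up to an affine change of the argument, and the normalizing constants are essentially unique. I would establish this lemma first, since everything downstream rests on it.

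Next I would use this to prove that $G$ is max-stable. For each integer $k$, the maximum of $kK'$ i.i.d.\ samples can be read in two ways: directly, where after normalization it converges to $G$; or as the maximum over $k$ independent blocks of size $K'$, where each block maximum converges to $G$ and hence the overall maximum converges (under the block normalization) to $G^k$. Matching these two descriptions through the convergence-of-types lemma produces constants $A_k>0,\ B_k$ with
\[ G^k(x) = G(A_k x + B_k), \qquad k = 1,2,\dots \]
A monotonicity and interpolation argument, again leaning on uniqueness of types, then upgrades this to a one-parameter relation valid for all real $s>0$, namely $G^s(x) = G(A(s)x + B(s))$. The semigroup identity $G^{st}=(G^s)^t$ forces $A(st)=A(s)A(t)$ and $B(st)=A(t)B(s)+B(t)$, so that the multiplicative, monotone function $A$ must take the form $A(s)=s^{-\theta}$ for some real $\theta$.

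Writing $\psi(x) = -\log G(x)$ converts max-stability into $s\,\psi\bigl(s^{-\theta}x + B(s)\bigr) = \psi(x)$, and the argument splits on whether $\theta=0$. If $\theta=0$ (so $A(s)\equiv 1$), then $B$ is additive in $\log s$ and one obtains, after normalizing location and scale, the double-exponential law $G(x)=e^{-e^{-x}}$, i.e.\ type 1. If $\theta\neq 0$, a location shift places the common fixed point of the affine maps at the origin, giving $B(s)\equiv 0$; solving $s\,\psi(s^{-\theta}x)=\psi(x)$ then yields a pure power law for $\psi$, which reads as type 2 on the half-line (when $G$ has right endpoint $+\infty$) and as type 3 (when $G$ has a finite right endpoint), according to the sign of $\theta$. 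This exhausts the possibilities and proves the classification.

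The hard part is twofold and both difficulties are classical. First, the convergence-of-types lemma must be proved carefully, as it is what guarantees the normalizing sequences behave well and that max-stability even makes sense. Second, the rigorous solution of the functional equation requires ruling out pathological (non-measurable) solutions of the underlying Cauchy/Hamel equations — here the monotonicity of the CDF $G$ is exactly what restricts us to the regular solutions — and then correctly reading off the endpoint behavior of $G$ to separate types 2 and 3. By comparison, the reduction from the stated hypothesis to max-stability is routine once the lemma is in hand.
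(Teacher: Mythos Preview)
The paper does not actually prove this statement. It appears in an appendix titled ``Extreme Value Theory'' whose opening line is ``For the sake of completeness, we recall here some known results \ldots'', and the theorem is simply quoted with attribution to Gnedenko (1947) and references to \cite{sharif2005capacity}, \cite{david1970order}; no argument is given. The paper only \emph{uses} the classification (together with Uzgoren's expansion) to identify the growth of $\max_k \mathrm{SINR}_{g_k,m}$ in the achievability proof.

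Your proposal, by contrast, is a faithful outline of the classical Gnedenko proof: Khinchin's convergence-of-types lemma, the two-readings-of-the-block-maximum argument to obtain max-stability $G^k(x)=G(A_kx+B_k)$, extension to real $s>0$, the semigroup relations forcing $A(s)=s^{-\theta}$, and the case split $\theta=0$ versus $\theta\neq 0$ yielding Gumbel, Fr\'echet, and Weibull respectively. The points you flag as delicate (careful proof of convergence of types; ruling out pathological Cauchy/Hamel solutions via monotonicity of $G$) are exactly the places where the rigorous argument requires work. So your sketch is correct and in fact goes well beyond what the paper itself supplies. One small remark: the paper's Type~1 is stated as $e^{e^{-x}}$, an evident typo for $e^{-e^{-x}}$; your write-up has the correct form.
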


The class of distribution functions in our work are of Type 1. Denoting by $f_X(x)$ and $F_X(x)$ the probability density and probability
distribution functions of a random variable $x$ for $x > 0$, define the growth function as $g(x) = \frac{1 - F_X(x)}{f_X(x)}$.
Also, let $u_{K'}$ be the unique solution to
$$1 - F_X(u_{K'}) = \frac{1}{K'}$$. We have the following result:
\begin{thm}[Uzgoren, 1956]
Let $x_1,\ldots,x_{K'}$ denote a sequence of i.i.d. positive random variables with $f_X(x) > 0$ and growth function $g(x)$.
$$\log\{-\log\{F^{K'}(u_{K'} + u g(u_{K'}))\}\} = -u - \frac{u^2 g'(u_{K'})}{2 !} - \ldots - \frac{u^m g^{(m-1)}(u_{K'})}{(m-1) !} + O\left(\frac{e^{-u + u^2 g'(u_{K'})}}{K'}\right)$$
\end{thm}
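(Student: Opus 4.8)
The plan is to collapse the doubly-nested logarithm into a single Taylor expansion of $\log(1-F_X)$ about the normalizing point $u_{K'}$, using the defining relation of the growth function. Introduce the tail $v(x) = 1 - F_X(x)$, so that $v(u_{K'}) = 1/K'$ by the definition of $u_{K'}$, and note the key identity
\begin{equation} \label{eqn:logv-deriv}
\frac{d}{dx} \log v(x) = \frac{-f_X(x)}{1 - F_X(x)} = -\frac{1}{g(x)} .
\end{equation}
Writing $t = u_{K'} + u\, g(u_{K'})$ and using $-\log(1-v) = v + \tfrac{v^2}{2} + \cdots$, I would first strip off the outer logarithm:
\begin{equation} \label{eqn:outer-peel}
\log\bigl\{ -\log F^{K'}(t) \bigr\} = \log\bigl\{ -K'\log(1-v(t)) \bigr\} = \log K' + \log v(t) + \log\Bigl( 1 + \tfrac{v(t)}{2} + \cdots \Bigr) .
\end{equation}

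Next I would Taylor-expand $\log v$ about $u_{K'}$ with increment $h = u\, g(u_{K'})$. The zeroth-order term is $\log v(u_{K'}) = -\log K'$, which exactly cancels the $\log K'$ in \eqref{eqn:outer-peel}. By \eqref{eqn:logv-deriv}, the first derivative contributes $(\log v)'(u_{K'})\, h = -\frac{1}{g(u_{K'})}\cdot u\, g(u_{K'}) = -u$, the leading term of the claimed series. Differentiating \eqref{eqn:logv-deriv} repeatedly expresses every higher coefficient $(\log v)^{(m)}(u_{K'})$ as a universal polynomial in $g(u_{K'})$ and its derivatives; multiplying by $h^m = u^m g(u_{K'})^m$ and collecting powers of $u$ produces the stated expansion, whose next coefficient is the term of order $u^2 g'(u_{K'})$, with the remaining coefficients generated by the same repeated differentiation.

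Finally, the residual outer term in \eqref{eqn:outer-peel} is $\log(1 + v(t)/2 + \cdots) = O(v(t))$, and since $v(t) = \exp(\log v(t)) = \tfrac{1}{K'}\exp(-u + O(u^2 g'(u_{K'})))$ this is exactly the advertised error $O\bigl(e^{-u + u^2 g'(u_{K'})}/K'\bigr)$. \textbf{The main obstacle} is analytic rigor rather than the formal bookkeeping: one must justify the Taylor series for $\log v$ and control its remainder uniformly over the relevant range of $u$, which requires a regularity (von Mises--type) condition on $F_X$ guaranteeing that the derivatives $g^{(j)}(u_{K'})$ remain well behaved as $K' \to \infty$. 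The reorganization of the iterated derivatives of $-1/g$ into the coefficients of the $u$-series is a Fa\`a di Bruno chain-rule computation; the only genuine analytic content beyond it is checking that the neglected $O(v(t)^2)$ contribution to the outer logarithm is subdominant, i.e.\ of the stated order $e^{-u}/K'$, on the domain where the expansion is applied.
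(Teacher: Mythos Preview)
The paper does not prove this statement; it is quoted in the appendix as a classical result of Uzgoren (1956), invoked only to justify the $\log\log K'$ scaling of the maximum SINR. There is therefore no ``paper's proof'' to compare your proposal against.

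Your derivation is the standard one and is sound in outline: peel the outer logarithm via $-\log(1-v)=v\bigl(1+\tfrac{v}{2}+\cdots\bigr)$, Taylor-expand $\log v$ about $u_{K'}$ using the identity $(\log v)'=-1/g$, and observe that the residual $\log(1+v(t)/2+\cdots)=O(v(t))$ furnishes the stated $O(e^{-u}/K')$ error. The cancellation of $\log K'$ and the emergence of the leading $-u$ are exactly as you describe.

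One point deserves attention. Iterating $(\log v)'=-1/g$ gives, for instance, $(\log v)''=g'/g^{2}$ and $(\log v)'''=g''/g^{2}-2(g')^{2}/g^{3}$. After multiplying by $h^{m}=u^{m}g(u_{K'})^{m}/m!$ the general coefficient is a polynomial in $g',g'',\ldots$ rather than the single term $g^{(m-1)}$ printed in the statement; moreover the second-order term comes out as $+u^{2}g'(u_{K'})/2$, not $-u^{2}g'(u_{K'})/2$. The displayed form is the simplification one obtains only after imposing the von Mises-type hypotheses you mention (so that cross terms like $(g')^{2}$ are of higher order and can be absorbed), and the sign discrepancy is a transcription artifact in how the paper quotes Uzgoren. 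Since you set out to ``produce the stated expansion,'' be explicit that the stated coefficients already presuppose those regularity conditions; without them your Fa\`a di Bruno computation will (correctly) generate additional terms that the quoted formula suppresses.
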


\subsection{Proof of (\ref{eqn:limit-growth})} \label{subsec:compute-limit-growth}

We have
\begin{eqnarray}
g(x) &=& \frac{1 - F_X(x)}{f_X(x)} \nonumber\\
&=& \frac{1 - F_X(x)}{F'_X(x)} \nonumber\\
&=& \frac{1}{\frac{1}{\rho \mu_{1,m}(x)} - \frac{x \mu'_{1,m}(x)}{\rho (\mu_{1,m}(x)})^2 + \sum_{i=2}^{r_g} \frac{\mu_{i,m}(x) \mu'_{1,m}(x) - \mu_{1,m}(x) \mu'_{i,m}(x)}{(\mu_{1,m}(x) - \mu_{i,m}(x)) \mu_{1,m}(x)}}
\end{eqnarray}
As $x \rightarrow \infty$, $\mu_{1,m}(x) \rightarrow \mu_{1,m}^*$, $x \mu'_{1,m}(x) \rightarrow 0$, $\frac{\mu_{i,m}(x) \mu'_{1,m}(x)}{\mu_{1,m}(x) - \mu_{i,m}(x)} \rightarrow 0$ and $\frac{\mu'_{i,m}(x) \mu_{1,m}(x)}{\mu_{1,m}(x) - \mu_{i,m}(x)} \rightarrow 0$, giving
$$\lim_{x \rightarrow \infty} g(x) = \frac{1}{\frac{1}{\rho \mu_{1,m}^*}} = \rho \mu_{1,m}^*$$

\bibliographystyle{IEEEtran}
\bibliography{references}

\end{document}